\journal{Journal of Multivariate Analysis}
\renewcommand{\vec}[1]{\ensuremath \mathbf{\boldsymbol{#1}}}
\newcommand{\N}{\ensuremath{\mathbb{N}}}
\renewcommand{\S}{\ensuremath{\mathbb{S}}}
\newcommand{\Z}{\ensuremath{\mathbb{Z}}}
\newcommand{\R}{\ensuremath{\mathbb{R}}}
\newcommand{\E}{\ensuremath{\mathbb{E}}}
\newcommand{\SO}{\ensuremath{\mathrm{SO}(3)}}
\newcommand{\abs}[1]{\ensuremath{\left\vert#1\right\vert}}
\newcommand{\dx}{\mathrm{d}}
\DeclareMathOperator*{\argmax}{argmax}
\DeclareMathOperator*{\tr}{tr}
\renewcommand{\d}{\, \mathrm{d}}
\newcommand{\norm}[1]{\left\lVert \smash{#1} \right\rVert}
\newcommand{\Norm}[1]{\left\lVert #1 \right\rVert}
\newtheorem{theorem}{Theorem}[section]
\newtheorem{lemma}[theorem]{Lemma}
\newtheorem{remark}[theorem]{Remark}
\newtheorem{definition}[theorem]{Definition}
\newtheorem{corollary}[theorem]{Corollary}
\newcommand{\dotprod}[2]{ \left< #1,#2 \right>}
\newcommand{\changed}[1]{{#1}}
\begin{document}
\begin{frontmatter}

\title{\changed{Locally} Isometric Embeddings of Quotients of the Rotation Group Modulo Finite Symmetries}

\author[A1]{Ralf Hielscher\corref{mycorrespondingauthor}}
\author[A1]{Laura Lippert}

\address[A1]{Fakultät für Mathematik, Technische Universität Chemnitz}

\cortext[mycorrespondingauthor]{Corresponding author.\\ Email addresses: \url{ralf.hielscher@mathematik.tu-chemnitz.de}, \url{laura.lippert@mathematik.tu-chemnitz.de}}

\begin{abstract}
  The analysis of manifold\changed{-}valued data using embedding based methods is linked
  to the problem of finding suitable embeddings. In this paper we are
  interested in embeddings of quotient manifolds $\SO/\mathcal S$ of the
  rotation group modulo finite symmetry groups. Data on such quotient
  manifolds naturally occur in crystallography, material science and
  biochemistry. We provide a generic framework for the construction of such
  embeddings which generalizes the embeddings constructed in \cite{ArJuSc18}. The
  central advantage of our larger class of embeddings is that it \changed{includes
  locally} isometric embeddings for all crystallographic symmetry groups.
\end{abstract}

\begin{keyword} %alphabetical order
 Euclidean Embedding\sep \changed{Locally} Isometric Embedding\sep Rotation Group
\end{keyword}

\end{frontmatter}

%\maketitle

\section{Introduction}

In the analysis of manifold\changed{-}valued data there are two different approaches -
intrinsic and extrinsic. Intrinsic methods solely rely on intrinsic
properties of the manifold, e.g. the Riemanian curvature tensor, the
exponential map or the Levi-Cevita connection. Those methods often work
locally like moving least squares \cite{GrSpYu17}, multiscale methods
\cite{RaDrStDoSc05} or subdivision schemes \cite{XieYu}. Other intrinsic
approaches make use of function systems that are adapted to the geometry of
the manifold, e.g. diffusion maps \cite{CoLa06} or the eigenfunctions of the
manifold Laplacian \cite{jupp2005,he90, kim98, Pe05, Hi13a}.

On the other hand, extrinsic methods rely on an embedding of the manifold into
some higher dimensional vector space \cite{ArJuSc18,RoTa14,CrSt13}. The
advantage of \changed{embedding-based methods, compared to intrinsic methods}, is that they often are straight forward
generalizations of the corresponding linear methods. The central challenges
for applying an embedding\changed{-}based method to a specific manifold $\mathcal M$ are
\begin{enumerate}
\item Find a suitable embedding
  $\mathcal E \colon \mathcal M \to \mathbb R^{d}$ of the manifold
  $\mathcal M$ that approximately preserves distances and has moderate dimension.
\item Find an efficient algorithm for the projection
  $P_{\mathcal M} \colon U \to \mathcal M$ from
  some neighborhood $U \supset \mathcal E(\mathcal M)$ back to the manifold.
\end{enumerate}

In our paper we are concerned with the specific case when the manifold
$\mathcal M$ is the quotient
$\SO/\mathcal S = \{[\vec R]_{\mathcal S} \colon \vec R \in \SO\} $ of the
rotational group $\SO$ with respect to some finite symmetry group
\changed{$\mathcal S < \SO$}. \changed{Here the cosets in the quotient space
  are defined by
  $[\vec R]_\mathcal S:=\{\vec R\vec O \mid \vec O\in \mathcal S \}$. As a
  finite subgroup of $\SO$ the symmetry group $\mathcal S$ is isomorphic to
  one of the following: the cyclic groups $C_k$ for
  $k\in \left\{1,2,\ldots\right\}$, the dihedral groups $D_k$ for
  $k\in \left\{2,3,\ldots\right\}$, the tetrahedral group $T$, the octahedral
  group $O$ and the icosahedral group $Y$. Since the group $\SO$ is simple,
  the quotient $\SO/\mathcal S$ is not a group for all $\mathcal S \ne C_{1}$
  but forms a homogeneous space with canonical left action of the Lie group
  $\SO$.}

\changed{To give the reader an idea about the quotient $\SO/\mathcal S$ we
  consider the representation of a rotation
  $\vec R = \vec R_{z}(\alpha) \vec R_{y}(\beta) \vec R_{z}(\gamma)$ as the
  composition of rotations about the axes $z$, $y$, $z$ and Euler angles
  $\alpha, \gamma \in [0,2\pi]$, $\beta \in [0,\pi]$. Let us furthermore
  assume that the subgroup $C_{k}$ is represented by the rotations
  $\vec R_{z}(\ell\,\frac{2\pi}{k})$, $\ell \in \Z$ about the z-axis.  Then
  $C_{k}$ enforces a periodicity of $2\pi/k$ on the last Euler angle $\gamma$
  and the cosets in $\SO/C_{k}$ are of the form
  \begin{equation*}
    [\vec R_{z}(\alpha) \vec R_{y}(\beta) \vec R_{z}(\gamma)]_{C_{k}}
    = \left\{\vec R_{z}(\alpha) \vec R_{y}(\beta) \vec R_{z}(\gamma + \tfrac{2\ell \pi}{k})
  \mid \ell = 0 \ldots k-1 \right\}.
  \end{equation*}
  Nice geometrical visualizations of these coset spaces can be found in
  \cite{krakow17}.}

\changed{The analysis of data that are cosets
  $[\vec R]_{\mathcal S} \in \SO / \mathcal S$ in the homogeneous space
  $\SO / \mathcal S$ is of central importance in various scientific areas.}
For instance, they are used to describe the alignment of crystals in
crystallography, material science and geology \cite{Bunge82, ASWK93,
  EnGoPoJu94}, the alignment of molecules and proteins in biochemistry
\cite{BaBaBeVo13} or movements in robotics \cite{ZeKuCr98} and motion tracking
\cite{RoBrBr12}.

Since, locally, the quotient manifolds $\SO/\mathcal S$ are isometric to the
rotation group $\SO$ itself all intrinsic methods for the rotation group can
be easily adapted to work on the quotients as well. Unfortunately, this is not
true for embedding based\changed{-}methods, e.g. for the interpolation methods
described in \cite{Ga18}. Explicit embeddings for the quotient manifolds
$\SO/\mathcal S$ have been investigated first by R. Arnold, P. Jupp and
H. Schaeben in \cite{ArJuSc18}. Our paper aims to extend their results by
developing a general framework for the construction of embeddings of the
quotient manifolds $\SO / \mathcal S$ that include the embeddings described in
\cite{ArJuSc18}. Our embeddings pose several nice properties, e.g. they are
all $\SO$\changed{-equivariant}\footnote{c.f. Definition
  \ref{def:equivariant}}, their \changed{images are contained in a sphere and
  the image measure $\mu \circ \mathcal E^{-1}$ induced by the rotational
  invariant measure $\mu$ on $\SO$ is centered in $\R^{d}$, i.e., has zero
  mean.} Furthermore, we find within our framework \changed{locally} isometric
embeddings of $\SO/\mathcal S$ for all finite symmetry groups $\mathcal S$ and
provide an efficient numerical method for the projection $P_{\mathcal
  M}$. \changed{The practical advantage of isometric embeddings is that
  locally isotropic methods in $\R^{d}$ translates into locally isotropic
  methods on $\SO/\mathcal S$.} % For instance, to compute the mean of some
% rotations in $\SO/\mathcal S$ by the embedding-projecting-method, it is
% reasonable that the geodesic distance in $\SO/\mathcal S$ is approximated by
% the distance in the Euclidean space $\R^d$.

Our paper is organized as follows. In Section~\ref{sec:general-framework} we
introduce the generic embeddings and prove in the Theorem~\ref{theorem:homo}
and Corollary~\ref{cor:sphere} that they are $\SO$\changed{-equivariant} maps
that map the quotient manifold into a subsphere of an Euclidean vector
space. Furthermore, we provide in Table \ref{tab:ualpha} the parameters such
that our embeddings coincide with the embeddings found in
\cite{ArJuSc18}. \changed{In Section~\ref{sec:rotat-invar-subsp} we
  investigate rotational invariant subspaces of $\R^{d}$ and show in
  Theorem~\ref{theorem:subspace} that the embeddings can be centered such that
  their image is contained in a linear subspace of $\R^{d}$ which allows us to
  reduce the effective dimension of the embedding. In
  Section~\ref{sec:centered-measure} we consider the rotational invariant Haar
  measure $\mu$ on $\SO$ and generalize it to a left invariant measure
  $\mu_{\mathcal S}$ on $\SO/\mathcal S$. Together with an embedding
  $\mathcal E \colon \SO/\mathcal S \to \R^{d}$ this induces an image measure
  on $\R^{d}$. In Theorem~\ref{theorem:centered} we show that the centered
  embeddings from Section~\ref{sec:centered-measure} result in centered image
  measures.} \changed{Finally}, we propose in
Section~\ref{sec:proj-onto-embedd} an iterative algorithm for the numerical
computation of the projection $P_{\mathcal M}$ of an arbitrary point in some
neighborhood of the manifold back to the manifold. To this end, we derive in
Theorem~\ref{theorem:grad} the gradient of the distance functional.

%\ref{lem:dotpro} and\ref{lem:skaMr} some of its basic properties.

In Section \ref{sec:metric-properties} we are interested in the discrepancy
between the geodesic distance on the quotient manifold and the Euclidean
distance in the embedding. A smooth embedding into $\R^{d}$, such that the
pull back of the Euclidean metric tensor coincides with the metric tensor of
the manifold, is called isometric. According to the Nash embedding Theorem \changed{\cite{Nash54}},
there exists for every $m$-dimensional Riemannian manifold an isometric
embedding into $\R^{m(3m+11)/2}$. As all our quotient manifolds are three\changed{-}dimensional
the result guaranties the existence of an isometric embedding into
the space $\R^{30}$. It turns out that our embeddings are sufficiently
general to \changed{include locally} isometric embeddings for the quotient manifolds
$\SO/\mathcal S$ modulo all crystallographic symmetry groups $\mathcal
S$. This result is proven separately for the different types of symmetry
groups in Theorems \changed{\ref{theorem:isoC2}}, \ref{theorem:isoCk}, \ref{theorem:isoD2},
\ref{theorem:isoDk}, \ref{theorem:isoO}, \ref{theorem:isoT}. The corresponding
parameters as well as the dimension of the linear space are
summarized in Table \ref{tab:embnew}. The dimensions of the \changed{locally} isometric
embeddings vary from $8$ to $32$ depending on the symmetry group.

In the last Section \ref{sec:glob-almost-isom} we investigate the global
relationship between the geodesic distance on $\SO/\mathcal S$ and the
Euclidean distance in the embedding. According to \cite{Ve13} it is possible
to construct for each smooth and compact manifold $\mathcal M$ an embedding
$\mathcal E \colon \mathcal M \to \mathbb R^{d}$ such that the geodesic
distance on the manifold and the Euclidean distance in the embedding differ
only by a given $\varepsilon >0$, i.e.,
\begin{equation}
  \label{eq:2}
  (1-\varepsilon) \,d_{\mathcal M}(m_{1},m_{2}) \le
  d(\mathcal E(m_{1}),\mathcal E(m_{2}))
  \le (1+\varepsilon)\, d_{\mathcal M}(m_{1},m_{2}).
\end{equation}
However, the dimension $d$ of the vector space required for such an embedding
is much to large for numerical applications. In Table~\ref{tab:const} we
provide similar bounds \changed{to those} in equation~\eqref{eq:2} for the \changed{locally} isometric embeddings
defined in this paper. It turns out that locally isometric embeddings do not
necessarily lead to globally optimal bounds. Parameters for our embeddings
optimized with respect to global preservation of distances are provided in
Table~\ref{tab:cmincmax}.

\section{Embeddings of the Rotation Group}

\subsection{General Framework}
\label{sec:general-framework}
The group of rotations $\SO$ interpreted as a matrix group has a canonical
embedding $ \mathcal E \colon \SO \to \R^{9}$ given by
\begin{equation}
  \label{eq:1}
  \mathcal E (\vec R) = (\vec R \vec e_{1}, \vec R \vec e_{2}, \vec R \vec e_{3})
\end{equation}
where $\vec e_{1}$, $\vec e_{2}$, $\vec e_{3}$ is the standard basis in
$\R^{3}$. Replacing the basis vectors $\vec e_{1}$, $\vec e_{2}$, $\vec e_{3}$
by any other list of vectors $\vec u_{1}, \vec u_{2}, \ldots , \vec u_{n}$
will always result in an embedding as long as at least two of the vectors are
linearly independent. Unfortunately, this approach is not applicable to
quotients $\SO/\mathcal S$ since well definedness requires that
$\mathcal E([\vec R \vec S]_{\mathcal S}) = \mathcal E([\vec R]_{\mathcal S})$
for all symmetry operations $\vec S \in \mathcal S$. For that reason, we
generalize the embedding \eqref{eq:1} to tensor products of vectors
$\vec u_{1}, \vec u_{2}, \ldots , \vec u_{n}$. In the next definition we will
make use of the following notation. Let
$\vec \alpha = (\alpha_{1},\ldots,\alpha_{n}) \in \N^{n}$ be a
multi\changed{-}index. Then $\R^{3^\vec\alpha}$ is defined as the linear space
\begin{equation}\label{eq:vec-tensor}
  \R^{3^\vec\alpha}
  =\times_{i=1}^n \left(\otimes^{\alpha_i}\R^{3}\right)\cong \R^{\left(\sum_{i=1}^n 3^{\alpha_i}\right)}.
\end{equation}

\begin{definition}
  \label{def:E}
  Let $n \in \N$, $\vec \alpha = (\alpha_{1},\ldots,\alpha_{n}) \in \N^{n}$ \changed{be} a
  multi\changed{-}index and ${\vec u = (\vec u_{1},\ldots,\vec u_{n}) \in \R^{3n}}$ be a list
  of $n$ directions $\vec u_{j} \in \R^{3}$.  Then we define the mapping
  $\mathcal E_{\vec u}^{\vec \alpha}\colon \SO \to \R^{3^{\vec \alpha}}$ as
  \begin{align*}
    \mathcal E_{\vec u}^{\vec \alpha}(\vec R)
    = \left(\otimes^{\alpha_{1}} \vec R \vec u_{1}, \ldots,
    \otimes^{\alpha_{n}} \vec R \vec u_{n} \right).
  \end{align*}
\end{definition}
% \changed{To describe the elements in the image of the map
%   $\mathcal E_{\vec u}^{\vec \alpha}$ we can use a tensor or a vector
%   description which are easy to convert into each other, because of
%   equation~\eqref{eq:vec-tensor} with the help of the canonical isomorphism
%   $\text{vec}:\R^{3^{\vec\alpha}}\rightarrow \R^{\left(\sum_{i=1}^n
%       3^{\alpha_i}\right)}$, which writes all components of a tensor among
%   each other to get a vector. } \textcolor{red}{brauchen wir das wirklich?}

In order to define mappings that are invariant with respect to a finite
subgroup \changed{$\mathcal S < \SO$} we utilize the averaging idea.

\begin{definition}
  \label{def:ESym}
  Let \changed{$\mathcal S < \SO$} be a finite subgroup and
  $\mathcal E_{\vec u}^{\vec \alpha}\colon \SO \to \R^{3^{\vec \alpha}}$
  as defined in Definition \ref{def:E}. Then we denote by
  \begin{equation*}
    \mathcal E_{\vec u,\mathcal S}^{\vec \alpha} \colon \SO/\mathcal S \to  \R^{3^{\vec
        \alpha}}, \qquad
    \mathcal E_{\vec u,\mathcal S}^{\vec \alpha}([\vec O]_{\mathcal S})
    = \frac{1}{\abs{\mathcal S}}\sum_{\vec S \in \mathcal S} \mathcal E_{\vec u}^{\vec
      \alpha}(\vec O \vec S),
    \quad [\vec O]_{\mathcal S}\changed{=\left\{\vec O\vec R \mid \vec R\in \mathcal S\right\}} \in \SO/\mathcal S
  \end{equation*}
  its symmetrized version.
\end{definition}

In order to examine the properties of
$\mathcal E_{\vec u,\mathcal S}^{\vec \alpha}$ it we consider both, the
quotient $\SO/\mathcal S$ \changed{and} the vector space
$\R^{3^{\vec \alpha}}$ of dimension $\sum_{i=1}^n3^{\alpha_i}$ as $\SO$
manifolds equipped with the left group actions
\begin{align*}
  \vec R \triangleright [\vec O]_{\mathcal S}
  = [\vec R \vec O]_{\mathcal S}, \qquad
  \vec R \triangleright \vec v
  = (\otimes^{\vec\alpha} \vec R) \, \vec v,
\end{align*}
where $\vec R \in \SO$, $[\vec O]_{\mathcal S} \in \SO/\mathcal S$ and
$\vec v = (\vec v^{1}, \ldots, \vec v^{n}) \in \R^{3^{\vec
    \alpha}}$. \changed{The multiplication of tensor product
  $\otimes^{\vec \alpha} \vec R$
  %with a multi-index $\vec \alpha=(\alpha_{1},\ldots,\alpha_{n})$
  with the tensor $\vec v \in \R^{3^{\vec \alpha}}$ is defined component-wise
  by
  $(\otimes^{\vec\alpha} \vec R)\, \vec v=\bigl((\otimes^{\alpha_i} \vec R)\,
  \vec v_i\bigr)_{i=1}^n$ and}
  \begin{equation*}
    \changed{\Bigl[(\otimes^{\alpha_{i}} \vec R)\,\vec v^{i}\Bigr]_{k_{1},\ldots,k_{\alpha_{i}}}
    = \sum_{\ell_{1}=1}^{3} \cdots \sum_{\ell_{\alpha_{i}}=1}^{3}
     R_{k_{1}\ell_{1}} \cdots R_{k_{\alpha_{i}}\ell_{\alpha_{i}}} v^{i}_{\ell_{1},\ldots,\ell_{\alpha_{i}}}.}
  \end{equation*}
  \changed{Mappings that intertwines with such group actions are called equivariant.}

\begin{definition}
  \label{def:equivariant}
\changed{  Let $G$ be a group that acts on two sets $X, Y$ via $g \triangleright x$ and
  $g \triangleright y$, $g \in G$, $x \in X$, $y \in Y$. A mapping
  $f \colon X \to Y$ is said to be an $G$-equivariant map if it intertwines
  with the group action, i.e.,}
  \begin{equation*}
   \changed{ f(g \triangleright x) = g \triangleright f(x) \quad
    \text{for all}
    \quad     g \in G, x \in X.}
  \end{equation*}
\end{definition}
\changed{It turns out that the embeddings from Definition~\ref{def:E} and
\ref{def:ESym} are indeed $\SO$-equivariant maps between the quotients
$\SO/\mathcal S$ and Euclidean vector spaces $\R^{3^{\alpha}}$.}

\begin{theorem}
  \label{theorem:homo}
  The mapping
  $\mathcal E_{\vec u,\mathcal S}^{\vec \alpha} \colon \SO/\mathcal S \to
  \R^{3^{\vec \alpha}}$ is an $\SO$\changed{-equivariant} map, i.e.,
  \begin{equation*}
    \mathcal E_{\vec u,\mathcal S}^{\vec \alpha}(\vec R \triangleright [\vec O]_{\mathcal S})
    = \vec R \triangleright \mathcal E_{\vec u,\mathcal S}^{\vec \alpha}([\vec O]_{\mathcal S})
  \end{equation*}
  for all $\vec R \in \SO$ and $[\vec O]_{\mathcal S} \in \SO/\mathcal S$.
\end{theorem}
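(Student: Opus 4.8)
The plan is to reduce the equivariance of the symmetrized map $\mathcal E_{\vec u,\mathcal S}^{\vec\alpha}$ to that of the unsymmetrized map $\mathcal E_{\vec u}^{\vec\alpha}$ on all of $\SO$ and then to propagate it through the averaging sum. The single algebraic fact driving the whole argument is that taking the $\alpha$-th tensor power of a matrix is multiplicative, i.e. $\otimes^{\alpha}(\vec A\vec B)=(\otimes^{\alpha}\vec A)(\otimes^{\alpha}\vec B)$ for any two $3\times3$ matrices, so that $\vec A\mapsto\otimes^{\alpha}\vec A$ is a representation of the matrix monoid. Specialising the second factor to a column vector $\vec w\in\R^{3}$ yields $\otimes^{\alpha}(\vec A\vec w)=(\otimes^{\alpha}\vec A)(\otimes^{\alpha}\vec w)$, which is exactly the identity I need on each of the $n$ tensor blocks.

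First I would verify this multiplicativity directly from the component formula stated before Definition~\ref{def:equivariant}: writing $(\vec A\vec B)_{k\ell}=\sum_{m}A_{km}B_{m\ell}$ and substituting into the product $\prod_{j=1}^{\alpha}(\vec A\vec B)_{k_{j}\ell_{j}}$, one separates the resulting product of sums into a single sum of products indexed by the intermediate indices, which is precisely the $(k_{1},\ldots,k_{\alpha})$-entry of the composition $(\otimes^{\alpha}\vec A)(\otimes^{\alpha}\vec B)$. This is the one genuinely computational step, but it is routine index bookkeeping and is the only place where the explicit tensor structure is used. Granting it, the unsymmetrized map is equivariant on $\SO$: block by block, $\otimes^{\alpha_{i}}\bigl((\vec R\vec O)\vec u_{i}\bigr)=(\otimes^{\alpha_{i}}\vec R)\bigl(\otimes^{\alpha_{i}}(\vec O\vec u_{i})\bigr)$, and collecting the $n$ blocks gives $\mathcal E_{\vec u}^{\vec\alpha}(\vec R\vec O)=(\otimes^{\vec\alpha}\vec R)\,\mathcal E_{\vec u}^{\vec\alpha}(\vec O)=\vec R\triangleright\mathcal E_{\vec u}^{\vec\alpha}(\vec O)$.

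Finally I would lift this to the quotient. Using Definition~\ref{def:ESym} together with $\vec R\triangleright[\vec O]_{\mathcal S}=[\vec R\vec O]_{\mathcal S}$,
\begin{equation*}
  \mathcal E_{\vec u,\mathcal S}^{\vec\alpha}(\vec R\triangleright[\vec O]_{\mathcal S})
  =\frac{1}{\abs{\mathcal S}}\sum_{\vec S\in\mathcal S}\mathcal E_{\vec u}^{\vec\alpha}(\vec R\vec O\vec S)
  =\frac{1}{\abs{\mathcal S}}\sum_{\vec S\in\mathcal S}\vec R\triangleright\mathcal E_{\vec u}^{\vec\alpha}(\vec O\vec S),
\end{equation*}
and since $\vec R\triangleright(\cdot)=(\otimes^{\vec\alpha}\vec R)(\cdot)$ is a linear map it commutes with the finite sum, yielding $\vec R\triangleright\mathcal E_{\vec u,\mathcal S}^{\vec\alpha}([\vec O]_{\mathcal S})$. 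I expect the point requiring the most care to be not the equivariance itself but the fact that the statement presupposes $\mathcal E_{\vec u,\mathcal S}^{\vec\alpha}$ is well defined on cosets; this I would check by noting that replacing the representative $\vec O$ by $\vec O\vec T$ with $\vec T\in\mathcal S$ merely permutes the summation index through the bijection $\vec S\mapsto\vec T\vec S$ of $\mathcal S$, leaving the average unchanged. Note also that here the symmetries act on $\vec O$ from the right while $\vec R$ acts from the left, so the two operations decouple cleanly and no reindexing of $\mathcal S$ is needed in the equivariance computation itself.
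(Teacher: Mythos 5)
Your proof is correct and follows essentially the same route as the paper's: expand the averaged sum from Definition~\ref{def:ESym}, use the multiplicativity $\otimes^{\alpha}(\vec A\vec B)=(\otimes^{\alpha}\vec A)(\otimes^{\alpha}\vec B)$ to factor out $\otimes^{\vec\alpha}\vec R$, and commute it with the finite sum — steps the paper compresses into a single ``straightforward computation.'' Your additional check that the map is well defined on cosets (reindexing via $\vec S\mapsto\vec T\vec S$) is a worthwhile detail the paper leaves implicit in Definition~\ref{def:ESym}, but it does not change the approach.
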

\begin{proof}
  Let $\vec R \in \SO$ and $[\vec O]_{\mathcal S} \in \SO/\mathcal S$. Then straight forward
  computation reveals
  \begin{align*}
    \mathcal E_{\vec u,\mathcal S}^{\vec \alpha}(\vec R \triangleright [\vec O]_{\mathcal S})
    &=  \frac{1}{\abs{\mathcal S}}\sum_{\vec S \in \mathcal S} \mathcal E_{\vec u}^{\vec
      \alpha}(\vec R \vec O \vec S)\\
    &=  \frac{1}{\abs{\mathcal S}}\sum_{\vec S \in \mathcal S}
      \left(\otimes^{\alpha_{1}} \vec R \vec O \vec S \vec u_{1}, \ldots,
      \otimes^{\alpha_{n}} \vec R \vec O \vec S  \vec u_{n} \right)
     = \vec R \triangleright \mathcal E_{\vec u,\mathcal S}^{\vec \alpha}([\vec O]_{\mathcal S}).
  \end{align*}
\end{proof}

\changed{A direct consequence of
  $\mathcal E_{\vec u,\mathcal S}^{\vec \alpha}$ beeing a $\SO$-equivariant
  map is that
  $\norm{\mathcal E_{\vec u,\mathcal S}^{\vec \alpha}([\vec R]_{\mathcal S})}$
  is independent of $[\vec R]_{\mathcal S} \in \SO / \mathcal S$.}

\begin{corollary}\label{cor:sphere}
  The image
  $\mathcal E_{\vec u,\mathcal S}^{\vec \alpha}(\SO/\changed{\mathcal S})
  \subset \R^{3^{\vec \alpha}}$ is contained in a sphere \changed{with radius
  $r_{\mathcal S}$, i.e., it exists a constant $r_{\mathcal S}>0$ such that for all
  $[\vec R]_{\mathcal S} \in \SO/\mathcal S$,}
  \begin{equation*}
    \Norm{\mathcal E_{\vec u,\mathcal S}^{\vec \alpha}([\vec R]_{\mathcal S})} = r_{\mathcal S}.
  \end{equation*}
\end{corollary}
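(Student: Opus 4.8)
The plan is to exploit the $\SO$-equivariance established in Theorem~\ref{theorem:homo} together with two further observations: that $\SO$ acts transitively on the quotient $\SO/\mathcal S$, and that the induced action $\vec R \triangleright \vec v = (\otimes^{\vec\alpha}\vec R)\,\vec v$ on $\R^{3^{\vec\alpha}}$ is by Euclidean isometries. Once these are in place the radius is simply the common norm of the image of one fixed coset, and I would set $r_{\mathcal S} := \Norm{\mathcal E_{\vec u,\mathcal S}^{\vec\alpha}([\vec I]_{\mathcal S})}$, where $\vec I$ denotes the identity rotation.

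First I would verify that the action on $\R^{3^{\vec\alpha}}$ preserves the standard Euclidean norm. Since $\vec R \in \SO$ is orthogonal, each factor $\otimes^{\alpha_i}\vec R$ is an orthogonal transformation of $\otimes^{\alpha_i}\R^{3}$: on decomposable tensors it acts by $\otimes^{\alpha_i}\vec R\,(\vec x_{1}\otimes\cdots\otimes\vec x_{\alpha_i}) = \vec R\vec x_{1}\otimes\cdots\otimes\vec R\vec x_{\alpha_i}$, and the induced inner product is multiplicative, $\dotprod{\vec R\vec x_{1}\otimes\cdots}{\vec R\vec y_{1}\otimes\cdots} = \prod_{j}\dotprod{\vec R\vec x_{j}}{\vec R\vec y_{j}} = \prod_{j}\dotprod{\vec x_{j}}{\vec y_{j}}$, so each tensor power is orthogonal. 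The block action $\otimes^{\vec\alpha}\vec R = (\otimes^{\alpha_{1}}\vec R,\ldots,\otimes^{\alpha_{n}}\vec R)$ is then orthogonal on the Cartesian product $\R^{3^{\vec\alpha}}$, whence $\Norm{\vec R\triangleright\vec v} = \Norm{\vec v}$ for every $\vec v$.

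With norm preservation in hand, the conclusion is immediate. Writing any coset as $[\vec R]_{\mathcal S} = \vec R\triangleright[\vec I]_{\mathcal S}$ and combining equivariance with the isometry property gives
\begin{equation*}
  \Norm{\mathcal E_{\vec u,\mathcal S}^{\vec\alpha}([\vec R]_{\mathcal S})}
  = \Norm{\vec R\triangleright\mathcal E_{\vec u,\mathcal S}^{\vec\alpha}([\vec I]_{\mathcal S})}
  = \Norm{\mathcal E_{\vec u,\mathcal S}^{\vec\alpha}([\vec I]_{\mathcal S})}
  = r_{\mathcal S},
\end{equation*}
which is independent of $[\vec R]_{\mathcal S}$.

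The argument has essentially no obstacle; the only step requiring a line of justification is that the tensor-power action is orthogonal, and that in turn reduces to the multiplicativity of the induced inner product on decomposable tensors. The one point that does \emph{not} follow from equivariance alone is the strict positivity $r_{\mathcal S}>0$: this requires the symmetrized image of $[\vec I]_{\mathcal S}$ to be nonzero, which I would record as a nondegeneracy condition on the chosen directions $\vec u = (\vec u_{1},\ldots,\vec u_{n})$ rather than as a consequence of the homogeneity.
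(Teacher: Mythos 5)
Your proof is correct and takes essentially the same route as the paper's: $\SO$-equivariance from Theorem~\ref{theorem:homo} plus the observation that the tensor-power action $\vec v \mapsto (\otimes^{\vec\alpha}\vec R)\,\vec v$ is orthogonal (the paper dispatches this with ``the Kronecker product of orthogonal matrices is again orthogonal''), reducing everything to the norm at the identity coset. Your closing caveat is a genuine point the paper's proof silently skips: positivity of $r_{\mathcal S}$ does not follow from equivariance alone, and indeed can fail without a nondegeneracy condition on $\vec u$ relative to $\mathcal S$ --- e.g.\ for $\alpha = 1$, $\vec u = \vec e_{1}$ and $\mathcal S = C_{2}$ with axis $\vec e_{3}$, the symmetrized map is identically zero.
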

\begin{proof}
  Let $\vec R \in \SO$ be an arbitrary rotation and $\vec I \in \SO$ the
  identity. Then we have by Theorem \ref{theorem:homo} and the fact that the
  Kronecker product of orthogonal matrices is again an orthogonal matrix that
  \changed{
    \begin{align*}
      \Norm{\mathcal E_{\vec u,\mathcal S}^{\vec \alpha}([\vec R]_{\mathcal S})}^2
      &=\Bigl\langle\mathcal E_{\vec u,\mathcal S}^{\vec \alpha}([\vec R \, \vec I]_{\mathcal S}),\,
        \mathcal E_{\vec u,\mathcal S}^{\vec \alpha}([\vec R \, \vec I]_{\mathcal
        S})\Bigr\rangle
        =\Bigl\langle\vec R\triangleright
        \mathcal E_{\vec u,\mathcal S}^{\vec \alpha}( [\vec I]_{\mathcal S}),\,
        \vec R\triangleright\mathcal E_{\vec u,\mathcal S}^{\vec \alpha}([\vec I]_{\mathcal S})\Bigr\rangle\\
      &= \Bigl\langle(\otimes^{\vec \alpha}\vec R)\, \mathcal E_{\vec u,\mathcal S}^{\vec \alpha}([\vec I]_{\mathcal S}),\,
        (\otimes^{\vec \alpha}\vec R)\,\mathcal E_{\vec u,\mathcal S}^{\vec
        \alpha}([\vec I]_{\mathcal S})\Bigr\rangle
        =\Norm{\mathcal E_{\vec u,\mathcal S}^{\vec \alpha}([\vec I]_{\mathcal S})}^2.
    \end{align*}}
\end{proof}

\subsection{Rotationally Invariant Subspaces}
\label{sec:rotat-invar-subsp}

In order to prove further properties of the embeddings
$\mathcal E_{\vec u,\mathcal S}^{\vec \alpha}$ we continue by investigating
subspaces of $\R^{3^{\vec \alpha}}$ that are invariant with respect to the
group action $\triangleright$. \changed{More precisely, we search for tensors
  $\vec M_\alpha\in \R^{3^\alpha}$, such that
  $\vec R\triangleright\vec M_\alpha=\vec M_\alpha$. For $\alpha=1$ and
  $\vec v \in \R^{3}$ this means $\vec R\vec v=\vec v$ has to hold for all
  $\vec R\in \SO$. This is only fulfilled for $\vec v=\vec 0$ and, hence, the
  subspace of rotational invariant vectors in the $\R^{3}$ is just the trivial
  one.  In the case $\alpha=2$ we have for $\vec v = \vec I \in \R^{3^{2}}$
  that
  $\otimes^{2} \vec R \triangleright \vec I = \vec R \vec I \vec R^{T} = \vec
  I$ and, hence, $\vec M_{2} = \vec I$ spans a rotational invariant subspace
  of $\R^{3^{2}}$.  Indeed, we find a one-dimensional rotational invariant
  subspace for all even $\alpha$.}

% This motivates the search for such subspaces in higher
% dimensions.
  % we can use properties of the Kronecker product
  % of matrices. The defining property of orthogonal matrices $\vec R\in \SO$ is
  % $\vec R\vec I_3\vec R^\top=\vec I_3$, which is can be written with help of
  % Kronecker products as
  % $\left(\vec R\otimes\vec R\right)(\text{vec }\vec I_3) =(\text{vec }\vec
  % I_3) $.  So for $\alpha=2$ the space $\lin\{\vec I_3\}$ is a subspace in
  % $\R^{3^2}$ which is invariant with respect to the group action
  % $\triangleright$.

\begin{lemma}\label{lem:Malpha}
  Let $\vec \alpha=(\alpha_i)_{i=1}^n$ be a multi-index. Then the
  tensor $\vec {M_{\alpha}} \in \R^{3^{\vec{\alpha}}}$ defined by
  \begin{equation*}
    (\vec M_{\alpha_i})_{j_{1},\ldots,j_{\alpha_i}}
    = \text{symm}(\otimes^{\alpha_i/2} \vec I)
    =  \frac{1}{\alpha_i!} \sum_{\sigma \in \Sigma_{\alpha_i}}
    \prod_{k=1}^{\alpha_i/2}\delta_{j_{\sigma(2k-1)},j_{\sigma(2k)}},
  \end{equation*}
  if $\alpha_i$ is even and $\vec M_{\alpha_i}=\vec 0 \in \otimes^{\alpha_i}\R^{3}$ if $\alpha_i$ is odd,  is $\SO$ invariant, i.e.,
  $ \displaystyle \; \vec R \triangleright \vec  {M_{\alpha}} = \vec {M_{\alpha}}$,
  $\vec R \in \SO$.
\end{lemma}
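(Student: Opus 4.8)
The plan is to reduce the statement to a single even block and then exploit two structural facts: that the identity matrix $\vec I$ is fixed by the order-two diagonal action, and that symmetrization commutes with that action. Since the group action $\triangleright$ on $\R^{3^{\vec\alpha}}$ operates component-wise across the $n$ blocks, $\vec R \triangleright \vec M_\alpha = \vec M_\alpha$ holds if and only if $(\otimes^{\alpha_i}\vec R)\,\vec M_{\alpha_i} = \vec M_{\alpha_i}$ for each $i$. For odd $\alpha_i$ this is immediate because $\vec M_{\alpha_i} = \vec 0$. Hence I may fix an even $\alpha_i$ and write $\alpha_i = 2m$; the task becomes showing $(\otimes^{2m}\vec R)\,\vec M_{2m} = \vec M_{2m}$, where, reading off the defining formula, $\vec M_{2m}$ is exactly the full symmetrization $\mathrm{Sym}(\otimes^m\vec I)$ of the $m$-fold tensor power of the identity, since $\prod_{k=1}^m \delta_{j_{\sigma(2k-1)},j_{\sigma(2k)}} = (\otimes^m \vec I)_{j_{\sigma(1)},\ldots,j_{\sigma(2m)}}$.

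First I would record that $\otimes^2 \vec R$ fixes $\vec I$: by orthogonality $(\otimes^2\vec R)\,\vec I = \vec R\,\vec I\,\vec R^{T} = \vec R \vec R^{T} = \vec I$, or in indices $\sum_\ell R_{j_1 \ell} R_{j_2 \ell} = \delta_{j_1 j_2}$. Because $\otimes^{2m}\vec R = \otimes^m(\otimes^2\vec R)$ acts on the pairs of slots of $\otimes^m\vec I$ independently, this gives $(\otimes^{2m}\vec R)\,(\otimes^m\vec I) = \otimes^m\vec I$. Second, I would observe that the symmetrizer $\mathrm{Sym} = \frac{1}{(2m)!}\sum_{\sigma\in\Sigma_{2m}} P_\sigma$, where $P_\sigma$ permutes the $2m$ tensor slots, commutes with the diagonal action $\otimes^{2m}\vec R$, since permuting slots and applying the same matrix $\vec R$ to every slot are independent operations. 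Combining the two facts yields $(\otimes^{2m}\vec R)\,\vec M_{2m} = (\otimes^{2m}\vec R)\,\mathrm{Sym}(\otimes^m\vec I) = \mathrm{Sym}\bigl((\otimes^{2m}\vec R)(\otimes^m\vec I)\bigr) = \mathrm{Sym}(\otimes^m\vec I) = \vec M_{2m}$.

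Alternatively, and perhaps more transparently, the whole computation can be carried out directly in indices: substituting the definition into $\sum_{\ell_1,\ldots,\ell_{2m}} R_{j_1\ell_1}\cdots R_{j_{2m}\ell_{2m}}(\vec M_{2m})_{\ell_1,\ldots,\ell_{2m}}$ and exchanging the order of summation, for each fixed $\sigma$ the sum over $\ell_1,\ldots,\ell_{2m}$ factorizes over the pairs $\{\sigma(2k-1),\sigma(2k)\}$ (which partition $\{1,\ldots,2m\}$ because $\sigma$ is a permutation), and each pair collapses to $\delta_{j_{\sigma(2k-1)},j_{\sigma(2k)}}$ by orthogonality; summing back over $\sigma$ reproduces $(\vec M_{2m})_{j_1,\ldots,j_{2m}}$. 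I expect the only real care to be needed in this bookkeeping step, namely checking that each summation index $\ell_t$ occurs in exactly one Kronecker delta and exactly one factor $R_{j_t\ell_t}$ so that the factorization is legitimate; everything else is a direct consequence of $\vec R \vec R^{T}=\vec I$ and the definition of symmetrization.
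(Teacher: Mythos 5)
Your proof is correct, and your primary argument takes a genuinely different route from the paper's. The paper argues entirely in coordinates: it expands $(\vec R\triangleright\vec M_{\alpha})_{i_1,\ldots,i_\alpha}$, substitutes the permutation-sum definition, interchanges the finite sums and products, and collapses each pair via orthogonality of the rows of $\vec R$, i.e. $\sum_{j}R_{i_{\sigma(2k-1)}j}R_{i_{\sigma(2k)}j}=\delta_{i_{\sigma(2k-1)},i_{\sigma(2k)}}$ --- which is exactly the bookkeeping you sketch in your ``alternative'' paragraph, so that version of your argument coincides with the paper's proof. Your main argument instead isolates two structural facts: $(\otimes^{2}\vec R)\,\vec I=\vec R\,\vec R^{T}=\vec I$, hence $\otimes^{m}\vec I$ is fixed by $\otimes^{2m}\vec R$ acting pairwise on the slots; and the symmetrizer $\frac{1}{(2m)!}\sum_{\sigma}P_{\sigma}$ commutes with the diagonal action, because permuting slots and applying the same matrix in every slot are independent operations. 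Both facts are verified correctly (the slot-relabeling computation behind the commutation is standard), and together they give invariance with no pair-partition bookkeeping at all. What your route buys: it makes transparent that the only inputs are orthogonality of $\vec R$ and permutation-equivariance of the diagonal action, and it generalizes verbatim to $\mathrm{O}(d)$ acting on $\otimes^{2m}\R^{d}$ --- indeed it shows that the symmetrization of \emph{any} invariant tensor is again invariant. What the paper's computation buys: it stays at the level of explicit index formulas, which is the form reused immediately afterwards in Lemma~\ref{lem:dotpro} and Lemma~\ref{lem:skaMr}, where the same permutation-sum representation of $\vec M_{\alpha}$ is manipulated further. The reduction to a single even block (component-wise action, odd blocks trivially zero) is shared by both proofs, merely placed at opposite ends of the argument.
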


\begin{proof}
  For odd $\alpha_i$ there is nothing to prove. For $\vec R =\left(R_{ij}\right)_{i,j=1}^3\in \SO$ and even $\alpha \in \N_0$ we have
	%\allowdisplaybreaks
  \begin{align*}
    \left(\vec R \triangleright \vec M_{\alpha}\right)_{i_1,\ldots,i_\alpha}
    &=((\otimes^{\alpha} \vec R) \, \vec M_{\alpha})_{i_1,\ldots,i_\alpha}\\
    &=\sum_{j_1,\ldots,j_{\alpha}=1}^3 (\vec M_{\alpha})_{j_1,\ldots,j_{\alpha}}
      \cdot R_{i_1j_1}R_{i_2j_2}\cdots R_{i_{\alpha}j_{\alpha}}\\
    &=\frac{1}{\alpha !}\sum_{j_1,\ldots,j_\alpha=1}^3
      \left(\left(\sum_{\sigma \in \Sigma_{\alpha}}\prod_{k=1}^{\frac {\alpha}{2}}
      \delta_{j_{\sigma(2k-1)},j_{\sigma(2k)}}\right)R_{i_1j_1}R_{i_2j_2}\cdots R_{i_{\alpha}j_{\alpha}}\right)\\
    &=\frac{1}{\alpha !}\sum_{j_1,\ldots,j_\alpha=1}^3
      \left(\sum_{\sigma \in \Sigma_{\alpha}}\prod_{k=1}^{\frac{\alpha}{2}}
      \delta_{j_{\sigma(2k-1)},j_{\sigma(2k)}}\right)\prod_{l=1}^{\alpha} R_{i_lj_l}\\
    &=\frac{1}{\alpha !}\sum_{\sigma \in \Sigma_{\alpha}}\prod_{k=1}^{\frac{\alpha}{2}}
      \sum_{j_1,\ldots,j_{\alpha}=1}^3
      \delta_{j_{\sigma(2k-1)},j_{\sigma(2k)}} R_{i_{\sigma(2k-1)}j_{\sigma(2k-1)}}R_{i_{\sigma(2k)}j_{\sigma(2k)}}\\
    &=\frac{1}{\alpha !}\sum_{\sigma \in \Sigma_{\alpha}}\prod_{k=1}^{\frac {\alpha}{2}}
      \sum_{j_1,\ldots,j_{\alpha}=1}^3 R_{i_{\sigma(2k-1)}j_{\sigma(2k-1)}}R_{i_{\sigma(2k)}j_{\sigma(2k)}}.
  \end{align*}
  All the sums and products are finite, so we can interchange them. Using the
  orthogonality of $\vec R$ we obtain
  \begin{align*}
    \sum_{j_{\sigma(2k-1)}=1}^3\prod_{k=1}^{\frac {\alpha}{2}}
    R_{i_{\sigma(2k-1)}j_{\sigma(2k-1)}}R_{i_{\sigma(2k)}j_{\sigma(2k)}}
    = \langle R_{i_{\sigma(2k-1)}},R_{i_{\sigma(2k)}}\rangle
    =
      \begin{cases}
        0 & \text{if } i_{\sigma(2k-1)}\neq i_{\sigma(2k)} \\
        1 & \text{if } i_{\sigma(2k-1)}= i_{\sigma(2k)} \\
      \end{cases}
  \end{align*}
  and eventually,
  \begin{align*}
    \left(\vec R \triangleright \vec M_{\alpha}\right)_{i_1,\ldots,i_\alpha}
    &=\frac{1}{\alpha!}\sum_{\sigma \in \Sigma_{\alpha}}
      \prod_{k=1}^{\frac r2}\delta_{i_{\sigma(2k-1)},i_{\sigma(2k)}}=(\vec M_{\alpha})_{i_1,\ldots,i_{\alpha}}.
  \end{align*}
  Applying this argument element-wise for all
  $\alpha\in \{\alpha_i\}_{i=1}^n$, yields the assertion.
\end{proof}
\changed{For example in the case $\alpha =4$, the tensor $\vec M_4$ can be written as
\begin{equation*}
(\vec M_4)_{j_1,j_2,j_3,j_4}=\begin{cases} 1 & \text{if } j_1=j_2=j_3=j_4\\
\frac{1}{3} & \text{if pair-wise two indices are equal, but not all, i.e.} j_1=j_2=1, j_3=j_4=2\\
0& \text{else}
\end{cases}.
\end{equation*} }
Since, $\mathcal E_{\vec u,\mathcal S}^{\vec \alpha}$ is an $\SO$\changed{-equivariant} map, any
rotationally invariant subspace is orthogonal to the \changed{image of the} embedding
$\mathcal E_{\vec u,\mathcal S}^{\vec \alpha}(\SO)$. More precisely, we have the
following result:

\begin{lemma}\label{lem:dotpro}
\changed{  Let $\alpha\in \N$ and $\vec R\in \SO$ an arbitrary rotation. Then the inner
  product between $\mathcal E_{\vec u}^\alpha(\vec R)$ and $\vec M_{\alpha}$
  computes to}
  \begin{equation*}
    \displaystyle \dotprod{\mathcal E_{\vec u}^\alpha(\vec R)}{\vec M_{\alpha}}
    = 1.
  \end{equation*}
\end{lemma}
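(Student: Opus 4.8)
The plan is to reduce the computation to the identity rotation using the $\SO$-invariance of $\vec M_{\alpha}$ from Lemma~\ref{lem:Malpha}, and then to evaluate the resulting inner product by a direct combinatorial expansion. Throughout I treat the nontrivial case of even $\alpha$ with $\vec u$ a unit direction, since for odd $\alpha$ the tensor $\vec M_{\alpha}$ vanishes and the statement concerns the invariant subspace spanned by $\vec M_{\alpha}$.

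First I would record the factorization of the tensor power of a matrix--vector product into a Kronecker product acting on a fixed tensor,
\[
  \mathcal E_{\vec u}^{\alpha}(\vec R)
  = \otimes^{\alpha}(\vec R \vec u)
  = (\otimes^{\alpha} \vec R)(\otimes^{\alpha} \vec u)
  = \vec R \triangleright (\otimes^{\alpha} \vec u),
\]
which rewrites the embedding as the group action applied to the $\vec R$-independent tensor $\otimes^{\alpha}\vec u$. Since the Kronecker product of orthogonal matrices is again orthogonal, the operator $\otimes^{\alpha}\vec R$ is an isometry of $\R^{3^{\alpha}}$ with transpose $(\otimes^{\alpha}\vec R)^{T}=\otimes^{\alpha}\vec R^{T}$, so I can move it onto the second argument and invoke invariance:
\[
  \dotprod{\mathcal E_{\vec u}^{\alpha}(\vec R)}{\vec M_{\alpha}}
  = \dotprod{(\otimes^{\alpha}\vec R)(\otimes^{\alpha}\vec u)}{\vec M_{\alpha}}
  = \dotprod{\otimes^{\alpha}\vec u}{(\otimes^{\alpha}\vec R^{T})\,\vec M_{\alpha}}
  = \dotprod{\otimes^{\alpha}\vec u}{\vec M_{\alpha}},
\]
where the last equality uses $\vec R^{T}\in \SO$ together with $(\otimes^{\alpha}\vec R^{T})\,\vec M_{\alpha}=\vec R^{T}\triangleright\vec M_{\alpha}=\vec M_{\alpha}$ from Lemma~\ref{lem:Malpha}. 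Thus the inner product is independent of $\vec R$, and it suffices to evaluate it at the identity.

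Finally, I would expand $\dotprod{\otimes^{\alpha}\vec u}{\vec M_{\alpha}}$ in coordinates. Writing $(\otimes^{\alpha}\vec u)_{j_{1},\ldots,j_{\alpha}}=\prod_{k=1}^{\alpha}u_{j_{k}}$ and inserting the symmetrized expression for $\vec M_{\alpha}$, one sees that for each fixed $\sigma\in\Sigma_{\alpha}$ the Kronecker deltas partition the $\alpha$ summation indices into $\alpha/2$ pairs, so the sum over $j_{1},\ldots,j_{\alpha}$ factors into $\alpha/2$ independent contractions, each contributing $\sum_{j}u_{j}^{2}=\norm{\vec u}^{2}=1$. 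Hence every one of the $\alpha!$ permutations contributes $\norm{\vec u}^{\alpha}=1$, and the normalizing factor $1/\alpha!$ leaves the value $1$. The only real bookkeeping is tracking the pairing induced by $\sigma$ and noting that the unit-norm assumption on $\vec u$ is precisely what turns the generic value $\norm{\vec u}^{\alpha}$ into $1$; beyond this the argument presents no genuine obstacle.
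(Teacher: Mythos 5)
Your proof is correct, but it takes a genuinely different route from the paper's. The paper works directly with $\vec v = \vec R\vec u$ and evaluates $\dotprod{\otimes^{\alpha}\vec v}{\vec M_{\alpha}}$ by explicit coefficient bookkeeping: it expands the inner product as $\sum a(i,j,k)\, v_1^{2i}v_2^{2j}v_3^{2k}$, identifies $a(i,j,k)$ as the multinomial coefficient $\binom{\alpha/2}{i,j,k}$ by counting entries and nonzero summands in the representation \eqref{eq:Malpha}, and concludes with the multinomial theorem together with $\norm{\vec R\vec u}=1$. You instead (i) use the adjoint relation $(\otimes^{\alpha}\vec R)^{T}=\otimes^{\alpha}\vec R^{T}$ and the invariance $\vec R^{T}\triangleright \vec M_{\alpha}=\vec M_{\alpha}$ from Lemma~\ref{lem:Malpha} to reduce everything to $\vec R=\vec I$, and then (ii) observe that for each fixed permutation $\sigma$ the Kronecker deltas partition the $\alpha$ indices into $\alpha/2$ pairs, so the index sum factors into $\alpha/2$ independent contractions, each equal to $\norm{\vec u}^{2}=1$. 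Step (ii) is the real gain: it replaces the paper's multinomial counting by a one-line factorization, and it incidentally sidesteps the exponent slip in the paper's final line, where $(v_1^2+v_2^2+v_3^2)^{\alpha}$ should read $(v_1^2+v_2^2+v_3^2)^{\alpha/2}$ (harmless, since the base is $1$). Step (i), while a pleasant application of Lemma~\ref{lem:Malpha} (which the paper proves but never actually uses in this proof), is dispensable: your factorization argument applies verbatim to $\vec v=\vec R\vec u$, since $\norm{\vec R\vec u}=1$ as well. One remark that applies equally to both proofs: for odd $\alpha$ the statement as literally written fails, because $\vec M_{\alpha}=\vec 0$ forces the inner product to be $0$ rather than $1$; like the paper, you correctly confine the substance of the argument to even $\alpha$.
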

\begin{proof}
  We can rewrite the definition of $\mathbf M_\alpha$ for even $\alpha$ to
  \begin{align}\label{eq:Malpha}
    \vec M_\alpha
    =\frac{1}{\alpha!}\sum_{\sigma \in \Sigma_\alpha}\delta_{j_{\sigma(1)},j_{\sigma(2)}}\cdot \delta_{j_{\sigma(3)},j_{\sigma(4)}}\cdots\delta_{j_{\sigma(\alpha-1)},j_{\sigma(\alpha)}}
    =\frac{1}{\alpha!}2^{\frac {\alpha}{2}}\left(\frac {\alpha}{2}\right)!\underbrace{\left(\delta_{j_{1},j_{2}}\cdot \delta_{j_{3},j_{4}}\cdots\delta_{j_{\alpha-1},j_{\alpha}}+\ldots\right)}_{(\alpha-1)(\alpha-3)\cdots 1 \textrm{ summands}}.
  \end{align}
  The product of the $\delta$ is $1$ \changed{only}, if pairwise two $j_i$ are
  equal. Hence, we obtain the following for the scalar product if
  $\vec v=(v_1,v_2,v_3)^\top=\vec R \vec u$
  \begin{equation*}
    \dotprod{\mathcal E_{\vec u}^\alpha(\vec R)}{\vec M_{\alpha}}
    =\dotprod{\otimes^\alpha(\vec{Ru})}{\vec M_{\alpha}}=\sum_{\substack{i,j,k\\2i+2j+2k=\alpha}}a(i,j,k)v_1^{2i}v_2^{2j}v_3^{2k}
  \end{equation*}
  with coefficients $a(i,j,k)$. These coefficients have to be determined:
  \begin{align*}
    a(i,j,k)=
    &\underbrace{\frac{1}{\alpha !}2^{\frac {\alpha}{2}}\left(\frac \alpha2\right)!}_{\textrm{factor in \eqref{eq:Malpha}}}\cdot
      \underbrace{\binom{\alpha}{2i}\binom{\alpha-2i}{2j}\binom{\alpha-2i-2j}{2k}}_{\textrm{number of entries}}\\
    &\cdot\underbrace{ (2i-1)(2i-3)\cdots1\cdot(2j-1)(2j-3)\cdots1\cdot (2k-1)(2k-3)\cdots1}_{\textrm{number of summands unequal to $0$ in \eqref{eq:Malpha}} }\\
    =&\left(\frac \alpha2\right)!\cdot\frac{2^i(2i-1)(2i-3)\cdots1}{(2i)!}\cdot\frac{2^j(2j-1)(2j-3)\cdots1}{(2j)!}\cdot\frac{2^k(2k-1)(2k-3)\cdots1}{(2k)!}\\
    =&\left(\frac \alpha2\right)!\cdot\frac{1}{i!j!k!}
       =\binom{\frac \alpha2}{i,j,k}.
  \end{align*}
  With the multinomial theorem it follows that
  \begin{equation*}
    \langle \otimes ^\alpha \vec v,\vec M_\alpha\rangle=(v_1^2+v_2^2+v_3^2)^\alpha=1.
  \end{equation*}
\end{proof}

The previous lemma states that the embedded manifold is contained in the affine
subspace of all $\vec x \in \R^{3^{\vec \alpha}}$ \changed{with $\dotprod{\vec x}{\vec M_{\alpha}} = 1$.}
Next we want to shift the embedding into
the corresponding linear subspace. To this end we need to compute the
Frobenius norms $\norm{\vec M_{\alpha}}_{F}$ of the invariant tensors
$\vec M_{\alpha}$.

\begin{lemma}\label{lem:skaMr}
  Let $\alpha\in 2\mathbb{N}$. Then the Frobenius norm of the tensor $\vec M_{ \alpha}$ satisfies
  \[\norm{\vec M_\alpha}_F^2=\langle \vec M_\alpha,\vec M_ \alpha\rangle= \alpha+1.\]
\end{lemma}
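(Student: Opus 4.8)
\emph{Plan (elementary reduction to a telescoping contraction).} The plan is to turn the squared Frobenius norm into a single complete contraction. Write $\vec M_\alpha = \tfrac{1}{(\alpha-1)!!}\sum_{P}\vec T_P$ as in \eqref{eq:Malpha}, where the sum runs over the $(\alpha-1)!!$ perfect matchings $P$ of $\{1,\ldots,\alpha\}$ and $\vec T_P$ is the matching tensor with entries $\prod_{\{a,b\}\in P}\delta_{j_a,j_b}$; in particular $\vec T_{P_0}=\otimes^{\alpha/2}\vec I$ for the trivial matching $P_0=\{\{1,2\},\{3,4\},\ldots\}$. Since $\vec M_\alpha$ is invariant under permutations of its tensor slots, the inner product $\langle\vec M_\alpha,\vec T_P\rangle$ does not depend on $P$, so
\begin{equation*}
  \langle\vec M_\alpha,\vec M_\alpha\rangle
  = \frac{1}{(\alpha-1)!!}\sum_{P}\langle\vec M_\alpha,\vec T_P\rangle
  = \langle\vec M_\alpha,\vec T_{P_0}\rangle
  = \sum_{j_1,\ldots,j_{\alpha/2}=1}^3 (\vec M_\alpha)_{j_1 j_1\cdots j_{\alpha/2} j_{\alpha/2}},
\end{equation*}
the complete contraction of $\vec M_\alpha$ along the $\alpha/2$ index pairs prescribed by $P_0$.

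The second step is to contract one pair at a time. I would first establish the single-pair trace identity
\begin{equation*}
  \sum_{j=1}^3 (\vec M_\alpha)_{i_1\ldots i_{\alpha-2}\,j\,j}
  = \frac{\alpha+1}{\alpha-1}\,(\vec M_{\alpha-2})_{i_1\ldots i_{\alpha-2}}.
\end{equation*}
Expanding $\vec M_\alpha$ over matchings and contracting the last pair splits them into two types: matchings pairing the two contracted slots with each other contribute $\sum_j\delta_{j,j}=3$ and reduce to a matching of the remaining $\alpha-2$ slots, while matchings sending the two contracted slots to distinct slots $a,b$ collapse via $\sum_j\delta_{j,i_a}\delta_{j,i_b}=\delta_{i_a,i_b}$ to a matching in which $a,b$ are paired. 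Each reduced matching arises once from the first type and exactly $\alpha-2$ times from the second (choose one of its $\alpha/2-1$ edges to split, with two orientations), giving the combined weight $3+(\alpha-2)=\alpha+1$; the constant then follows from $(\alpha-1)!!=(\alpha-1)(\alpha-3)!!$. Iterating this identity along all $\alpha/2$ pairs telescopes the complete contraction,
\begin{equation*}
  \langle\vec M_\alpha,\vec M_\alpha\rangle
  = \frac{\alpha+1}{\alpha-1}\cdot\frac{\alpha-1}{\alpha-3}\cdots\frac{3}{1}\cdot\vec M_0
  = \alpha+1,
\end{equation*}
since $\vec M_0=1$.

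The main obstacle is the bookkeeping inside the trace identity, in particular checking that the "split" case reduces cleanly to $\vec M_{\alpha-2}$ with multiplicity exactly $\alpha-2$; once that is in place the result is a telescoping product. A shorter but less self-contained alternative avoids the combinatorics entirely: reading Lemma~\ref{lem:dotpro} without the normalization $\lvert\vec R\vec u\rvert=1$, the symmetric tensor $\vec M_\alpha$ represents the polynomial $\vec v\mapsto\langle\otimes^\alpha\vec v,\vec M_\alpha\rangle=\lvert\vec v\rvert^\alpha$, while the spherical average $\vec N=\int_{\mathbb{S}^{2}}\otimes^\alpha\vec v\,\dx\sigma(\vec v)$ represents $c_\alpha\lvert\vec v\rvert^\alpha$ with $c_\alpha=\tfrac12\int_{-1}^{1}t^\alpha\,\dx t=\tfrac{1}{\alpha+1}$. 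Because the correspondence between symmetric tensors and homogeneous polynomials is a bijection, $\vec N=c_\alpha\vec M_\alpha$, whence $\langle\vec M_\alpha,\vec M_\alpha\rangle=c_\alpha^{-1}\langle\vec M_\alpha,\vec N\rangle=c_\alpha^{-1}\int_{\mathbb{S}^{2}}\lvert\vec v\rvert^\alpha\,\dx\sigma=c_\alpha^{-1}=\alpha+1$; here the obstacle shifts to justifying the proportionality $\vec N=c_\alpha\vec M_\alpha$.
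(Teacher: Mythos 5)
Your proposal is correct, and it takes a genuinely different route from the paper. The paper's proof is a direct entry count: it writes out the explicit value of each entry of $\vec M_\alpha$ whose index pattern has $2i_1$ ones, $2i_2$ twos and $2i_3$ threes, multiplies by the number $\binom{\alpha}{2i_1,2i_2,2i_3}$ of such entries, reduces $\norm{\vec M_\alpha}_F^2$ to $\frac{1}{\binom{\alpha}{\alpha/2}}\sum\binom{2i_1}{i_1}\binom{2i_2}{i_2}\binom{2i_3}{i_3}$, and then invokes a separate binomial identity (Lemma~\ref{lemma:sumBinom}, proved in the appendix via Chu--Vandermonde). Your main argument avoids all of this: the slot-permutation symmetry of $\vec M_\alpha$ collapses the squared norm to the single complete contraction $\langle\vec M_\alpha,\vec T_{P_0}\rangle$, and your matching count is right --- each matching of the remaining $\alpha-2$ slots arises from exactly one matching that pairs the two contracted slots (contributing $\sum_j\delta_{jj}=3$) and from exactly $2(\tfrac{\alpha}{2}-1)=\alpha-2$ matchings that split them (contributing $1$), so with the normalization $(\alpha-1)!!=(\alpha-1)(\alpha-3)!!$ the trace recursion carries the factor $\frac{\alpha+1}{\alpha-1}$ and telescopes to $\alpha+1$ (it also passes the sanity checks $\alpha=2,4$). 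What this buys is the elimination of the appendix lemma and of all entry-level bookkeeping; moreover the argument works verbatim in $\R^d$, where the weight $3+(\alpha-2)$ becomes $d+\alpha-2$. Your alternative via the spherical average $\vec N=\int_{\S^2}\otimes^\alpha\vec v\,\dx\sigma(\vec v)$ is also sound and is essentially a reuse of the moment computation appearing in the paper's proof of Theorem~\ref{theorem:centered}; its only extra ingredient is the standard polarization fact that a symmetric tensor is determined by its associated homogeneous polynomial, which justifies $\vec N=\frac{1}{\alpha+1}\vec M_\alpha$ and then gives the result in two lines. Either route is a valid, and arguably more structural, substitute for the paper's computation.
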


\begin{proof}
  We use the formulation for the tensor $\vec M_\alpha$ from equation~\eqref{eq:Malpha}.
	Let $i_1,i_2,i_3\in\{0,1,2,\ldots,\frac \alpha 2\}$ with
  $i_1+i_2+i_3=\frac \alpha2$ such that
  \begin{align*}
    j_1,\ldots,j_{2i_1}&=1,\\
    j_{2i_1+1},\ldots,j_{2i_1+2i_2}&=2,\\
    j_{2i_1+2i_2+1},\ldots,j_{2i_1+2i_2+2i_3}&=3.
  \end{align*}
  The \changed{corresponding} entry in
  $\vec M_\alpha$ is
  \begin{align*}
    &\frac{1}{\alpha!}2^{\frac\alpha 2}\left(\frac \alpha 2\right)! \cdot (2i_1-1)(2i_1-3)\cdots 1\cdot(2i_2-1)(2i_2-3)\cdots 1 (2i_3-1)(2i_3-3)\cdots 1\\
				=&\frac{1}{\alpha!}\binom{\frac\alpha2}{i_1,i_2,i_3}\,(2i_1)!\,(2i_2)!\,(2i_3)!=\frac{(\frac \alpha2)!\,(2i_1)!\,(2i_2)!\,(2i_3)!}{\alpha!\,i_1!\,i_2!\,i_3!}.
  \end{align*}
  The values in $\vec M_\alpha$ are equal, no matter which $j_i$ are $1$ and
  similarly for $i_2$ and $i_3$. Hence, there are
  $\binom{\alpha}{2i_1,2i_2,2i_3}$ such entries in $\vec M_\alpha$. Overall we
  obtain
  \begin{align*}
    \norm{\vec M_\alpha}_F^2
    &=\sum_{\substack{i_1,i_2,i_3=0\\i_1+i_2+i_3=\frac \alpha2}}^{\frac \alpha2}\binom{\alpha}{2i_1,2i_2,2i_3}\left(\frac{(\frac \alpha 2)!\,(2i_1)!\,(2i_2)!\,(2i_3)!}{\alpha!\,i_1!\,i_2!\,i_3!}\right)^2\\
    &=\sum_{\substack{i_1,i_2,i_3=0\\i_1+i_2+i_3=\frac \alpha2}}^{\frac \alpha2}\frac{\alpha!}{(2i_1)!\,(2i_2)!\,(2i_3)!}\left(\frac{(\frac \alpha2)!^2\,(2i_1)!^2\,(2i_2)!^2\,(2i_3)!^2}{\alpha!^2\,i_1!^2\,i_2!^2\,i_3!^2}\right)\\
    &=\sum_{\substack{i_1,i_2,i_3=0\\i_1+i_2+i_3=\frac \alpha2}}^{\frac \alpha2}\left(\frac{(\frac \alpha2)!^2\,(2i_1)!\,(2i_2)!\,(2i_3)!}{\alpha!\,i_1!^2\,i_2!^2\,i_3!^2}\right)\\
    &=\frac{1}{\binom{\alpha}{\frac \alpha2}}\sum_{\substack{i_1,i_2,i_3=0\\i_1+i_2+i_3=\frac \alpha2}}^{\frac \alpha2}\binom{2i_1}{i_1}\binom{2i_2}{i_2}\binom{2i_3}{i_3}.
  \end{align*}
  With Lemma~\ref{lemma:sumBinom} follows the assertion.
\end{proof}
\changed{The previous Lemmata motivate to shift the embeddings for even $\alpha$ by a multiple of $\vec M_\alpha$ to reduce the dimension of the embedding space.}
\begin{theorem}
  \label{theorem:subspace}
  Let
  $\mathcal E_{\vec u, \mathcal S}^{\vec \alpha} \colon \SO/\mathcal S \to
  \R^{3^{\vec \alpha}}$ be the embedding defined in Definition~\ref{def:ESym}.
  Then the image of the centered embedding
  \begin{equation*}
    \tilde{\mathcal E}_{\vec u, \mathcal S}^{\vec \alpha}([\vec O]_{\mathcal S})
    = \mathcal E_{\vec u, \mathcal S}^{\vec \alpha}([\vec O]_{\mathcal S}) -
    \left(\frac{1}{\alpha_{1}+1} \vec
      M_{\alpha_{1}},\ldots,\frac{1}{\alpha_{n}+1} \vec
      M_{\alpha_{n}}\right)
  \end{equation*}
  is contained in a linear subspace of $\R^{3^\alpha}$ of dimension
  \begin{equation*}
    \sum_{i=1}^n\binom{\alpha_i+2}{\alpha_i} -
    \sum_{i=1}^n \left(\alpha_i+1\textrm{ mod } 2\right).
  \end{equation*}
\end{theorem}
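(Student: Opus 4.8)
The plan is to treat the embedding block by block and to exploit two structural facts: each component $\otimes^{\alpha_i}(\vec R\vec u_i)$ is a \emph{symmetric} tensor, so the image automatically lives in the symmetric subspace of $\otimes^{\alpha_i}\R^{3}$, and after centering this block becomes orthogonal to the invariant tensor $\vec M_{\alpha_i}$ whenever $\alpha_i$ is even. Writing $\R^{3^{\vec\alpha}}=\times_{i=1}^n(\otimes^{\alpha_i}\R^{3})$ as in \eqref{eq:vec-tensor}, it suffices to bound the dimension of the subspace containing each factor separately and then sum.

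First I would show that the image of $\tilde{\mathcal E}_{\vec u,\mathcal S}^{\vec\alpha}$ lies in the product of the symmetric subspaces $\mathrm{Sym}^{\alpha_i}(\R^{3})\subset\otimes^{\alpha_i}\R^{3}$. A pure tensor power $\otimes^{\alpha_i}\vec v$ is invariant under any permutation of its tensor slots, hence symmetric; since $\mathcal E_{\vec u,\mathcal S}^{\alpha_i}$ is an average of such pure powers $\otimes^{\alpha_i}(\vec O\vec S\vec u_i)$ and the shift $\tfrac1{\alpha_i+1}\vec M_{\alpha_i}$ is symmetric by the symmetrized definition in Lemma~\ref{lem:Malpha}, the centered block is symmetric as well. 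The dimension of $\mathrm{Sym}^{\alpha_i}(\R^{3})$ equals the number of degree-$\alpha_i$ monomials in three variables, equivalently the number of multisets of size $\alpha_i$ drawn from $\{1,2,3\}$, which is $\binom{\alpha_i+2}{\alpha_i}$.

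Next I would remove one dimension per even $\alpha_i$. For even $\alpha_i$, averaging the identity of Lemma~\ref{lem:dotpro} over $\mathcal S$ gives $\dotprod{\mathcal E_{\vec u,\mathcal S}^{\alpha_i}([\vec O]_{\mathcal S})}{\vec M_{\alpha_i}}=1$, while Lemma~\ref{lem:skaMr} yields $\dotprod{\tfrac1{\alpha_i+1}\vec M_{\alpha_i}}{\vec M_{\alpha_i}}=\tfrac1{\alpha_i+1}(\alpha_i+1)=1$. Subtracting, the centered block satisfies $\dotprod{\tilde{\mathcal E}_{\vec u,\mathcal S}^{\alpha_i}([\vec O]_{\mathcal S})}{\vec M_{\alpha_i}}=0$, so it lies in the hyperplane $\vec M_{\alpha_i}^{\perp}$. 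As $\vec M_{\alpha_i}\in\mathrm{Sym}^{\alpha_i}(\R^{3})$ is nonzero (its norm is $\sqrt{\alpha_i+1}$), intersecting with the symmetric subspace lowers the dimension of the $i$-th block from $\binom{\alpha_i+2}{\alpha_i}$ to $\binom{\alpha_i+2}{\alpha_i}-1$. For odd $\alpha_i$ we have $\vec M_{\alpha_i}=\vec 0$, no shift occurs, and the block dimension stays $\binom{\alpha_i+2}{\alpha_i}$.

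Finally I would assemble the count. The image of $\tilde{\mathcal E}_{\vec u,\mathcal S}^{\vec\alpha}$ is contained in the direct sum of these per-block subspaces, whose total dimension is $\sum_{i=1}^n\binom{\alpha_i+2}{\alpha_i}$ minus one for each even $\alpha_i$. Since $\alpha_i+1\bmod 2$ equals $1$ exactly when $\alpha_i$ is even and $0$ when $\alpha_i$ is odd, this is precisely $\sum_{i=1}^n\binom{\alpha_i+2}{\alpha_i}-\sum_{i=1}^n(\alpha_i+1\bmod 2)$, as claimed. The only step requiring genuine care is the first, namely identifying the symmetric subspace as the natural ambient space and fixing its dimension, since the orthogonality reduction is then handed to us directly by Lemmata~\ref{lem:dotpro} and~\ref{lem:skaMr}.
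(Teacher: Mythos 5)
Your proposal is correct and follows essentially the same route as the paper's own proof: both rest on the observation that each block lies in the space of symmetric tensors of dimension $\binom{\alpha_i+2}{\alpha_i}$, and both use Lemmata~\ref{lem:dotpro} and~\ref{lem:skaMr} to show the centered even-order blocks are orthogonal to $\vec M_{\alpha_i}$, cutting one dimension per even $\alpha_i$. Your version is, if anything, slightly more explicit than the paper's in averaging Lemma~\ref{lem:dotpro} over $\mathcal S$ and in noting that $\vec M_{\alpha_i}$ is a nonzero element of the symmetric subspace, so intersecting with its orthogonal hyperplane genuinely reduces the dimension by one.
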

\begin{proof}
  \changed{By Definition~\ref{def:E} all components $\vec T^{j} \in \R^{3^{\alpha_{j}}}$ of the
    embedding
    $\vec T = (\vec T^{1},\ldots,T^{n}) = \mathcal E_{\vec u}^{\vec
      \alpha}(\vec R)$ of an arbitrary rotation $\vec R \in \SO$ are
    symmetric tensors, i.e.,
    $\vec T^{j}_{i_1,\ldots,i_{\alpha_{j}}}=\vec
    T^{j}_{\sigma(i_1),\ldots,\sigma(i_{\alpha_{j}})}$ for any permutation $\sigma$ of $\{1,\cdots,\alpha_{j}\}$.}

  The linear space $S^\alpha(\R^3)$ of the symmetric
  $\alpha$-tensors  has the dimension
  $\binom{\alpha+2}{\alpha} $, c.f.~\cite[3.4]{CoGo08}. Thus the images
  $\mathcal E_{\vec u}^{\vec \alpha}(\SO)$ are contained in a subspace of
  $\mathbb R^{3^{\vec \alpha}}$ with dimension
  $\sum_{i=1}^n\binom{\alpha_i+2}{\alpha_i}$.  \changed{For even $\alpha$ the
    image $\mathcal E_{\vec u}^{\alpha}(\SO)$ is orthogonal to
    $\vec M_\alpha$, since for $\vec R\in \SO$}
  \begin{equation*}
    \changed{
      \langle \tilde{\mathcal E}_\vec u^\alpha(\vec R),\vec M_\alpha\rangle
      = \langle \mathcal E_\vec u^\alpha(\vec R)-\frac{1}{\alpha +1}\vec M_\alpha,\vec M_\alpha\rangle
      = \langle \mathcal E_\vec u^\alpha(\vec R),\vec M_\alpha\rangle-\frac{1}{\alpha +1}\langle\vec M_\alpha,\vec M_\alpha\rangle=0.}
    \end{equation*}

  Hence, the image $\tilde{\mathcal E}_{\vec u}^{ \alpha}(\SO)$ is contained in a
  hyperplane of the symmetric tensors in $\R^{3^\alpha}$ for every even
  component. Thus, we can reduce the dimension of every component with even
  $\alpha$ by 1. \changed{The symmetrization with the symmetry group $\mathcal S$
  does not change the dimensions.} Hence, the images
  $\tilde{\mathcal E}_{\vec u,\mathcal S}^{\vec \alpha}(\SO/\mathcal S)$ have
  dimension

  \begin{equation*}
    \sum_{i=1}^n\binom{\alpha_i+2}{\alpha_i}
    -\sum_{i=1}^n \left(\alpha_i+1\textrm{ mod } 2\right).
  \end{equation*}%%Für gerades \alpha_i 1 abziehen
\end{proof}

In \cite{ArJuSc18} the authors were especially interested in embeddings of the
rotation group modulo crystallographic point groups. These consist of the
cyclic groups $C_{k}$, and the dihedral groups $D_{k}$ with
$k \in \{1,2,3,4,6\}$, the tetrahedral group $T$ and the octahedral group $O$.
For all the corresponding quotients Table \ref{tab:ualpha} lists specific
choices of the parameters $\vec \alpha \in \R^{n}$ and
$\vec u_{1},\ldots,\vec u_{n} \in \R^{3}$ such that the generic embeddings
$\tilde{\mathcal E}_{\vec u,\mathcal S}^{\vec \alpha}$ coincide with the embeddings
reported in Table 2 of~\cite{ArJuSc18}. \changed{Here we assume the major rotational
axis in $C_{k}, D_{k}$ and $Y$ to be parallel to $\vec e_{1}$. For $O$, $T$ the
three-fold axis is assumed to be parallel to $(1,1,1)^{\top}$.}

It is important to note that at this point we have not yet proven that the
mappings $\tilde{\mathcal E}_{\vec u,\mathcal S}^{\vec \alpha}$ are indeed
embeddings, i.e., that they are injective. This will be done in
\changed{Section}~\ref{sec:isometric-embeddings}, where we shall prove that
with some modifications they are even \changed{local} isometries.

\begin{table}[h]
  \caption{Choices of the vectors $\vec u$ and the parameter $\vec\alpha$ such
  that $\tilde{\mathcal E}_{\vec u,\mathcal S}^{\vec \alpha}$ coincides with the
  embeddings reported in Table 2 of \cite{ArJuSc18}.}
  \label{tab:ualpha}
	\vspace{-0.2cm}
  \centering
  \begin{tabular}{llll}
    \hline
    $\mathcal S$  & $\vec u$& $\vec\alpha$&Dimension \\
    \hline
    $C_1$ & $(\vec e_1, \vec e_2,\vec e_3)$&(1,1,1)&9 \\
    $C_2$ & $(\vec e_1,\vec e_2)$&(1,2)&8 \\
    $C_\alpha$  $(\alpha \textrm{ even}, \alpha\geq 4)$ & $(\vec e_1,\vec e_2)$&$(1,\alpha)$&$\frac{(\alpha+2)(\alpha+1)}{2}+2$ \\
		$C_\alpha$  $(\alpha \textrm{ odd}, \alpha\geq 3)$ & $(\vec e_1,\vec e_2)$&$(1,\alpha)$&$\frac{(\alpha+2)(\alpha+1)}{2}+3$ \\
    $D_2$ & $(\vec e_1,\vec e_2)$&(2,2)&10\\
    $D_\alpha$  $(\alpha \textrm{ even}, \alpha\geq 4)$&$\vec e_1$&$\alpha$&$\frac{(\alpha+2)(\alpha+1)}{2}-1 $    \\
		$D_\alpha$  $(\alpha \textrm{ odd}, \alpha\geq 3)$&$\vec e_1$&$\alpha$&$\frac{(\alpha+2)(\alpha+1)}{2} $    \\
    $O$&$\vec e_1$&$4$&$14$\\
    $T$&$\vec e_1$&$3$&$10$\\
		\changed{$Y$}&$\vec e_1$&$10$&$66$\\
    \hline
  \end{tabular}

\end{table}

\subsection{\changed{Centered Measure}}
\label{sec:centered-measure}

\changed{
Since $\SO$ is a Lie group it can be equipped with an unique left invariant
Haar measure $\mu$. In order to define a corresponding left invariant measure
on the homogeneous space $\SO/\mathcal S$ we consider the quotient mapping}
\begin{equation*}
  \changed{
  \pi \colon \SO \to \SO/\mathcal S, \quad \pi(\vec R) = [\vec R]_{\mathcal S}}
\end{equation*}
\changed{that maps every rotation $\vec R \in \SO$ onto its coset $[\vec
R]_{\mathcal S} \in \SO/\mathcal S$. Together with the Haar measure the quotient mapping defines a left invariant measure
$\mu_{\mathcal S}$ on the quotient $\SO/\mathcal S$ via}
\begin{equation*}
\changed{  \mu_{\mathcal S}(A) = \mu(\pi^{-1}(A)),
  \quad \text{for any measurable set } A \subset \SO/\mathcal S.}
\end{equation*}
\changed{Accordingly, any embedding $\mathcal E \colon \SO/\mathcal S \to
  \R^{3^{\vec \alpha}}$ defines a push forward measure $\mathcal E \circ
  \mu_{\mathcal S}$
  on $\R^{3^{\vec \alpha}}$ via}
\begin{equation*}
  \mathcal E \circ \mu_{\mathcal S}(B) = \mu_{\mathcal S}(\mathcal E^{-1}(B)),
  \quad \text{for any measurable set } B \subset \R^{3^{\vec \alpha}}.
\end{equation*}
In the following Theorem we proof that for the centered embedding
$\tilde{\mathcal E}_{\vec u, \mathcal S}^{\vec \alpha}([\vec O]_{\mathcal S})$
the push forward measure
$\tilde{\mathcal E}_{\vec u, \mathcal S}^{\vec \alpha}([\vec O]_{\mathcal S})
\circ \mu_{\mathcal S}$ is centered in $\R^{3^{\vec \alpha}}$.

% Shifting the embedding into the affine subspace found in Lemma \ref{lem:skaMr}
% results in an embedding that maps the uniform distribution into a distribution
% with zero mean.

\begin{theorem}
  \label{theorem:centered}%%%For $E_u,S^\alpha$ in numerator |S|
  Let
  $\mathcal E_{\vec u, \mathcal S}^{\vec \alpha} \colon \SO/\mathcal S \to
  \R^{3^{\vec \alpha}}$ be the embedding defined in Definition~\ref{def:ESym}
  and let $\mu$ be the Haar measure on $\SO$. Then the centered
  embedding $ \tilde{\mathcal E}_{\vec u, \mathcal S}^{\vec \alpha}([\vec O]_{\mathcal S})$
  is an $\SO$\changed{-equivariant} map with
  \begin{equation*}
    \lVert\tilde {\mathcal E}_{\vec u,\mathcal S}^{\vec \alpha}([\vec
    O]_{\mathcal S})\rVert = \text{const},
    \quad [\vec O]_{\mathcal S} \in \SO/\mathcal S
  \end{equation*}
  and satisfies that the push forward measure
  $\tilde {\mathcal E}_{\vec u, \mathcal S}^\alpha\circ \mu_\mathcal S$ is centered as well, i.e.,
  its first moment satisfies
  \begin{equation*}
    \E \tilde{\mathcal E_{\vec u, \mathcal S}^\alpha} \circ\mu_\mathcal S=0.
  \end{equation*}
\end{theorem}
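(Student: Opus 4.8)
The plan is to establish the three claims in turn, treating the vanishing first moment as the substantial one. Equivariance is immediate: writing $\tilde{\mathcal E}_{\vec u,\mathcal S}^{\vec\alpha} = \mathcal E_{\vec u,\mathcal S}^{\vec\alpha} - \vec M$ with the shift tensor $\vec M = \bigl(\tfrac{1}{\alpha_1+1}\vec M_{\alpha_1},\ldots,\tfrac{1}{\alpha_n+1}\vec M_{\alpha_n}\bigr)$, Lemma~\ref{lem:Malpha} gives $\vec R\triangleright\vec M=\vec M$, so that Theorem~\ref{theorem:homo} together with the linearity of the action yields $\tilde{\mathcal E}_{\vec u,\mathcal S}^{\vec\alpha}(\vec R\triangleright[\vec O]_{\mathcal S}) = \vec R\triangleright\mathcal E_{\vec u,\mathcal S}^{\vec\alpha}([\vec O]_{\mathcal S}) - \vec R\triangleright\vec M = \vec R\triangleright\tilde{\mathcal E}_{\vec u,\mathcal S}^{\vec\alpha}([\vec O]_{\mathcal S})$. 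The constant-norm statement then follows verbatim from the argument of Corollary~\ref{cor:sphere}, since the action $\vec R\triangleright\cdot=(\otimes^{\vec\alpha}\vec R)\,\cdot$ is by orthogonal maps.

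For the first moment I would first pass from the push-forward measure to an integral over the rotation group. By the change of variables for push-forward measures and $\mu_{\mathcal S}(A)=\mu(\pi^{-1}(A))$ one has $\E\,\tilde{\mathcal E}_{\vec u,\mathcal S}^{\vec\alpha}\circ\mu_{\mathcal S} = \int_{\SO}\tilde{\mathcal E}_{\vec u,\mathcal S}^{\vec\alpha}([\vec R]_{\mathcal S})\,\mathrm d\mu(\vec R)$, so it suffices to show $\int_{\SO}\mathcal E_{\vec u,\mathcal S}^{\vec\alpha}([\vec R]_{\mathcal S})\,\mathrm d\mu(\vec R) = \vec M$, as the constant $\vec M$ integrates to $\vec M$ against the normalized Haar measure. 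Expanding the symmetrized embedding and using that the Haar measure on the compact group $\SO$ is right-invariant, each summand satisfies $\int_{\SO}\mathcal E_{\vec u}^{\vec\alpha}(\vec R\vec S)\,\mathrm d\mu = \int_{\SO}\mathcal E_{\vec u}^{\vec\alpha}(\vec R)\,\mathrm d\mu$, so the average over $\mathcal S$ collapses and it remains to compute $\int_{\SO}\mathcal E_{\vec u}^{\vec\alpha}(\vec R)\,\mathrm d\mu$ component-wise.

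The core is the single-component integral $I_\alpha=\int_{\SO}\otimes^{\alpha}(\vec R\vec u)\,\mathrm d\mu\in\otimes^\alpha\R^3$, which by left-invariance of $\mu$ satisfies $\vec Q\triangleright I_\alpha=I_\alpha$ for every $\vec Q\in\SO$, and which is symmetric since each $\otimes^{\alpha}(\vec R\vec u)$ is. For odd $\alpha$ I would pick $\vec Q$ to be the rotation by $\pi$ about an axis orthogonal to the unit vector $\vec u$, so that $\vec Q\vec u=-\vec u$; right-invariance then gives $I_\alpha=\int_{\SO}\otimes^\alpha(\vec R\vec Q\vec u)\,\mathrm d\mu=(-1)^\alpha I_\alpha=-I_\alpha$, hence $I_\alpha=\vec 0=\tfrac{1}{\alpha+1}\vec M_\alpha$. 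For even $\alpha$, $I_\alpha$ is an invariant symmetric tensor, and since the space of $\SO$-invariant tensors in $S^\alpha(\R^3)$ is one-dimensional and spanned by $\vec M_\alpha$, we have $I_\alpha=c\,\vec M_\alpha$; pairing with $\vec M_\alpha$, moving the inner product inside the integral, and using Lemma~\ref{lem:dotpro} gives $\dotprod{I_\alpha}{\vec M_\alpha}=\int_{\SO}1\,\mathrm d\mu=1$, while Lemma~\ref{lem:skaMr} gives $\dotprod{c\,\vec M_\alpha}{\vec M_\alpha}=c(\alpha+1)$, so $c=\tfrac{1}{\alpha+1}$. Assembling the components yields $\int_{\SO}\mathcal E_{\vec u}^{\vec\alpha}(\vec R)\,\mathrm d\mu=\vec M$ and therefore $\E\,\tilde{\mathcal E}_{\vec u,\mathcal S}^{\vec\alpha}\circ\mu_{\mathcal S}=\vec M-\vec M=0$.

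The step I expect to be the main obstacle is justifying, for even $\alpha$, that $I_\alpha$ must be a scalar multiple of $\vec M_\alpha$; this rests on the multiplicity-one occurrence of the trivial $\SO$-representation in $S^\alpha(\R^3)$, which I would invoke from the standard decomposition $S^\alpha(\R^3)\cong V_\alpha\oplus V_{\alpha-2}\oplus\cdots$ into spherical-harmonic irreducibles. The odd case needs no such input, being handled by the sign-flip above. Once the one-dimensionality is in hand, evaluating the constant through Lemmata~\ref{lem:dotpro} and~\ref{lem:skaMr} and collapsing the symmetrization via right-invariance are entirely routine.
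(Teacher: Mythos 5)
Your proposal is correct, but the heart of your argument differs from the paper's. The paper also reduces (implicitly, via bi-invariance of the Haar measure) to the single-component expectation $\E\,(\otimes^{\alpha}\vec R\vec u)$, but it identifies this expectation by pairing against \emph{all} rank-one power tensors $\otimes^{\alpha}\vec v$, $\vec v \in \mathbb S^{2}$, and explicitly evaluating the spherical moment $\int_{\mathbb S^{2}} (\vec\xi^{\top}\vec v)^{\alpha}\,\dx\sigma(\vec\xi)$, which is $0$ for odd $\alpha$ and $\tfrac{1}{\alpha+1}$ for even $\alpha$; combined with Lemma~\ref{lem:dotpro} this makes every such pairing of $\E\,\tilde{\mathcal E}_{\vec u}^{\alpha}(\vec R)$ vanish, and since the expectation is a symmetric tensor and the power tensors span $S^{\alpha}(\R^{3})$ (a fact the paper leaves implicit), it is zero. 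You instead exploit invariance: left-invariance makes $I_{\alpha}$ an $\SO$-invariant symmetric tensor, the odd case dies by the sign flip $\vec Q\vec u = -\vec u$, and the even case follows from multiplicity-one of the trivial representation in $S^{\alpha}(\R^{3}) \cong \mathcal H_{\alpha}\oplus\mathcal H_{\alpha-2}\oplus\cdots\oplus\mathcal H_{0}$, with the constant $c = \tfrac{1}{\alpha+1}$ pinned down by Lemmata~\ref{lem:dotpro} and~\ref{lem:skaMr}. What your route buys: no spherical moment integral is ever computed (the paper asserts that value without proof), the collapse of the $\mathcal S$-average by right-invariance is made explicit rather than tacit, and Lemma~\ref{lem:skaMr} gets used where the paper's proof does not need it. What it costs: you must import the harmonic decomposition of symmetric tensors, a representation-theoretic input of roughly the same standing as the spanning fact the paper silently relies on. Both arguments are complete modulo one such standard external fact, so yours is a legitimate, genuinely different proof.
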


\begin{proof}
  The $\SO$-\changed{equivariant-map}-property follows from Theorem \ref{theorem:homo} together with Lemma~\ref{lem:Malpha}.
	For $\vec R\in \SO$ and $\vec O\in \SO/\mathcal S$ there holds
  \begin{align*}
    \tilde{\mathcal E}_{\vec u,\mathcal S}^{\vec \alpha}(\vec R \triangleright [\vec O]_{\mathcal S})
    =\mathcal E_{\vec u,\mathcal S}^{\vec \alpha}(\vec R \triangleright [\vec O]_{\mathcal S})-\vec M_{\vec\alpha}
    =\vec R \triangleright \mathcal E_{\vec u,\mathcal S}^{\vec \alpha}([\vec O]_{\mathcal S})-\vec R\triangleright\vec M_{\vec \alpha}
    = \vec R \triangleright \tilde{\mathcal E}_{\vec u,\mathcal S}^{\vec \alpha}([\vec O]_{\mathcal S}).
  \end{align*}

  Assume $\vec R$ to be distributed according to the Haar measure on $\SO$.
  Then $\vec R\vec u$ is distributed according to the spherical Borel measure
  $\sigma$ normalized to $\sigma(\mathbb S^{2}) = 1$ for any $\vec u\in
  \S^2$. For the inner products with any vector $\vec v \in \mathbb S^2$ we
  calculate
  \begin{align*}
    \langle\E (\otimes ^\alpha \vec {Ru}), \otimes ^\alpha \vec v\rangle
    &=\E\langle\otimes^\alpha \vec{Ru},\otimes^\alpha \vec v\rangle=\E((\vec{Ru})^\top \vec v)^\alpha\\
    &=\int_{\mathbb S^2}(\vec \xi^\top \vec v)^\alpha\dx\sigma(\vec \xi)
    = \begin{cases}
      \changed{0} & \text{if } \alpha \text{ odd} \\
      \changed{\frac{1}{\alpha+1}} & \text{if } \alpha \text{ even}
    \end{cases}.
  \end{align*}
  If $\alpha$ is odd, the assertion follows directly, because
  $\vec M_\alpha = \vec 0$ in this case. By Lemma \ref{lem:dotpro} we have for even $\alpha$
  \begin{align*}
    \langle\E \tilde{\mathcal E}_{\vec u}^\alpha(\vec R),\otimes^\alpha \vec v\rangle
    &=\langle\E(\otimes ^\alpha (\vec {Ru})-\frac{1}{\alpha +1} \vec M_\alpha),\otimes^\alpha \vec v\rangle
      =\E\langle\otimes ^\alpha (\vec {Ru})-\frac{1}{\alpha+1} \vec M_\alpha,\otimes^\alpha \vec v\rangle\\
    &=\E\left(((\vec {Ru})^\top \vec v)^\alpha\right)-\frac{1}{\alpha+1}=0.
  \end{align*}
	Thanks to the rotational invariance of the tensors $\vec M_{\alpha}$ the image of centered embedding is also contained in a sphere.
\end{proof}

\subsection{Projection onto the Embedding}
\label{sec:proj-onto-embedd}

A central operation of embedding\changed{-}based methods is projecting a point of the
vector space back onto the manifold. For our embeddings
${\mathcal E \colon \SO/\mathcal S \to \mathbb R^{3^{ \vec \alpha}}}$ this
means that for an arbitrary tensor
$\vec T \in \R^{3^{ \vec \alpha}}$ we ask for the rotation
${\changed{[\vec R^{*}]_\mathcal S} \in \SO/\mathcal S}$ with minimum distance
$\norm{\mathcal E(\vec R^{*}) - \vec T}$ in the embedding.
This problem has a unique solution whenever $\vec T$ is sufficiently close to the
submanifold, cf. \cite{Lee12}.

Since, by Corollary~\ref{cor:sphere}, the submanifold
$\mathcal E_{\vec u,\mathcal S}^{\vec \alpha}(\SO/\mathcal S)
\subset \R^{3^{\vec \alpha}}$ is contained in a sphere, i.e., has
constant norm, the above minimization problem is equivalent to the
maximization problem
\begin{equation}
  \label{eq:maxProblem}
    \changed{[\vec R^{*}]_\mathcal S}
    = \argmax_{\vec R \in \SO/\mathcal S} J(\vec R), \quad
    J(\vec R) = \dotprod{\mathcal E_{\vec u,\mathcal S}^{\vec
      \alpha}(\vec R)}{\vec T}.
\end{equation}
For the symmetry group $C_{1}$, i.e. no symmetry,
$\vec u = (\vec u_{1},\ldots,\vec u_{n})$,
${\vec \alpha = (1,\ldots,1) \in \R^{n}}$ and
$\vec T = (\vec T_{1},\ldots,\vec T_{n}) \in \R^{3n}$ the functional
$J \colon \SO \to \R$ simplifies to
\begin{equation*}
  %\label{eq:maxProblemSimple}
  J(\vec R) = \sum_{i=1}^{n} \dotprod{\vec R \vec u_{i}}{\vec T_{i}}.
\end{equation*}
An explicit formula for its maximum  is known as the Kabsch Algorithm \cite{Ho87}.

\begin{lemma}\label{lem:kabsch}
  Let $\vec u_{1},\ldots,\vec u_{n}, \vec v_{1}, \ldots, \vec v_{n} \in
  \mathbb \R^{3}$ be two lists of vectors. Then the solution of the
  maximization problem
  \begin{equation*}
    \sum_{i=1}^{n} \dotprod{\vec R \vec u_{\changed{i}}}{\vec v_{\changed{i}}} \to \max,  \quad \vec R \in \SO
  \end{equation*}
  is given by
  \begin{equation*}
    \vec R = \vec V
    \begin{pmatrix}
      1 & 0 & 0 \\
      0 & 1 & 0 \\
      0 & 0 & \det \vec V \vec U^{T} \\
    \end{pmatrix}
    \vec U^{T},
  \end{equation*}
  where $\vec U \vec \Sigma \vec V^{T} = \vec H$ is the singular value decomposition of
  the $\changed{3\times 3}$-matrix
  \begin{equation*}
    \vec H = \sum_{i=1}^{n} \vec u_{i} \otimes \vec v_{i}.
  \end{equation*}
\end{lemma}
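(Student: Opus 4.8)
The plan is to recognize this as the classical orthogonal Procrustes / Wahba problem and prove it by reducing the objective to a trace functional, then maximizing the trace over $\SO$ using the singular value decomposition. Throughout I assume the notation of the excerpt and the determinant-sign correction that distinguishes $\SO$ from $\mathrm{O}(3)$.

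\medskip

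First I would rewrite the objective as a single trace. Using $\dotprod{\vec R \vec u_i}{\vec v_i} = \vec v_i^{T} \vec R \vec u_i = \tr(\vec R \vec u_i \vec v_i^{T})$ and summing over $i$, linearity of the trace gives
\begin{equation*}
  \sum_{i=1}^{n} \dotprod{\vec R \vec u_i}{\vec v_i}
  = \tr\Bigl(\vec R \sum_{i=1}^{n} \vec u_i \vec v_i^{T}\Bigr)
  = \tr(\vec R \vec H),
\end{equation*}
where I set $\vec H = \sum_{i=1}^{n} \vec u_i \otimes \vec v_i$, identifying the tensor product $\vec u_i \otimes \vec v_i$ with the rank-one matrix $\vec u_i \vec v_i^{T}$. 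So the problem becomes maximizing $\tr(\vec R \vec H)$ over $\vec R \in \SO$.

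\medskip

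Next I would substitute the singular value decomposition $\vec H = \vec U \vec \Sigma \vec V^{T}$, with $\vec U, \vec V$ orthogonal and $\vec \Sigma = \diag(\sigma_1, \sigma_2, \sigma_3)$ having nonnegative entries. By the cyclic property of the trace,
\begin{equation*}
  \tr(\vec R \vec H) = \tr(\vec R \vec U \vec \Sigma \vec V^{T})
  = \tr(\vec V^{T} \vec R \vec U \, \vec \Sigma).
\end{equation*}
Writing $\vec W = \vec V^{T} \vec R \vec U$, I note that $\vec W$ ranges over all orthogonal matrices with $\det \vec W = \det(\vec V^{T}) \det \vec R \det(\vec U) = \det(\vec V \vec U^{T})$ as $\vec R$ ranges over $\SO$. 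Thus the problem reduces to maximizing $\tr(\vec W \vec \Sigma) = \sum_{k} W_{kk} \sigma_k$ over orthogonal $\vec W$ of prescribed determinant. Since every entry of an orthogonal matrix satisfies $\abs{W_{kk}} \le 1$, and the $\sigma_k \ge 0$, the unconstrained orthogonal maximum is $\vec W = \vec I$; when $\det(\vec V \vec U^{T}) = 1$ this is admissible and optimal. When $\det(\vec V \vec U^{T}) = -1$ one cannot take $\vec W = \vec I$, and the best feasible diagonal choice is to flip the sign in the slot corresponding to the smallest singular value, giving $\vec W = \diag(1,1,-1)$; this is exactly the matrix with the $\det \vec V \vec U^{T}$ entry. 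Back-substituting $\vec R = \vec V \vec W \vec U^{T}$ recovers the claimed formula.

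\medskip

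\emph{The main obstacle} is the careful handling of the determinant constraint: one must justify that restricting to $\SO$ rather than $\mathrm{O}(3)$ forces the sign flip precisely in the smallest-$\sigma$ slot, and that this is genuinely optimal among all orthogonal $\vec W$ of the wrong determinant, not merely among diagonal ones. The clean way to close this gap is to show $\tr(\vec W \vec \Sigma) \le \sum_k \sigma_k$ always, with the achievable maximum under $\det \vec W = -1$ being $\sigma_1 + \sigma_2 - \sigma_3$ when the singular values are ordered $\sigma_1 \ge \sigma_2 \ge \sigma_3 \ge 0$. This follows from the entrywise bound $\abs{W_{kk}} \le 1$ together with the fact that a determinant-$(-1)$ orthogonal matrix must have at least one ``reflection'' contribution; the smallest singular value absorbs the unavoidable loss. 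I would note that if $\vec H$ has full rank the maximizer is unique, and remark that degenerate cases (repeated or vanishing singular values) leave the maximizer non-unique but the formula still returns a valid optimizer.
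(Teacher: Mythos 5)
The paper itself gives no proof of Lemma~\ref{lem:kabsch} — it only cites the literature (\cite{Ho87}) — so your SVD-based Procrustes argument is the natural route, and its first half is exactly right: the identity $\sum_{i}\dotprod{\vec R\vec u_{i}}{\vec v_{i}}=\tr(\vec R\vec H)$, the substitution $\vec W=\vec V^{T}\vec R\vec U$, the observation that $\vec W$ ranges over the orthogonal matrices with $\det\vec W=\det(\vec V\vec U^{T})$, and the case $\det(\vec V\vec U^{T})=1$, where $\abs{W_{kk}}\le 1$ and $\sigma_{k}\ge 0$ indeed force $\vec W=\vec I$ to be optimal.

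The genuine gap is the case $\det(\vec V\vec U^{T})=-1$, which is the actual content of the lemma. There you optimize only over \emph{diagonal} feasible $\vec W$ and then assert that the loss $2\sigma_{3}$ is unavoidable because of a ``reflection contribution''; but the entrywise bound $\abs{W_{kk}}\le 1$ alone gives only $\tr(\vec W\vec\Sigma)\le\sigma_{1}+\sigma_{2}+\sigma_{3}$, and nothing you wrote excludes a non-diagonal $\vec W$ with $\det\vec W=-1$ and $\tr(\vec W\vec\Sigma)>\sigma_{1}+\sigma_{2}-\sigma_{3}$. What must be proved is $\max\{\tr(\vec W\vec\Sigma)\colon \vec W^{T}\vec W=\vec I,\ \det\vec W=-1\}=\sigma_{1}+\sigma_{2}-\sigma_{3}$, where $\sigma_{1}\ge\sigma_{2}\ge\sigma_{3}\ge 0$. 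An elementary way to close it: any orthogonal $\vec W$ with $\det\vec W=-1$ has eigenvalues $-1,\e^{\pm i\theta}$, hence $\tr\vec W=-1+2\cos\theta\le 1$; therefore
\begin{equation*}
  \tr(\vec W\vec\Sigma)
  =(\sigma_{1}-\sigma_{3})W_{11}+(\sigma_{2}-\sigma_{3})W_{22}+\sigma_{3}\tr(\vec W)
  \le(\sigma_{1}-\sigma_{3})+(\sigma_{2}-\sigma_{3})+\sigma_{3}
  =\sigma_{1}+\sigma_{2}-\sigma_{3},
\end{equation*}
with equality at $\vec W=\diag(1,1,-1)$. (Alternatively, a critical-point argument: at a maximizer over the compact set $\{\det \vec W=-1\}$ the matrix $\vec W\vec\Sigma$ must be symmetric, so its eigenvalues are $\pm\sigma_{k}$ with an odd number of minus signs, and the best such sum is $\sigma_{1}+\sigma_{2}-\sigma_{3}$.) With either argument inserted, your proof is complete; note also that it tacitly uses the decreasing ordering of the singular values, which should be stated, since the sign flip must sit in the slot of $\sigma_{3}$.
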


In the case of arbitrary symmetry groups and a general embedding
$\mathcal E_{\vec u,\mathcal S}^{\vec \alpha}$ we are not able to give such a
closed form solution. For this reason, we propose to solve the maximization
problem in equation~\eqref{eq:maxProblem} numerically using a manifold
gradient method \cite{Ud94}. The next theorem provides an explicit formula for
the required gradient of $J$.

\begin{theorem}
  \label{theorem:grad}
  Let $\vec T \in \R^{3^{\alpha}}$, $\vec R \in \SO$, $\vec s$ an arbitrary
  skew\changed{-}symmetric matrix and hence, $\vec s \vec R \in T_{\vec R}\SO$ a
  tangential vector at $\vec R$. Then the gradient of $J$ in direction
  $\vec s \vec R$ is given by the inner product
  \begin{equation*}
    \nabla_{\vec s \vec R}J(\vec R)
    =\alpha \dotprod{ \vec s \triangleright_{1} (\vec
    R\triangleright \vec E)}{\vec T}
  \end{equation*}
  where $\vec E = \mathcal E_{\vec u}^\alpha (\vec I) \in \R^{3^{\alpha}}$ denotes the embedding
  of the identity matrix and $\triangleright_{1}$ denotes the multiplication
  of the matrix $\vec s$ with a tensor $\vec T \in \R^{3^{\alpha}}$ with
  respect to the first dimension of $\vec T$, i.e.,
  \begin{equation*}
    [\vec s \triangleright_{1} \vec T]_{k_{1},\ldots,k_{\alpha}}
    = \sum_{\ell_{1}=1}^{3} \vec s_{k_{1} \ell_{1}}  \vec T_{\ell_{1},k_{2},\ldots,k_{\alpha}}.
  \end{equation*}
\end{theorem}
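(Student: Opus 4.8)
The plan is to compute the directional derivative straight from the definition by running $J$ along a convenient curve through $\vec R$ whose velocity is the prescribed tangent vector $\vec s \vec R$. Since $\vec s$ is skew-symmetric, the matrix exponential $\e^{t\vec s}$ lies in $\SO$ for every $t \in \R$, so $\gamma(t) = \e^{t\vec s}\vec R$ is a smooth curve in $\SO$ with $\gamma(0) = \vec R$ and $\gamma'(0) = \vec s\vec R$. Using $\mathcal E_{\vec u}^\alpha(\vec R) = \otimes^\alpha(\vec R\vec u) = \vec R\triangleright\vec E$ and abbreviating $\vec w = \vec R\vec u$, I would first rewrite the functional along this curve as
\begin{equation*}
  J(\gamma(t)) = \dotprod{\otimes^\alpha(\e^{t\vec s}\vec w)}{\vec T}.
\end{equation*}

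Next I would differentiate the tensor power at $t=0$. Since $\frac{\dx}{\dx t}\big|_{t=0}\e^{t\vec s}\vec w = \vec s\vec w$, the Leibniz rule for the multilinear tensor product yields
\begin{equation*}
  \frac{\dx}{\dx t}\Big|_{t=0}\otimes^\alpha(\e^{t\vec s}\vec w)
  = \sum_{j=1}^{\alpha} \vec w\otimes\cdots\otimes\underbrace{\vec s\vec w}_{\text{slot }j}\otimes\cdots\otimes\vec w,
\end{equation*}
a sum of $\alpha$ tensors, each carrying $\vec s\vec w$ in exactly one slot and $\vec w$ in the other $\alpha-1$ slots. Pairing with $\vec T$ and exchanging the derivative with the (bilinear, finite-dimensional) inner product gives $\nabla_{\vec s\vec R}J(\vec R) = \sum_{j=1}^\alpha \dotprod{P_j}{\vec T}$, where $P_j$ is the $j$-th summand. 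The term $P_1$ is exactly $\vec s\triangleright_1(\vec R\triangleright\vec E)$, because $[\vec s\triangleright_1(\otimes^\alpha\vec w)]_{k_1,\ldots,k_\alpha} = (\vec s\vec w)_{k_1}w_{k_2}\cdots w_{k_\alpha}$.

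The remaining step, which I expect to be the actual crux, is to show that all $\alpha$ terms $\dotprod{P_j}{\vec T}$ coincide, so that the sum collapses to the factor $\alpha$ in front of the $j=1$ term. This is where symmetry enters: each $P_j$ arises from $P_1$ by transposing tensor slots $1$ and $j$, and the permutation action on tensors is orthogonal, so $\dotprod{P_j}{\vec T} = \dotprod{P_1}{\tau_{1j}\vec T}$. Because $\mathcal E_{\vec u}^\alpha(\vec R) = \otimes^\alpha\vec w$ is a fully symmetric tensor, $J$ — and hence its directional derivative — depends only on the symmetric part of $\vec T$; I would therefore assume without loss of generality that $\vec T$ is symmetric, whence $\tau_{1j}\vec T = \vec T$ for every $j$, all summands equal $\dotprod{P_1}{\vec T}$, and
\begin{equation*}
  \nabla_{\vec s\vec R}J(\vec R) = \alpha\,\dotprod{\vec s\triangleright_1(\vec R\triangleright\vec E)}{\vec T}.
\end{equation*}
Equivalently, one avoids any hypothesis on $\vec T$ by observing that $\frac1\alpha\sum_{j=1}^\alpha P_j$ is precisely the full symmetrization of $P_1$ (since $P_1$ is already symmetric in slots $2,\ldots,\alpha$) and that moving this symmetrization onto $\vec T$ inside the self-adjoint inner product recovers only the symmetric part of $\vec T$, which is all $J$ sees. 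The general multi-component case $\vec\alpha=(\alpha_1,\ldots,\alpha_n)$ then follows immediately by linearity, summing the per-component contributions.
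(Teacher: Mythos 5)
Your proposal is correct and takes essentially the same route as the paper's proof: differentiate the tensor power (the paper uses the linear path $\vec R + h\vec s\vec R$ and discards higher-order terms in $h$, where you use the exponential curve $\e^{t\vec s}\vec R$ --- the two are equivalent), expand by the Leibniz rule into $\alpha$ slot-wise terms, and collapse the sum to $\alpha$ times the first term via symmetry. If anything, you are more careful than the paper at the crux: the paper justifies the collapse only with the words ``since the tensor $\vec E$ is symmetric,'' while you correctly observe that, because $\vec s\triangleright_1(\vec R\triangleright\vec E)$ is itself not symmetric, the collapse additionally requires pairing against the symmetric part of $\vec T$ (equivalently, assuming $\vec T$ symmetric, which holds in the intended application where $\vec T$ is an average of embedded points, each of which is a symmetric tensor).
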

\begin{proof}
  First of all we note that by Theorem \ref{theorem:homo} the functional $J$
  can be written as
  \begin{equation*}
    J(\vec R) = \dotprod{\vec R \triangleright \vec E}{\vec T}, \quad \vec R\in \SO.
  \end{equation*}
  Considering now a tangential vector $\vec s \vec R \in T_{\vec R} \SO$ the
  corresponding directional derivative \changed{is}
  \begin{align*}
    \nabla_{\vec s \vec R}J(\vec R)
    &=\lim_{h\rightarrow 0}\frac{1}{h}\Bigl(
      \dotprod{(\vec R+h \vec s \vec R)\triangleright \vec E}{\vec T}
      -\dotprod{\vec R\triangleright \vec E}{\vec T}\Bigr)\\
    &= \lim_{h\rightarrow 0}\frac{1}{h}
      \Bigl< \bigl(\otimes^{\alpha}(\vec R+h\vec s \vec R) - \otimes^{\alpha}\vec R\bigr)
      \vec E, \vec T \Bigr>.
  \end{align*}
  In the difference of the tensor products only the terms with $h^{1}$ remain,
  as all terms with higher power of $h$ converge to zero. Since the tensor
  $\vec E$ is symmetric the derivative simplifies further to
  \begin{align*}
    \nabla_{\vec s \vec R}J(\vec R)
    =\sum_{i=0}^{\alpha-1}
    \dotprod{\left(\otimes^i \vec R \otimes \vec s\vec R
    \otimes^{\alpha-1-i}\vec R\right) \vec E}{\vec T}
    = \alpha\dotprod{\vec s \triangleright_{1} (\vec R \triangleright \vec E)}{\vec T}.
  \end{align*}
\end{proof}
\begin{remark}
In the theorem above, we considered \changed{only} the case $\alpha\in \R$, i.e. $n=1$. For the case with multiple components, we have to sum over all components in the function
\[J(\vec R)=\sum_{i=1}^n \left\langle\mathcal E_{\vec u_i}^{\alpha_i}(\vec R),\vec T_i\right\rangle,\]
as well as in the gradient
\[ \nabla_{\vec s \vec R}J(\vec R)
    = \sum_{i=1}^n\alpha_i\dotprod{\vec s \triangleright_{1} (\vec R
      \triangleright \mathcal E_{\vec u_i}^{\alpha_i}(\vec I))}{\vec T_i}.\]
\end{remark}

\section{Distance Preservation}
\label{sec:metric-properties}

In this section we are going to investigate how well the embeddings defined in
Section \ref{sec:general-framework} preserve the geodesic distance between any
two rotations.  \changed{While the rotation group $\SO$ as a submanifold of
  $\R^{3 \times 3}$ inherits a canonical Riemanian structure it differs by the
  factor $\sqrt{2}$ from the commonly used geodesic distance
  \begin{equation}\label{eq:distSO3}
    d(\vec O_1,\vec O_2)=\arccos\left(\frac 12 \left(-1+\tr(\vec O_1^{\top}\vec O_2)\right)\right)
  \end{equation}
  on the rotation group, which has the nice interpretation of being the angle
  of rotation between the two rotations $\vec O_{1}, \vec O_{2} \in \SO$.}
For cosets $[\vec O_1]_{\mathcal S},[\vec O_2]_{\mathcal S}\in \SO/\mathcal S$
the geodesic distance~\eqref{eq:distSO3} becomes
\begin{equation}\label{eq:distSO3S}
  d([\vec O_1]_S,[\vec O_2]_S)
  = \min_{\vec R\in \mathcal S} d(\vec O_1 \vec R,\vec O_2),
\end{equation}
i.e., the minimum geodesic distance between any elements of the cosets
$[\vec O_1]_S$ and $[\vec O_2]_S$.
We first analyze this problem locally.

\subsection{\changed{Locally} Isometric Embeddings}
\label{sec:isometric-embeddings}

\changed{Let us} recall that a differentiable embedding
$\mathcal E \colon \mathcal M \to \R^{d}$ is \changed{locally} isometric if its differential
$d \mathcal E \colon T_{m} \mathcal M \to T_{\mathcal E(m)}\mathcal E(\mathcal
M)$ at each point $m \in \mathcal M$ is an isometry between vector spaces.
Since in our setting in both spaces, $\SO/\mathcal S$ and
$\R^{3^{\vec \alpha}}$, the metric is invariant with respect to the action
$\triangleright$ of $\SO$ and the embedding is an $\SO$\changed{-equivariant} map, it
suffices to prove isometry at the identity
${[\vec I]_{\mathcal S} \in \SO/\mathcal S}$ only.

In order to identify \changed{locally} isometric embeddings within our framework we need to
generalize it slightly by multiplying the components by different weights
${\vec \beta = (\beta_{1},\ldots,\beta_{n}) \in \R^{n}}$, i.e., we define
\begin{align*}
  \mathcal E_{\vec u}^{\vec \alpha,\vec \beta}(\vec R)
  = \left(\beta_1\otimes^{\alpha_{1}} \vec R \vec u_{1}, \ldots,
  \beta_n\otimes^{\alpha_{n}} \vec R \vec u_{n} \right)
\end{align*}
together with its symmetrization
\begin{equation}
  \label{eq:EbetaSym}
  \mathcal E_{\vec u,\mathcal S}^{\vec \alpha, \vec \beta} \colon
  \SO/\mathcal S \to  \R^{3^{\vec \alpha}}, \qquad
  \mathcal E_{\vec u,\mathcal S}^{\vec \alpha,\vec \beta}([\vec O]_{\mathcal S})
  = \frac{1}{\abs{\mathcal S}}\sum_{\vec S \in \mathcal S} \mathcal E_{\vec u}^{\vec \alpha,\vec \beta}(\vec O \vec S).
\end{equation}
Choosing the weights $\vec \beta$ carefully will allow us to explicitly define
\changed{locally} isometric embeddings for the quotients $\SO/\mathcal S$ of $\SO$ with
respect to all crystallographic symmetry groups.

We shall analyze the derivative
$d \mathcal E_{\vec u,\mathcal S}^{\vec \alpha, \vec \beta}([\vec I]_{\mathcal
  S}) \vec s^{(k)}$ of the embedding with respect to the following orthogonal
basis of the tangential space $T_{\vec I} \SO$ given by the
skew\changed{-}symmetric matrices
\begin{equation*}%\label{eq:s}
  \vec s^{(1)} =
  \begin{pmatrix}
    0&0&0\\
    0&0&-1\\
    0&1&0
  \end{pmatrix},\quad
  \vec s^{(2)} =
  \begin{pmatrix}
    0&0&-1\\0&0&0\\1&0&0
  \end{pmatrix},\quad
  \vec s^{(3)} =
  \begin{pmatrix}
    0&-1&0\\1&0&0\\0&0&0
  \end{pmatrix}.
\end{equation*}

\changed{The basis vectors $\vec s^{(1)}$, $\vec s^{(2)}$, $\vec s^{(3)}$ are
  normalized to $\sqrt{2}$, which is exactly the factor between
   the geodesic distance defined in \eqref{eq:distSO3} and the geodesic
   distance induced by the canonical embedding. Hence, we obtain the following
   characterization on local isometry.}

%\changed{ Let $\vec R$ be the rotation with axis
%$\vec \xi= (\xi_1,\xi_2,\xi_3)^\top$ and angle $\omega$. Then for the geodesic distance
%we have $d(\vec R,\vec I)=\omega$. Since $\vec R$ can be written as }\textcolor[rgb]{1,0,0}{(andere quelle als wikipedia finden)}
%\begin{equation*}
%\changed{\vec R=\begin{pmatrix}
%\xi_1^2(1-\cos\omega)+\cos\omega&\xi_1\xi_2(1-\cos\omega)-\xi_3\sin\omega&\xi_1\xi_3(1-\cos\omega)+\xi_2\sin\omega\\
%\xi_1\xi_2(1-\cos\omega)+\xi_3\sin\omega&\xi_2^2(1-\cos\omega)+\cos\omega&\xi_2\xi3(1-\cos\omega)-\xi_1\sin\omega\\
%\xi_1\xi_3(1-\cos\omega)-\xi_2\sin\omega& \xi_2\xi_3(1-\cos\omega)+\xi_1\sin\omega&\xi_3^2(1-\cos\omega)+\cos\omega
%\end{pmatrix},}
%\end{equation*}
%\changed{the associated tangent vector }
%$$\changed{\vec t=\begin{pmatrix}
%\xi_1^2\sin\omega-\sin\omega&\xi_1\xi_2\sin\omega-\xi_3\cos\omega&\xi_1\xi_3\sin\omega+\xi_2\cos\omega\\
 %\xi_1\xi_2\sin\omega+\xi_3\cos\omega&\xi_2^2\sin\omega-\sin\omega&\xi_2\xi_3\sin\omega-\xi_1\cos\omega\\
%\xi_1\xi_3\sin\omega-\xi_2\cos\omega& \xi_2\xi_3\sin\omega+\xi_1\cos\omega&\xi_3^2\sin\omega-\sin\omega
%\end{pmatrix}}$$
%\changed{in $T_{\vec R}\R^{3\times 3}$ has the length }
%$$\changed{\norm{\vec t}^2=\sum_{i=1}^3\xi_i^4\sin^2\omega+3\sin^2\omega-2\sum_{i=1}^3\xi_i^2\sin^2\omega+\sum_{\stackrel{i,j=1}{i\neq j}}^3\xi_i^2\xi_j^2\sin^2\omega+\sum_{i=1}^3\xi_i^2\cos^2\omega=
%2.}$$
%\changed{Therefore the embedding from $\SO$ in the $3\times 3$-matrices is locally isometric, up to the factor $\sqrt{2}$, so that the tangent vectors $\vec s^{(i)}$ are induced by the geodesic metric in $\SO/\mathcal S$.  }
\begin{lemma}
  \label{lemma:iso}
  The mapping
  $\mathcal E_{\vec u,\mathcal S}^{\vec \alpha, \vec \beta} \colon \SO/\mathcal S \to
  \R^{3^{\vec \alpha}}$ as defined in \eqref{eq:EbetaSym} is \changed{locally} isometric if
  and only if the vectors
  $d \mathcal E_{\vec u,\mathcal S}^{\vec \alpha, \vec \beta}([\vec I]_{\mathcal S}) \vec
  s^{(k)}$ are orthonormal in $\R^{3^\vec \alpha}$.
\end{lemma}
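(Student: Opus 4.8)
The plan is to reduce the global condition of local isometry to a purely linear-algebraic statement at a single point, exploiting that the three matrices $\vec s^{(1)}, \vec s^{(2)}, \vec s^{(3)}$ already form an orthonormal basis of the tangent space in the metric attached to the geodesic distance \eqref{eq:distSO3}. As observed in the paragraph preceding the lemma, both $\SO/\mathcal S$ and $\R^{3^{\vec\alpha}}$ carry metrics that are invariant under the action $\triangleright$ of $\SO$, and $\mathcal E_{\vec u,\mathcal S}^{\vec\alpha,\vec\beta}$ is $\SO$-equivariant. Consequently, if the differential $d\mathcal E_{\vec u,\mathcal S}^{\vec\alpha,\vec\beta}$ is an isometry at the identity coset $[\vec I]_{\mathcal S}$, then transporting this property by the group action yields isometry at every coset, so it suffices to analyze the differential at $[\vec I]_{\mathcal S}$.

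Next I would make the tangent space identification explicit. Since $\mathcal S$ is finite and acts freely on $\SO$ by right multiplication, the quotient map $\pi$ is a local diffeomorphism, whence $T_{[\vec I]_{\mathcal S}}(\SO/\mathcal S)$ is canonically identified with $T_{\vec I}\SO = \so$, spanned by $\vec s^{(1)}, \vec s^{(2)}, \vec s^{(3)}$. By the remark on their normalization to $\sqrt{2}$ — exactly the factor relating the Frobenius metric to \eqref{eq:distSO3} — the basis $\{\vec s^{(k)}\}_{k=1}^{3}$ is orthonormal with respect to the metric on $\SO/\mathcal S$ induced by \eqref{eq:distSO3}, i.e. $\dotprod{\vec s^{(j)}}{\vec s^{(k)}} = \delta_{jk}$ in that metric.

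Finally I would invoke the elementary fact that a linear map between inner product spaces preserves the inner product if and only if it maps one (equivalently, any) orthonormal basis to an orthonormal system. For the direction ``$\Leftarrow$'', assuming the images $\vec w^{(k)} := d\mathcal E_{\vec u,\mathcal S}^{\vec\alpha,\vec\beta}([\vec I]_{\mathcal S})\vec s^{(k)}$ are orthonormal in $\R^{3^{\vec\alpha}}$, any tangent vectors $\vec v = \sum_k a_k \vec s^{(k)}$ and $\vec v' = \sum_k b_k \vec s^{(k)}$ satisfy $\dotprod{d\mathcal E\,\vec v}{d\mathcal E\,\vec v'} = \sum_{k,l} a_k b_l \dotprod{\vec w^{(k)}}{\vec w^{(l)}} = \sum_k a_k b_k = \dotprod{\vec v}{\vec v'}$, so the differential is an isometry at $[\vec I]_{\mathcal S}$, and by the reduction above everywhere. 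For ``$\Rightarrow$'', if the differential is an isometry then $\dotprod{\vec w^{(j)}}{\vec w^{(k)}} = \dotprod{\vec s^{(j)}}{\vec s^{(k)}} = \delta_{jk}$, which is precisely orthonormality of the images.

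I expect the only genuinely delicate points to be the bookkeeping behind the reduction to the identity and the claim that $\{\vec s^{(k)}\}$ is orthonormal in the intended metric; both rest on the $\sqrt{2}$-normalization and on the $\SO$-invariance of the two metrics, which must be stated carefully enough that the clean equivalence in the last paragraph applies verbatim. I would also note that orthonormality of the images automatically forces the differential to be injective, so the characterization simultaneously certifies that the map is a genuine immersion.
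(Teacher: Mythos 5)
Your proof is correct and follows essentially the same route as the paper: the paper's own (very terse) proof likewise reduces to the identity coset via $\SO$-equivariance and metric invariance (stated in the text immediately preceding the lemma), uses the $\sqrt{2}$-normalization of the $\vec s^{(k)}$, and then invokes exactly the linear-algebra fact that a linear map is an isometry if and only if it sends an orthonormal basis to an orthonormal system. You have merely spelled out the bookkeeping (the covering-map identification of $T_{[\vec I]_{\mathcal S}}(\SO/\mathcal S)$ with $T_{\vec I}\SO$ and the explicit bilinearity computation) that the paper leaves implicit.
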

\begin{proof}
  The mapping
  $d \mathcal E_{\vec u,\mathcal S}^{\vec \alpha, \vec \beta}([\vec I]_{\mathcal S})$
  is linear and $\{\vec s^{(k)}\}_{k=1}^3$ is a basis in $T_{\vec
    I}\SO$. Hence,
  $\mathcal E_{\vec u,\mathcal S}^{\vec \alpha, \vec \beta}$ is \changed{locally} isometric if
  and only if the vectors
  $d \mathcal E_{\vec u,\mathcal S}^{\vec \alpha, \vec \beta}([\vec I]_{\mathcal S})
  \vec s^{(k)}$ are orthonormal in the tangent space
  $T_{\mathcal E(\vec I)}\R^{3^\vec \alpha}$.
\end{proof}

For the differential of the mapping
$\mathcal E_{\vec u,\mathcal S}^{\vec \alpha \vec \beta}$ we have the
following lemma.

\begin{lemma}\label{lem:vecpro}
  Let $\alpha \in \mathbb{N}$, $\vec u \in \mathbb S^{2}$ \changed{be} an arbitrary direction and
  $\vec s \in T_{\vec I}\SO$ \changed{be} an arbitrary skew-symmetric matrix. Then
  \begin{equation*}
    d \mathcal E_{\vec u}^{\alpha}(\vec I) \vec s
    = \sum_{i=0}^{\alpha-1} \left(\otimes^{i} \vec u \right) \otimes \vec s \vec u
    \otimes \left(\otimes^{\alpha-i-1} \vec u\right).
  \end{equation*}
\end{lemma}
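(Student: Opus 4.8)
The plan is to realize the tangent vector $\vec s$ by an explicit curve in $\SO$ and then differentiate the tensor power by the Leibniz rule. First I would recall from Definition~\ref{def:E} that, for a single component, $\mathcal E_{\vec u}^{\alpha}(\vec R) = \otimes^{\alpha}(\vec R \vec u)$, so that $\mathcal E_{\vec u}^{\alpha}$ depends on $\vec R$ only through the image vector $\vec R \vec u \in \R^{3}$. This reduces the differential to differentiating an $\alpha$-fold tensor power of a single $\R^{3}$-valued curve.

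Next I would choose a smooth curve realizing $\vec s \in T_{\vec I}\SO$. Since $\vec s$ is skew-symmetric, $\vec R(h) = \exp(h \vec s)$ stays in $\SO$ for all $h$, with $\vec R(0) = \vec I$ and $\dot{\vec R}(0) = \vec s$. Setting $\vec v(h) = \vec R(h)\vec u$ I obtain $\vec v(0) = \vec u$ and $\dot{\vec v}(0) = \vec s \vec u$, so that along this curve the embedding reads $h \mapsto \otimes^{\alpha} \vec v(h)$.

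The key step is the product rule for the $\alpha$-fold tensor power. Since $\vec w \mapsto \otimes^{\alpha}\vec w$ is a homogeneous polynomial of degree $\alpha$ in its argument, differentiating along $\vec v(h)$ gives
\begin{equation*}
  d \mathcal E_{\vec u}^{\alpha}(\vec I)\vec s
  = \frac{d}{dh}\Big|_{h=0}\otimes^{\alpha}\vec v(h)
  = \sum_{i=0}^{\alpha-1}\bigl(\otimes^{i}\vec v(0)\bigr)\otimes \dot{\vec v}(0)\otimes\bigl(\otimes^{\alpha-1-i}\vec v(0)\bigr).
\end{equation*}
Substituting $\vec v(0) = \vec u$ and $\dot{\vec v}(0) = \vec s\vec u$ yields the asserted formula. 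Equivalently, one can expand $\otimes^{\alpha}(\vec u + h\,\vec s\vec u)$ and retain only the terms linear in $h$, exactly as in the proof of Theorem~\ref{theorem:grad}; each such term places one factor $\vec s\vec u$ in one of the $\alpha$ slots and $\vec u$ in the remaining $\alpha-1$ slots.

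There is no serious obstacle here: the only point requiring care is the bookkeeping in the Leibniz expansion, namely that differentiating the factor in position $i+1$ (indexing from $0$) produces the summand carrying $\vec s\vec u$ in that slot, and that all contributions of order $h^{2}$ and higher drop out in the limit $h\to 0$.
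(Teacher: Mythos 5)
Your proposal is correct and follows essentially the same route as the paper: the paper likewise realizes $\vec s$ by a curve $\gamma(t)$ in $\SO$ with $\gamma(0)=\vec I$, $\dot\gamma(0)=\vec s$ (you merely make the choice $\gamma(h)=\exp(h\vec s)$ explicit) and then differentiates $\otimes^{\alpha}(\gamma(t)\vec u)$ by the Leibniz rule to obtain the sum over the $\alpha$ slots. The extra care you note about discarding higher-order terms in $h$ matches the paper's treatment in Theorem~\ref{theorem:grad}.
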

\begin{proof}
  Let $\gamma(t)$ be a curve in $\SO$ such that $\dot{\gamma}(0)=\vec s$ and
  $\gamma(0)=\vec I$.  The image of the map
  $d \mathcal E_{\vec u,\mathcal S}^{\vec \alpha, \vec \beta}([\vec
  I]_{\mathcal S})$ of $\vec s$ is given by
  \begin{equation*}
    d \mathcal E_{\vec u}^{\alpha}(\vec I) \vec s
    = \frac{\dx}{\dx t}\left(\otimes^\alpha \left(\gamma(t)\cdot \vec u\right)\right)\bigg|_{t = 0}.
  \end{equation*}
  With the chain rule it follows
  \begin{align*}
    d \mathcal E_{\vec u}^{\alpha}(\vec I) \vec s
    &= \sum_{i=0}^{\alpha-1} \left(\otimes^i(\gamma(t)\cdot \vec u )\otimes \dot{\gamma}(t)\vec u\otimes(\otimes ^{\alpha-i-1}\gamma(t)\vec u)\right)\bigg|_{t = 0}\\
    &=\sum_{i=0}^{\alpha-1} \left(\otimes^{i} \vec u \right) \otimes \vec s \vec u
      \otimes \left(\otimes^{\alpha-i-1} \vec u\right).
  \end{align*}
\end{proof}

In the following we will find \changed{locally} isometric embeddings for all crystallographic
symmetry groups. Therefore, we will proceed as follows. First we consider the
cyclic groups $C_{k}$, $k \in \N$, followed by the dihedral groups $D_{k}$,
$k \in \N$ and finally the \changed{tetrahedral} group $T$, the \changed{octahedral} group $O$ \changed{and the icosahedral group $Y$}. The
parameters for these locally isometric embeddings are summarized in
Table~\ref{tab:embnew}. \changed{The differences to the embeddings in \cite{ArJuSc18} are marked in magenta.}
 For the cyclic and the dihedral groups we assume the
major rotational axis to be aligned in $\vec e_{1}$--direction and the two\changed{-}fold
axis parallel to $\vec e_{2}$.

\changed{For the symmetry group $C_1$ the canonical embedding \eqref{eq:1} of $\SO$ in
$\R^{3 \times 3}$ is by definition locally isometric, up to the factor $\sqrt{2}$,
so multiplication of all components with $\sqrt{2}^{-1}$ leads to local isometry.}
The symmetry group $C_2$ is a special case, because in
contrast to $C_k$ for $k>2$ the vectors $\vec{Oe}_2$ for
$\vec O\in C_k$ do not span the plane orthogonal to $\vec e_1$. For this
reason we need to add an additional component in contrast to the embedding in
\cite{ArJuSc18}.

\begin{theorem}
  \label{theorem:isoC2}
  Let $\vec u = (\vec e_{1},\vec e_{2},\vec e_{3})$, $\vec \alpha = (1,2,2)$
  and $\vec \beta = (\frac{1}{\sqrt{2}},\frac{1}{2},\frac{1}{2} )$. Then
  $\mathcal E_{\vec u,C_{2}}^{\vec \alpha, \vec \beta}$ is a \changed{locally} isometric
  embedding.
\end{theorem}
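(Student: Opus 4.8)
The plan is to invoke Lemma~\ref{lemma:iso} and thereby reduce the claim to showing that the three tangent images $d \mathcal E_{\vec u,C_2}^{\vec \alpha, \vec \beta}([\vec I]_{C_2})\,\vec s^{(k)}$, $k=1,2,3$, form an orthonormal system in $\R^{3^{\vec \alpha}}$. Before computing I would first record a simplification specific to $C_2 = \{\vec I, \vec S\}$ with $\vec S = \diag(1,-1,-1)$: since $\vec S \vec e_1 = \vec e_1$ and $\otimes^2(\vec O \vec S \vec e_j) = \otimes^2(-\vec O \vec e_j) = \otimes^2(\vec O \vec e_j)$ for $j=2,3$, the unsymmetrized map $\mathcal E_{\vec u}^{\vec \alpha, \vec \beta}$ is already constant on $C_2$-cosets. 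Hence the symmetrization in \eqref{eq:EbetaSym} is trivial, $\mathcal E_{\vec u,C_2}^{\vec \alpha, \vec \beta} \circ \pi = \mathcal E_{\vec u}^{\vec \alpha, \vec \beta}$, and I may compute the differential directly from the lift $d \mathcal E_{\vec u}^{\vec \alpha, \vec \beta}(\vec I)$.

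Next I would apply Lemma~\ref{lem:vecpro} slot by slot. The first component ($\alpha_1 = 1$, weight $\tfrac{1}{\sqrt 2}$) contributes $\tfrac{1}{\sqrt 2}\,\vec s \vec e_1$, while each quadratic component ($\alpha = 2$, weight $\tfrac12$) contributes $\tfrac12(\vec s \vec e_j \otimes \vec e_j + \vec e_j \otimes \vec s \vec e_j)$ for $j = 2,3$. Substituting the basis relations $\vec s^{(1)}\vec e_1 = \vec 0$, $\vec s^{(1)}\vec e_2 = \vec e_3$, $\vec s^{(1)}\vec e_3 = -\vec e_2$ and the analogous identities for $\vec s^{(2)}, \vec s^{(3)}$, every image $d\mathcal E(\vec I)\vec s^{(k)}$ becomes an explicit triple built from the elementary vectors $\vec e_i$ and the symmetric tensors $\vec e_i \otimes \vec e_j + \vec e_j \otimes \vec e_i$.

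Then I would assemble the $3\times 3$ Gram matrix. Using $\norm{\vec e_i \otimes \vec e_j + \vec e_j \otimes \vec e_i}^2 = 2$ for $i \ne j$, a short calculation shows each image has unit norm, the weights being exactly what balances the linear against the quadratic contributions (for instance $\tfrac12 \cdot 1 + \tfrac14 \cdot 2 = 1$ for $\vec s^{(2)}$ and $\vec s^{(3)}$, and $\tfrac14 \cdot 2 + \tfrac14 \cdot 2 = 1$ for $\vec s^{(1)}$). All three pairwise inner products vanish because, within every slot, the nonzero contributions coming from distinct $\vec s^{(k)}$ are supported on orthogonal elementary tensors, i.e. their index multisets never coincide. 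By Lemma~\ref{lemma:iso} this establishes the local isometry.

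Finally, to confirm that the map is a genuine embedding I would argue injectivity on $\SO/C_2$ directly: the first component recovers $\vec R \vec e_1$ exactly, whereas $\otimes^2 \vec R \vec e_2$ and $\otimes^2 \vec R \vec e_3$ recover $\vec R \vec e_2$ and $\vec R \vec e_3$ only up to independent signs. Imposing that $(\vec R \vec e_1, \pm \vec R \vec e_2, \pm \vec R \vec e_3)$ have determinant $+1$ forces the two signs to agree, so the preimage of an image point is exactly $\{\vec R, \vec R \vec S\} = [\vec R]_{C_2}$. I expect this last step to be the main obstacle: the quadratic components inherently discard sign information, and it is precisely the right-handedness constraint, together with the fact that a single $C_2$-orbit of $\vec e_2$ cannot span the plane orthogonal to $\vec e_1$, that rescues injectivity and explains why the extra third ($\vec e_3$) component is needed here in contrast to the groups $C_k$ with $k > 2$.
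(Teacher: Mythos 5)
Your core argument is the same as the paper's: compute the three tangent images $d\mathcal E_{\vec u,C_2}^{\vec \alpha,\vec \beta}([\vec I]_{C_2})\vec s^{(k)}$ via Lemma~\ref{lem:vecpro}, observe that they are pairwise orthogonal because their nonzero entries sit on disjoint index sets, and check that the weights normalize them; your balance equations $2\beta_2^2+2\beta_3^2=\beta_1^2+2\beta_3^2=\beta_1^2+2\beta_2^2=1$ are exactly the ones the paper solves. You add two things the paper leaves implicit, and both are worthwhile. First, you justify working with the unsymmetrized lift by noting that $\mathcal E_{\vec u}^{\vec \alpha,\vec \beta}$ is already constant on $C_2$-cosets (since $\vec S\vec e_1=\vec e_1$ and $\otimes^2$ discards the sign); the paper silently computes as if this were established. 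Second, you prove injectivity on $\SO/C_2$ by the determinant argument, whereas the paper's proof only verifies the differential condition of Lemma~\ref{lemma:iso} and never addresses injectivity, even though the theorem claims an ``embedding''. One clarification on your closing remark, though: the third component $\vec e_3$ is not what rescues injectivity. Your own sign argument works just as well for the two-component map $(\vec R\vec e_1,\otimes^2\vec R\vec e_2)$, since $\vec R\vec e_3=\vec R\vec e_1\times\vec R\vec e_2$ is then determined up to the same correlated sign, so that map already separates cosets. The extra component is needed solely for local isometry: with $\vec u=(\vec e_1,\vec e_2)$ and $\vec \alpha=(1,2)$ the three tangent norms are $2\beta_2^2$, $\beta_1^2$ and $\beta_1^2+2\beta_2^2$, which no choice of weights can make simultaneously equal to $1$ --- this is precisely the spanning obstruction the paper cites when explaining why $C_2$ differs from $C_k$, $k>2$.
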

\begin{proof}
  There holds
  \begin{align*}
    d \mathcal E_{\vec u,C_2}^{\vec \alpha, \vec \beta}([\vec I]_{C_2}) \vec s^{(1)}
    &=\left(\beta_1\,
      \begin{pmatrix}
        0\\0\\0
      \end{pmatrix},\beta_2\,
    \begin{pmatrix}
      0&0&0\\0&0&1\\0&1&0
    \end{pmatrix},\beta_3\,
                         \begin{pmatrix}0&0&0\\0&0&-1\\0&-1&0\end{pmatrix}\right),\\
    d \mathcal E_{\vec u,C_2}^{\vec \alpha, \vec \beta}([\vec I]_{C_2}) \vec
    s^{(2)}&=\left(\beta_1\,\begin{pmatrix}0\\0\\1 \end{pmatrix},\beta_2\,\begin{pmatrix}0&0&0\\0&0&0\\0&0&0\end{pmatrix},\beta_3\,\begin{pmatrix}0&0&-1\\0&0&0\\-1&0&0\end{pmatrix}\right),\\
    d \mathcal E_{\vec u,C_2}^{\vec \alpha, \vec \beta}([\vec I]_{C_2}) \vec
    s^{(3)}&=\left(\beta_1\,\begin{pmatrix}0\\1\\0 \end{pmatrix},\beta_2\,\begin{pmatrix}0&-1&0\\-1&0&0\\0&0&0\end{pmatrix},\beta_3\,\begin{pmatrix}0&0&0\\0&0&0\\0&0&0\end{pmatrix}\right).\\
  \end{align*}
  These three vectors are orthogonal. To normalize them, we have to solve
  \begin{align*}
   2\beta_2^2+2\beta_3^2= \beta_1^2+2\beta_3^2=\beta_1^2+2\beta_2^2=1,
  \end{align*}
  which yields $\beta_1=\frac{1}{\sqrt 2},\beta_2=\beta_3=\frac{1}{2}$.
\end{proof}

For the symmetry groups $C_k$ for $k>2$ we first show the orthogonality of the tangent vectors $d\mathcal E_{\vec u,C_k}^{\vec\alpha}([\vec I]_{C_k})$.

\begin{lemma}\label{lem:orthCk}
Let $k \in \N$ with $k>2$, $\vec u = (\vec e_{1},\vec e_{2})$ and
  $\vec \alpha = (1,k)$. Then the vectors $d\mathcal E_{\vec u,C_{k}}^{\vec \alpha}$ are orthogonal.
\end{lemma}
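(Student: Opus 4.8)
The plan is to compute the three tangent vectors $d\mathcal E_{\vec u,C_{k}}^{\vec\alpha}([\vec I]_{C_k})\,\vec s^{(m)}$, $m=1,2,3$, explicitly and to verify their pairwise orthogonality. Writing $\vec S_\ell$ for the rotation about $\vec e_1$ by the angle $\theta_\ell = 2\pi\ell/k$, and using that differentiation commutes with the finite average over $C_k$, I would apply the product rule from Lemma~\ref{lem:vecpro}, but with the rotated base direction $\vec S_\ell\vec u_i$ in place of $\vec u_i$. Because $\alpha_1=1$ with $\vec u_1=\vec e_1$ and $\alpha_2=k$ with $\vec u_2=\vec e_2$, and since $\vec S_\ell\vec e_1=\vec e_1$, the differential splits into a first component equal to $\tfrac1k\sum_\ell\vec s^{(m)}\vec e_1=\vec s^{(m)}\vec e_1$ and a second component
\[
  D^{(2)}(\vec s^{(m)}) := \frac1k\sum_{\ell=0}^{k-1}\sum_{j=0}^{k-1}
  (\otimes^{j}\vec v_\ell)\otimes(\vec s^{(m)}\vec v_\ell)\otimes(\otimes^{k-1-j}\vec v_\ell),
  \qquad \vec v_\ell := \vec S_\ell\vec e_2 = (0,\cos\theta_\ell,\sin\theta_\ell)^{\top}.
\]
The first components are $\vec s^{(1)}\vec e_1=\vec 0$, $\vec s^{(2)}\vec e_1=\vec e_3$, $\vec s^{(3)}\vec e_1=\vec e_2$, which are pairwise orthogonal, so they contribute nothing to any cross inner product; the whole argument thus reduces to the second components.

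For those I would use the bilinear identity, valid for $\vec a,\vec b,\vec w,\vec w'\in\R^3$,
\[
  \dotprod{D_{\vec w}\otimes^{k}\vec a}{D_{\vec w'}\otimes^{k}\vec b}
  = k(k-1)(\vec a^{\top}\vec b)^{k-2}(\vec w^{\top}\vec b)(\vec a^{\top}\vec w')
  + k(\vec a^{\top}\vec b)^{k-1}(\vec w^{\top}\vec w'),
\]
where $D_{\vec w}\otimes^{k}\vec a=\sum_{j=0}^{k-1}(\otimes^{j}\vec a)\otimes\vec w\otimes(\otimes^{k-1-j}\vec a)$. This is obtained by differentiating $\dotprod{\otimes^{k}(\vec a+t\vec w)}{\otimes^{k}(\vec b+s\vec w')}=\bigl((\vec a+t\vec w)^{\top}(\vec b+s\vec w')\bigr)^{k}$ once in $t$ and once in $s$ at $t=s=0$. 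Applying it with $\vec a=\vec v_\ell$, $\vec b=\vec v_{\ell'}$, $\vec w=\vec s^{(m)}\vec v_\ell$, $\vec w'=\vec s^{(m')}\vec v_{\ell'}$ and averaging over $\ell,\ell'$ turns everything into elementary trigonometry, where $\vec v_\ell^{\top}\vec v_{\ell'}=\cos(\theta_\ell-\theta_{\ell'})$.

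The decisive simplification is that $\vec s^{(2)}\vec v_\ell$ and $\vec s^{(3)}\vec v_\ell$ both point along $\vec e_1$, while every $\vec v_{\ell'}$ is orthogonal to $\vec e_1$. Hence $\vec v_\ell^{\top}(\vec s^{(m')}\vec v_{\ell'})=0$ for $m'\in\{2,3\}$, which kills the first summand of the identity in all three cross pairs. For the pairs $(1,2)$ and $(1,3)$ the surviving summand also vanishes termwise, since $\vec s^{(1)}\vec v_\ell\perp\vec e_1$ whereas $\vec s^{(2)}\vec v_{\ell'},\vec s^{(3)}\vec v_{\ell'}\parallel\vec e_1$, so $\dotprod{D^{(2)}(\vec s^{(1)})}{D^{(2)}(\vec s^{(2)})}=\dotprod{D^{(2)}(\vec s^{(1)})}{D^{(2)}(\vec s^{(3)})}=0$. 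The only pair that does not collapse termwise is $(2,3)$, whose remaining contribution is
\[
  \dotprod{D^{(2)}(\vec s^{(2)})}{D^{(2)}(\vec s^{(3)})}
  = \frac1k\sum_{\ell,\ell'=0}^{k-1}\cos^{\,k-1}(\theta_\ell-\theta_{\ell'})\,\sin\theta_\ell\,\cos\theta_{\ell'}.
\]

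This last sum is the main obstacle. I would show it vanishes by the symmetry $\theta\mapsto-\theta$: the map $(\ell,\ell')\mapsto(-\ell,-\ell')\bmod k$ is a bijection of $\{0,\dots,k-1\}^2$ under which $\cos^{\,k-1}(\theta_\ell-\theta_{\ell'})$ and $\cos\theta_{\ell'}$ are invariant while $\sin\theta_\ell$ changes sign, so the sum equals its own negative and is therefore $0$. (Alternatively, expanding $\cos^{\,k-1}$ into exponentials and using $\sum_\ell e^{\mathrm{i}m\theta_\ell}=0$ whenever $k\nmid m$ gives the same conclusion.) Together with the orthogonality of the first components this shows that all three cross inner products vanish, which is the assertion.
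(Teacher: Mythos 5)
Your proof is correct and follows essentially the same route as the paper's: the same splitting into the rank-$1$ and rank-$k$ components, the same bilinear expansion of the inner products of the rank-$k$ parts (stated in the paper as equation~\eqref{eq:scaproten2}; you rederive it by differentiating $\bigl((\vec a+t\vec w)^{\top}(\vec b+s\vec w')\bigr)^{k}$, which is a nice touch), and the same key observation that $\vec s^{(2)}\vec v_{\ell}$, $\vec s^{(3)}\vec v_{\ell}$ are parallel to $\vec e_{1}$ while $\vec v_{\ell'}$ and $\vec s^{(1)}\vec v_{\ell'}$ are orthogonal to it, which kills all cross terms except the pair $(2,3)$. The only deviation is the final sum $\sum_{\ell,\ell'}\cos^{k-1}(\theta_{\ell}-\theta_{\ell'})\sin\theta_{\ell}\cos\theta_{\ell'}=0$, which the paper evaluates via a product-to-sum identity and an index shift in~\eqref{eq:A1}, whereas your reflection argument $(\ell,\ell')\mapsto(-\ell,-\ell')\bmod k$ reaches the same conclusion more directly.
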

\begin{proof}
  For the rank one component $d \mathcal E_{\vec e_1,C_k}^{1, \beta_1}([\vec
  I]_{C_k})$ of $d \mathcal E_{\vec u,C_k}^{\vec \alpha, \vec \beta}([\vec
  I]_{C_k})$ orthogonality follows from
  \begin{equation}
    \changed{\label{eq:sca1}
    d \mathcal E_{\vec e_1,C_k}^{1, \beta_1}([\vec I]_{C_k}) \vec  s^{(1)} =
    \vec 0, \quad
    d \mathcal E_{\vec e_1,C_k}^{1, \beta_1}([\vec I]_{C_k}) \vec  s^{(2)} =
    \beta_{1} \vec e_{3}, \quad
    d \mathcal E_{\vec e_1,C_k}^{1, \beta_1}([\vec I]_{C_k}) \vec  s^{(3)} =
    \beta_{1} \vec e_{2}.}
  \end{equation}
  For the rank $k$ component $d \mathcal E_{\vec e_2,C_k}^{k, \beta_2}([\vec
  I]_{C_k})$ we use the Lemma~\ref{lem:vecpro} and define for $\ell=1,2,3$
  \begin{align}\label{eq:defBl}
   \vec B_\ell
    := d \mathcal E_{\vec e_2,C_k}^{k}([\vec I]_{C_k}) \vec  s^{(\ell)}
    = \sum_{i=0}^{k-1} \frac 1k\sum_{j=0}^{i-1} \left(\otimes^{i} \vec v_j \right) \otimes \vec  s^{(\ell)}\vec v_j
    \otimes \left(\otimes^{k-i-1} \vec v_j\right),
  \end{align}
  where the vectors
  $\vec v_j= (0, \cos \frac{2\pi j}{k}, \sin \frac{2\pi j}{k})^{\top}$ result
  from applying all symmetries from $C_{k}$ to $\vec e_{2}$. The inner products
  between these rank $k$ tensors $ \vec B_\ell$, $\ell=1,\ldots,3$ \changed{are}
  \begin{align}\label{eq:scaproten2}
    \begin{split}
      \left\langle\vec B_{\ell_1},\vec B_{\ell_2}\right\rangle
      &=
      \frac{k(k-1)}{k^2}\sum_{j_1=0}^{k-1}\sum_{j_2=0}^{k-1}\left\langle\vec
        v_{j_1},\vec v_{j_2}\right\rangle^{k-2}
      \left\langle\vec s^{(\ell_1)}\vec v_{j_1},\vec v_{j_2}\right\rangle\left\langle\vec s^{(\ell_2)}\vec v_{j_2},\vec v_{j_1}\right\rangle\\
      &\quad+\frac{k}{k^2}\sum_{j_1=0}^{k-1}\sum_{j_2=0}^{k-1}\left\langle\vec v_{j_1},\vec v_{j_2}\right\rangle^{k-1}\left\langle\vec s^{(\ell_1)}\vec v_{j_1},\vec s^{(\ell_2)}\vec v_{j_2}\right\rangle.
    \end{split}
  \end{align}
  Using
  \begin{align*}
    \vec s^{(1)}\vec v_j=\begin{pmatrix} 0\\ -\sin\frac{2\pi j}{k}\\ \cos\frac{2\pi j}{k}\end{pmatrix},\quad
    \vec s^{(2)}\vec v_j=\begin{pmatrix}  -\sin\frac{2\pi j}{k}\\ 0\\0\end{pmatrix},\quad
    \vec s^{(3)}\vec v_j=\begin{pmatrix}  -\cos\frac{2\pi j}{k}\\ 0\\0\end{pmatrix}
  \end{align*}
  we observe for all $j_1,j_2$ and $\ell=2,3$ the orthogonality
  $\left\langle\vec s^{(\ell)}\vec v_{j_1},\vec v_{j_2}\right\rangle = 0$ and hence, the first double sum in \eqref{eq:scaproten2} is zero whenever $\ell_{1}
  \ne \ell_{2}$.

  In the second double sum we have
  $\left\langle\vec s^{(\ell_1)}\vec v_{j_1},\vec s^{(\ell_2)}\vec
    v_{j_2}\right\rangle = 0$ for all $\ell_{1} \ne \ell_{2}$ except for the pair
  $\ell_{1},\ell_{2} \in \{2,3\}$. For this specific case we \changed{use the calculation in \eqref{eq:A1} and get}
  \begin{align*}
    \left\langle\vec B_{2},\vec B_{3}\right\rangle
    &=\sum_{j_1,j_{2}=0}^{k-1} \left\langle\vec v_{j_1},\vec
      v_{j_2}\right\rangle^{k-1}\left\langle\vec s^{(2)}\vec v_{j_1},\vec
      s^{(3)}\vec v_{j_2}\right\rangle\\
    &= \sum_{j_1,j_{2}=0}^{k-1} \cos^{k-1} \tfrac{2\pi(j_{1}-j_{2})}{k}
      \sin \tfrac{2\pi j_{1}}{k} \cos \tfrac{2\pi j_{2}}{k} \\
			&=0.
  \end{align*}
\end{proof}

In order to prove
  $\norm{d \mathcal E_{\vec u,C_k}^{\vec \alpha, \vec \beta}([\vec I]_{C_k})
    \vec s^{(k)}} = 1$ we continue by calculating
  $\norm {\vec B_\ell}^2=\left\langle\vec B_\ell,\vec B_\ell\right\rangle$ for $\ell=1,2,3$.

\begin{lemma}\label{lem:normBl}
For the tensors $\vec B_\ell$ defined in equation \eqref{eq:defBl} we have
\begin{align*}
\norm{\vec B_1}^2&=\begin{cases}
\frac {k^2}{2^{k-1}} &\text{if } k \text{ odd}\\
-\frac{k(k-1)}{2^{k-2}}\binom{k-2}{\frac k2-1}+\frac {k^2}{2^{k}}\left(\binom{k}{\frac k2}+2\right) &\text{if } k \text{ even}
\end{cases},\\
\norm{\vec B_2}^2=\norm{\vec B_3}^2&=\begin{cases}
\frac{k}{2^{k}}&\text{if } k \text{ is odd}\\
\frac{k}{2^{k+1}}\left(2+\binom{k-1}{\frac k2}+\binom{k-1}{\frac k2-1}\right)&\text{if } k \text{ is even}
\end{cases}.
\end{align*}
\end{lemma}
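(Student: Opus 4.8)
The plan is to start from the bilinear identity~\eqref{eq:scaproten2} specialized to $\ell_{1}=\ell_{2}=\ell$ and to insert the explicit scalar products. Since $\langle\vec v_{j_{1}},\vec v_{j_{2}}\rangle=\cos\frac{2\pi(j_{1}-j_{2})}{k}$ and all products involving $\vec s^{(\ell)}\vec v_{j}$ can be read off from the expressions for $\vec s^{(\ell)}\vec v_{j}$ computed in the proof of Lemma~\ref{lem:orthCk}, every summand depends on $j_{1},j_{2}$ only through $\sin$ and $\cos$ of $\frac{2\pi j_{1}}{k}$ and $\frac{2\pi j_{2}}{k}$. The central auxiliary quantities are the power sums
\[
  C_{m} := \sum_{d=0}^{k-1}\cos^{m}\tfrac{2\pi d}{k},
\]
to which every difference-only double sum reduces: a summand depending solely on $j_{1}-j_{2}\bmod k$ contributes the same value on each of the $k$ pairs with a fixed difference, giving an overall factor $k$.

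For $\ell=1$ both terms of~\eqref{eq:scaproten2} survive. Using $\langle\vec s^{(1)}\vec v_{j_{1}},\vec v_{j_{2}}\rangle\langle\vec s^{(1)}\vec v_{j_{2}},\vec v_{j_{1}}\rangle=-\sin^{2}\frac{2\pi(j_{1}-j_{2})}{k}$ together with $\langle\vec s^{(1)}\vec v_{j_{1}},\vec s^{(1)}\vec v_{j_{2}}\rangle=\langle\vec v_{j_{1}},\vec v_{j_{2}}\rangle=\cos\frac{2\pi(j_{1}-j_{2})}{k}$ and replacing $\sin^{2}=1-\cos^{2}$, the two double sums collapse to
\[
  \norm{\vec B_{1}}^{2}=-(k-1)\,C_{k-2}+k\,C_{k}.
\]

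For $\ell=2,3$ the vectors $\vec s^{(2)}\vec v_{j}$ and $\vec s^{(3)}\vec v_{j}$ point along $\vec e_{1}$, which is orthogonal to every $\vec v_{j_{2}}$; hence $\langle\vec s^{(\ell)}\vec v_{j_{1}},\vec v_{j_{2}}\rangle=0$ and the first double sum in~\eqref{eq:scaproten2} drops out. Only the second sum remains, carrying the factor $\langle\vec s^{(2)}\vec v_{j_{1}},\vec s^{(2)}\vec v_{j_{2}}\rangle=\sin\frac{2\pi j_{1}}{k}\sin\frac{2\pi j_{2}}{k}$ (respectively $\cos\cos$ for $\ell=3$). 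A product-to-sum step splits this into a difference-only part, contributing $\tfrac12 C_{k}$, and a mixed part
\[
  T=\sum_{j_{1},j_{2}=0}^{k-1}\cos^{k-1}\tfrac{2\pi(j_{1}-j_{2})}{k}\,
    \cos\tfrac{2\pi(j_{1}+j_{2})}{k}.
\]
Expanding $\cos^{k-1}$ and $\cos$ into exponentials and using the character orthogonality $\sum_{j=0}^{k-1}e^{\mathrm{i}c\,2\pi j/k}=k$ if $c\equiv0\bmod k$ and $0$ otherwise, a nonzero contribution would force two congruences that together imply $-1\equiv1\pmod k$; this is impossible for $k>2$, so $T=0$ and $\norm{\vec B_{2}}^{2}=\norm{\vec B_{3}}^{2}=\tfrac12 C_{k}$.

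It then remains to evaluate $C_{m}$ for $m\in\{k-2,k\}$. Writing $\cos^{m}\phi=2^{-m}\sum_{r=0}^{m}\binom{m}{r}e^{\mathrm{i}(m-2r)\phi}$ and summing over $d$ keeps only the indices $r$ with $m-2r$ a multiple of $k$; since $|m-2r|\le m$, this leaves at most $m-2r\in\{-k,0,k\}$. A parity count then yields, for $k$ odd, $C_{k-2}=0$ and $C_{k}=k/2^{k-1}$, and for $k$ even, $C_{k-2}=\frac{k}{2^{k-2}}\binom{k-2}{k/2-1}$ and $C_{k}=\frac{k}{2^{k}}\bigl(\binom{k}{k/2}+2\bigr)$. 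Substituting into the formula for $\norm{\vec B_{1}}^{2}$ reproduces the claimed value, and substituting into $\tfrac12 C_{k}$ gives $\norm{\vec B_{2}}^{2}=\norm{\vec B_{3}}^{2}$; for the even case one finally applies Pascal's rule $\binom{k}{k/2}=\binom{k-1}{k/2}+\binom{k-1}{k/2-1}$ to match the stated form. I expect the vanishing of the mixed sum $T$ together with the parity bookkeeping in $C_{m}$ to be the only genuinely delicate steps; everything else is routine manipulation of the two double sums.
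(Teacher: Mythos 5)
Your proposal is correct and follows essentially the same route as the paper: both start from~\eqref{eq:scaproten2} with the explicit scalar products, reduce the difference-only double sums to the single power sums $\sum_{j}\cos^{m}\frac{2\pi j}{k}$ evaluated by exponential expansion and roots-of-unity orthogonality, and observe that the mixed $(j_{1}+j_{2})$-terms vanish because the two required congruences would force $k\mid 2$ (the paper does this inside the appendix computations~\eqref{eq:A2} and~\eqref{eq:A3}; you isolate it up front as the term $T$). The only cosmetic difference is that your route for $\norm{\vec B_2}^2=\norm{\vec B_3}^2$ expands $\cos^{k}$ rather than $\cos^{k-1}$, so you need Pascal's rule at the end, whereas the paper's expansion yields the stated binomial form directly.
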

\begin{proof}

By equation~\eqref{eq:scaproten2} \changed{and the calculations in \eqref{eq:A2}} we obtain
\begin{align*}
\norm{\vec B_{1}}^2 &= \frac{(k-1)}{k}\sum_{j_1=0}^{k-1}\sum_{j_2=0}^{k-1}\left\langle\vec v_{j_1},\vec v_{j_2}\right\rangle^{k-2}\left\langle\vec s^{(1)}\vec v_{j_1},\vec v_{j_2}\right\rangle\left\langle\vec s^{(1)}\vec v_{j_2},\vec v_{j_1}\right\rangle\\
&\quad+\frac{1}{k}\sum_{j_1=0}^{k-1}\sum_{j_2=0}^{k-1}\left\langle\vec v_{j_1},\vec v_{j_2}\right\rangle^{k-1}\left\langle\vec s^{(1)}\vec v_{j_1},\vec s^{(1)}\vec v_{j_2}\right\rangle\\
&= -\frac{(k-1)}{k}\sum_{j_1=0}^{k-1}\sum_{j_2=0}^{k-1}\cos^{k-2}\left(\frac{2\pi(j_1-j_2)}{k}\right)\sin^2\left(\frac{2\pi(j_1-j_2)}{k}\right)\\
&\quad+\frac{1}{k}\sum_{j_1=0}^{k-1}\sum_{j_2=0}^{k-1}\cos^{k-1}\left(\frac{2\pi(j_1-j_2)}{k}\right)\cos\left(\frac{2\pi(j_1-j_2)}{k}\right)\\
&=\begin{cases}
\frac {k^2}{2^{k-1}} &\text{if } k \text{ odd}\\
-\frac{k(k-1)}{2^{k-2}}\binom{k-2}{\frac k2-1}+\frac {k^2}{2^{k}}\left(\binom{k}{\frac k2}+2\right) &\text{if } k \text{ even}
\end{cases}.
\end{align*}

Next we investigate the tensor $\vec B_3$. \changed{With the calculations in \eqref{eq:A3} in the appendix we get the following.}
\begin{align*}
\norm {\vec B_3}^2&=\frac{1}{k}\sum_{j_1=0}^{k-1}\sum_{j_2=0}^{k-1}\left\langle\vec v_{j_1},\vec v_{j_2}\right\rangle^{k-1}\left\langle\vec s^{(3)}\vec v_{j_1},\vec s^{(3)}\vec v_{j_2}\right\rangle\\
&=\frac{1}{k}\sum_{j_1=0}^{k-1}\sum_{j_2=0}^{k-1}\cos^{k-1}\left(\frac{2\pi(j_1-j_2)}{k}\right)\cos\left(\frac{2\pi j_1}{k}\right)\cos\left(\frac{2\pi j_2}{k}\right)\\
&=\begin{cases}
\frac{k}{2^{k}}&\text{if } k \text{ is odd}\\
\frac{k}{2^{k+1}}\left(2+\binom{k-1}{\frac k2}+\binom{k-1}{\frac k2-1}\right)&\text{if } k \text{ is even}
\end{cases}.
\end{align*}
For the norm $\norm{\vec B_2}^2$ we only have to change some signs in the previous calculation and receive in the end ${\norm{\vec B_2}^2=\norm{\vec B_3}^2}$.
\end{proof}
Summarizing these Lemmata we find weights $\vec \beta$ for all crystallographic
symmetry groups $\mathcal S$ such that the corresponding embeddings are isometries.

\begin{theorem}
  \label{theorem:isoCk}
  Let $k \in \N$ with $k>2$, $\vec u = (\vec e_{1},\vec e_{2})$ and
  $\vec \alpha = (1,k)$. Then the embeddings
  $\mathcal E_{\vec u,C_{k}}^{\vec \alpha, \vec \beta}$ with the factors
  \begin{equation*}
    \vec \beta = \left(\sqrt{1-\frac{\norm{\vec B_2}^2}{\norm{\vec B_1}^2}},\frac {1}{\norm{\vec B_1}}\right)^\top
  \end{equation*}
  with the norms from Lemma~\ref{lem:normBl} are \changed{locally} isometric embeddings. The concrete factors for $k=3,4,6$ are listed in Table~\ref{tab:embnew}.
\end{theorem}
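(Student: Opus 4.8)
The plan is to reduce the claim, via Lemma~\ref{lemma:iso}, to the statement that the three tangent images
$\vec w^{(\ell)} := d \mathcal E_{\vec u,C_k}^{\vec \alpha, \vec \beta}([\vec I]_{C_k}) \vec s^{(\ell)}$, $\ell = 1,2,3$, form an orthonormal system in $\R^{3^{\vec\alpha}}$. Since $\vec\alpha = (1,k)$ and $\vec u = (\vec e_1, \vec e_2)$, each image splits as $\vec w^{(\ell)} = (\beta_1 \vec c_\ell,\, \beta_2 \vec B_\ell)$ into a rank-one part $\vec c_\ell \in \R^3$ coming from $\vec e_1$ and a rank-$k$ part $\vec B_\ell \in \otimes^k \R^3$ coming from $\vec e_2$, where the $\vec c_\ell$ are read off from \eqref{eq:sca1} (so $\vec c_1 = \vec 0$, $\vec c_2 = \vec e_3$, $\vec c_3 = \vec e_2$) and the $\vec B_\ell$ are the tensors defined in \eqref{eq:defBl}. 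Because the inner product on $\R^{3^{\vec\alpha}}$ is the orthogonal direct sum of the inner products on the two factors, every pairwise product decomposes as
$\langle \vec w^{(\ell_1)}, \vec w^{(\ell_2)}\rangle = \beta_1^2 \langle \vec c_{\ell_1}, \vec c_{\ell_2}\rangle + \beta_2^2 \langle \vec B_{\ell_1}, \vec B_{\ell_2}\rangle$.

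First I would settle orthogonality, which holds for \emph{any} choice of $\vec\beta$. The rank-one vectors $\vec c_1,\vec c_2,\vec c_3$ are pairwise orthogonal by inspection, and the rank-$k$ tensors $\vec B_1,\vec B_2,\vec B_3$ are pairwise orthogonal by Lemma~\ref{lem:orthCk}. Hence $\langle \vec w^{(\ell_1)}, \vec w^{(\ell_2)}\rangle = 0$ for $\ell_1 \ne \ell_2$, and only the normalization remains to be enforced by the weights.

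Next I would compute the three squared norms. From $\norm{\vec c_1}^2 = 0$ and $\norm{\vec c_2}^2 = \norm{\vec c_3}^2 = 1$ together with Lemma~\ref{lem:normBl} I obtain
\begin{align*}
  \norm{\vec w^{(1)}}^2 &= \beta_2^2\, \norm{\vec B_1}^2, \\
  \norm{\vec w^{(2)}}^2 &= \beta_1^2 + \beta_2^2\, \norm{\vec B_2}^2, \\
  \norm{\vec w^{(3)}}^2 &= \beta_1^2 + \beta_2^2\, \norm{\vec B_3}^2.
\end{align*}
Since Lemma~\ref{lem:normBl} gives $\norm{\vec B_2}^2 = \norm{\vec B_3}^2$, the last two norms coincide automatically, so the system $\norm{\vec w^{(\ell)}}^2 = 1$ collapses to two scalar equations. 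The first forces $\beta_2 = 1/\norm{\vec B_1}$, and substituting this into the second yields $\beta_1^2 = 1 - \norm{\vec B_2}^2/\norm{\vec B_1}^2$. These are exactly the weights claimed in the statement, which establishes the orthonormality and hence, by Lemma~\ref{lemma:iso}, the local isometry.

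The only point requiring genuine care is that $\beta_1$ be real, i.e.\ that $\norm{\vec B_2}^2 \le \norm{\vec B_1}^2$; this is the main obstacle, albeit a mild one. For odd $k$ the explicit norms of Lemma~\ref{lem:normBl} give the ratio $\norm{\vec B_2}^2/\norm{\vec B_1}^2 = 1/(2k) < 1$ at once, whereas for even $k$ I would insert the binomial expressions and verify the inequality directly; in both cases $\beta_1^2 \in (0,1)$, so the embedding is well defined. The concrete values for $k = 3,4,6$ recorded in Table~\ref{tab:embnew} then follow by evaluating these formulas.
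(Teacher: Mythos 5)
Your proof follows essentially the same route as the paper's: the same splitting of the differential into the rank-one part coming from $\vec e_1$ (equation \eqref{eq:sca1}) and the rank-$k$ tensors $\vec B_\ell$ of \eqref{eq:defBl}, orthogonality from Lemma~\ref{lem:orthCk}, norms from Lemma~\ref{lem:normBl}, and the same two-equation normalization system whose positive solution is the stated $\vec\beta$. Up to that point your argument is correct, and the paper's proof is just a terser version of the same computation.

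Where you go beyond the paper is in flagging the reality of $\beta_1$, i.e.\ the inequality $\norm{\vec B_2}^2 \le \norm{\vec B_1}^2$, which the paper never checks. Your verification for odd $k$ is right (the ratio is $1/(2k)$), but your assertion that the inequality also holds for all even $k$ is false. Using Pascal's rule $\binom{k-1}{k/2}+\binom{k-1}{k/2-1}=\binom{k}{k/2}$ and the identity $\binom{k-2}{k/2-1}=\frac{k}{4(k-1)}\binom{k}{k/2}$, the inequality $\norm{\vec B_2}^2 \le \norm{\vec B_1}^2$ for even $k$ reduces to $2(2k-1) \ge \binom{k}{k/2}$, which holds for $k=4$ (since $14 \ge 6$) and $k=6$ (since $22 \ge 20$) but fails for every even $k \ge 8$ because the central binomial coefficient grows exponentially. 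Concretely, for $k=8$ Lemma~\ref{lem:normBl} gives $\norm{\vec B_1}^2 = 1/2$ and $\norm{\vec B_2}^2 = 9/8$, so $\beta_1^2 = 1 - 9/4 < 0$ and the stated $\vec\beta$ is not real. Thus the point you correctly identified as ``the main obstacle'' is a genuine one: the theorem as stated for all $k>2$ (and with it the paper's own proof, which silently skips positivity) is defective for even $k \ge 8$; both your argument and the paper's are valid exactly for odd $k$ and for $k \in \{4,6\}$, which covers all crystallographic cases in Table~\ref{tab:embnew}. A correct write-up should either restrict $k$ accordingly or remark that even $k \ge 8$ requires a different choice of $\vec u$ or additional components.
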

\begin{proof}
  We use equation \eqref{eq:sca1} for the rank 1 tensor. To normalize the
  vectors
  $d \mathcal E_{\vec u,C_k}^{\vec\alpha, \vec\beta}([\vec I]_{C_k}) \vec
  s^{(\ell)}$ for $\ell=1,2,3$ we have to solve for every $k$ equations of the form
\begin{align*}
\beta_2^2\cdot \norm{\vec B_1}^2=\beta_1^2+\beta_2^2\cdot\norm{\vec B_2}^2= \beta_1^2+\beta_2^2\cdot\norm{\vec B_3}^2=1,
\end{align*}
which always has a solution since $\norm{B_2}=\norm{B_3}$. We receive the positive solution by
\begin{align*}
\beta_1=\sqrt{1-\frac{\norm{\vec B_2}^2}{\norm{\vec B_1}^2}},\quad
\beta_2=\frac {1}{\norm{\vec B_1}}.
\end{align*}
\end{proof}

\begin{table}[ht]
  \caption{Choices of the vectors $\vec u$ and the parameters $\vec\alpha,\vec \beta$ such
  that the embeddings $\tilde{\mathcal E}_{\vec u,\mathcal S}^{\vec \alpha, \vec \beta}$ are locally isometric.}
  \label{tab:embnew}
	\vspace{-0.2cm}
  \centering
  \begin{tabular}{llllc}
    \hline
     $\mathcal S$   & $\vec u$& $\vec\alpha$&$\color{magenta}{\vec\beta}$&Dimension \\
    \hline
    $C_1$ & $(\vec e_1, \vec e_2,\vec e_3)$	&	$(1, 1, 1)$	&	$(\frac{1}{\sqrt 2},\frac{1}{\sqrt 2},\frac{1}{\sqrt 2})$ &	$9$ \\
    $C_2$ & $(\vec e_1,\vec e_2,\color{magenta}{\vec e_3}\color{black})$&(1, 2, \color{magenta}{2}\color{black})	&	$\left(\frac{1}{\sqrt 2},\frac{1}{2},\frac{1}{2}\right)$	&	$\color{magenta}{13}$ \\
    $C_3$ & $(\vec e_1,\vec e_2)$	&	$(1, 3)$	&	$\left(\sqrt{\frac{5}{6}},\frac{\sqrt 4}{3}\right)$	& $13$\\
    $C_4$	&	$(\vec e_1,\vec e_2)$	&	$(1, 4)$	&	$(\frac{1}{\sqrt 2},\frac{1}{\sqrt 2})$ & $17$\\
    $C_6$	&	$(\vec e_1,\vec e_2)$	&	$(1, 6)$	&	$\left(\frac{1}{\sqrt 12},\frac{2\,\sqrt{2}}{3}\right)$	& $30$\\
    $D_2$ & $(\vec e_1,\vec e_2,\color{magenta}{\vec e_3}\color{black})$&$(2, 2, \color{magenta}{2}\color{black})$	&	$\left(\frac{1}{2},\frac{1}{2},\frac{1}{2}\right)$	&	$\color{magenta}15$\\
    $D_3$ & $(\color{magenta}{\vec e_1}\color{black},\vec e_2)$	&	$(\color{magenta}{2}\color{black}, 3)$	&	$\left(\sqrt{\frac{5}{12}},\frac{\sqrt 4}{3}\right)$	&  $\color{magenta}15$\\
    $D_4$ &	$(\color{magenta}{\vec e_1}\color{black},\vec e_2)$	&	$(\color{magenta}{2}\color{black}, 4)$	&	$\left(\frac{1}{2},\frac{1}{\sqrt 2}\right)$	&	$\color{magenta}19$\\
    $D_6$ &	$(\color{magenta}{\vec e_1}\color{black},\vec e_2)$	&	$(\color{magenta}{2}\color{black}, 6)$	&	$\left(\frac{1}{\sqrt{ 24}},\frac{2\sqrt{2}}{3}\right)$	& $\color{magenta}32$\\
    $O$		&	$\vec e_1$	&	$4$	&	$\frac{3}{2\sqrt{2}}$	&	$14$\\
    $T$		&	$\vec e_1$	&	$3$	&	$\frac{3}{2\sqrt{2}}$	&	$10$\\
		\changed{$Y$}		&	$\vec e_1$	&	$10$	&	$\frac{75}{8\sqrt{95}}$	&	$66$\\
    \hline
  \end{tabular}

\end{table}

For the symmetry groups $D_k$ the case $k=2$ is a special case for the same reasons as $C_2$.
\begin{theorem}
  \label{theorem:isoD2}
Let $\vec u = (\vec e_{1},\vec e_{2},\vec e_{2})$, $\vec \alpha = (2,2,2)$
  and $\vec \beta = (\frac{1}{2},\frac{1}{2},\frac{1}{2} )$. Then
  $\mathcal E_{\vec u,D_{2}}^{\vec \alpha, \vec \beta}$ is an \changed{locally} isometric
  embedding.
\end{theorem}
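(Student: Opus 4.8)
The plan is to invoke Lemma~\ref{lemma:iso} and reduce the claim to showing that the three tangent images $d\mathcal E_{\vec u, D_2}^{\vec\alpha,\vec\beta}([\vec I]_{D_2})\,\vec s^{(k)}$, $k=1,2,3$, are orthonormal in $\R^{3^{\vec\alpha}}$. As a preliminary step I would record the explicit structure of $D_2$: with the major axis along $\vec e_1$ and the two-fold axis along $\vec e_2$, the group consists of $\vec I$ together with the three half-turns $\diag(1,-1,-1)$, $\diag(-1,1,-1)$ and $\diag(-1,-1,1)$, so that the $D_2$-orbit of each basis vector $\vec e_j$ is $\{\vec e_j,-\vec e_j\}$, with each sign occurring exactly twice.

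The key computational step is the differential of the symmetrized embedding at the identity, which I would obtain by combining Lemma~\ref{lem:vecpro} with this orbit structure. For a single component of order $\alpha=2$ and direction $\vec w$, Lemma~\ref{lem:vecpro} gives $d\mathcal E_{\vec w}^{2}(\vec I)\,\vec s = (\vec s\vec w)\otimes\vec w + \vec w\otimes(\vec s\vec w)$, and this expression is invariant under $\vec w\mapsto-\vec w$. Hence averaging over $D_2$ merely reproduces the unsymmetrized differential, and the full differential in direction $\vec s$ collapses to the triple of symmetric matrices
\begin{equation*}
  d \mathcal E_{\vec u, D_2}^{\vec \alpha, \vec \beta}([\vec I]_{D_2})\, \vec s
  = \Bigl( \beta_i \bigl[ (\vec s \vec e_i) \otimes \vec e_i + \vec e_i \otimes (\vec s \vec e_i) \bigr] \Bigr)_{i=1}^{3}.
\end{equation*}
This reduction is the step I expect to be the main obstacle: one must be careful that the symmetrization neither annihilates nor distorts the tangent data, and this relies precisely on the even order $\alpha=2$ together with the sign-orbit $\{\pm\vec e_j\}$.

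The remainder is a short verification. Substituting $\vec s=\vec s^{(k)}$ and using $\vec s^{(1)}\vec e_1=\vec 0$, $\vec s^{(1)}\vec e_2=\vec e_3$, $\vec s^{(1)}\vec e_3=-\vec e_2$ (and the analogous identities for $\vec s^{(2)},\vec s^{(3)}$), each tangent image becomes a triple whose nonzero entries are multiples of distinct elementary symmetric matrices $\vec e_a\otimes\vec e_b+\vec e_b\otimes\vec e_a$. Reading off their supports shows pairwise orthogonality for every $\vec\beta$, while the squared norms come out as $2\beta_2^2+2\beta_3^2$, $2\beta_1^2+2\beta_3^2$ and $2\beta_1^2+2\beta_2^2$. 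Setting each equal to $1$ gives a linear system in the $\beta_i^2$ whose unique positive solution is $\beta_1=\beta_2=\beta_3=\tfrac12$, which is the asserted choice.

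Finally I would explain why three coordinate directions are genuinely needed here, in contrast to the generic case $D_k$ with $k\ge 3$ that uses only $\vec u=(\vec e_1,\vec e_2)$. Dropping the third component would impose the three constraints $2\beta_2^2=1$, $2\beta_1^2=1$ and $2\beta_1^2+2\beta_2^2=1$ simultaneously, which is inconsistent; the third weight is exactly what is required to balance the normalization equations. This is the $D_2$ analogue of the special treatment of $C_2$ in Theorem~\ref{theorem:isoC2}, and it reflects that for $D_2$ the orbit of $\vec e_2$ spans only a line rather than the full plane orthogonal to $\vec e_1$.
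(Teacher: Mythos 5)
Your proof is correct and follows essentially the same route as the paper: reduce via Lemma~\ref{lemma:iso} to orthonormality of the three tangent images, compute them (the paper simply reuses the $C_2$ computation for the rank-2 components, while you derive them directly from Lemma~\ref{lem:vecpro} together with the observation that the $D_2$-orbits $\{\pm\vec e_j\}$ leave the even-order differential unchanged), check orthogonality by disjoint supports, and solve the normalization system $2\beta_2^2+2\beta_3^2=2\beta_1^2+2\beta_3^2=2\beta_1^2+2\beta_2^2=1$ to get $\beta_i=\tfrac12$. Note that your argument, like the paper's own proof and Table~\ref{tab:embnew}, uses $\vec u=(\vec e_1,\vec e_2,\vec e_3)$ rather than the $(\vec e_1,\vec e_2,\vec e_2)$ printed in the theorem statement; this is evidently a typo in the statement, since with a repeated $\vec e_2$ the three squared norms become $2\beta_2^2+2\beta_3^2$, $2\beta_1^2$ and $2\beta_1^2+2\beta_2^2+2\beta_3^2$, which no choice of weights can simultaneously normalize.
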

\begin{proof}
The second and third component are the same \changed{as in} the case $C_2$. \changed{Analogously} to this case we have to solve
\begin{align*}
2\,\beta_1^2+2\,\beta_2^2=2\,\beta_2^2+2\,\beta_3^2=1,
\end{align*}
which yields $\beta_1=\beta_2=\beta_3=\frac{1}{2}$.
\end{proof}

\begin{theorem}
  \label{theorem:isoDk}
 Let $k \in \N$ with $k>2$, $\vec u = (\vec e_{1},\vec e_{2})$ and $\vec \alpha = (2,k)$.
Then there exist factors $\vec \beta $, such that $\mathcal E_{\vec u,D_{k}}^{\vec \alpha, \vec \beta}$ is an locally isometric
  embedding.
\end{theorem}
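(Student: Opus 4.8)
The plan is to reduce everything to the already-settled cyclic case by examining how $D_{k}$ acts on the two generating directions $\vec e_{1}$ and $\vec e_{2}$. First I would list the images of $\vec e_{1}$ and $\vec e_{2}$ under all $2k$ elements. The $k$ rotations about the major axis $\vec e_{1}$ fix $\vec e_{1}$ and send $\vec e_{2}$ onto the vectors $\vec v_{j}=(0,\cos\frac{2\pi j}{k},\sin\frac{2\pi j}{k})^{\top}$ from \eqref{eq:defBl}. The $k$ two-fold rotations, whose axes lie in the $\vec e_{2}$--$\vec e_{3}$ plane at angles $\frac{\pi j}{k}$, send $\vec e_{1}\mapsto-\vec e_{1}$ and map $\vec e_{2}$ once more onto the very same set $\{\vec v_{j}\}$. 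This collapses the symmetrization: since the even tensor power is insensitive to the sign flip, $\otimes^{2}(\vec R\,\vec S\vec e_{1})=\otimes^{2}(\vec R\vec e_{1})$ for every $\vec S\in D_{k}$, so the first (rank $2$) component equals the unsymmetrized $\otimes^{2}(\vec R\vec e_{1})$; and because each $\vec v_{j}$ is attained exactly twice, the second (rank $k$) component coincides with the $C_{k}$-symmetrization in \eqref{eq:defBl}.

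This reduction lets me recycle the cyclic machinery verbatim for the second component: the tangent vectors $\vec B_{\ell}=d\mathcal E_{\vec e_{2},C_{k}}^{k}([\vec I])\vec s^{(\ell)}$ are exactly those of Lemma~\ref{lem:orthCk}, hence mutually orthogonal, with squared norms supplied by Lemma~\ref{lem:normBl}. For the first component I would apply Lemma~\ref{lem:vecpro} with $\alpha=2$ and $\vec u=\vec e_{1}$; using $\vec s^{(1)}\vec e_{1}=\vec 0$, $\vec s^{(2)}\vec e_{1}=\vec e_{3}$, $\vec s^{(3)}\vec e_{1}=\vec e_{2}$, this yields $\vec A_{1}=\vec 0$, $\vec A_{2}=\vec e_{3}\otimes\vec e_{1}+\vec e_{1}\otimes\vec e_{3}$, $\vec A_{3}=\vec e_{2}\otimes\vec e_{1}+\vec e_{1}\otimes\vec e_{2}$, which are pairwise orthogonal with $\norm{\vec A_{1}}^{2}=0$ and $\norm{\vec A_{2}}^{2}=\norm{\vec A_{3}}^{2}=2$.

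Since both components are separately orthogonal, the full differential $d\mathcal E_{\vec u,D_{k}}^{\vec\alpha,\vec\beta}([\vec I])\vec s^{(\ell)}=(\beta_{1}\vec A_{\ell},\beta_{2}\vec B_{\ell})$ has orthogonal images, and by Lemma~\ref{lemma:iso} only the normalization remains. I would then solve the system
\begin{equation*}
  \beta_{2}^{2}\,\norm{\vec B_{1}}^{2}=1,\qquad 2\beta_{1}^{2}+\beta_{2}^{2}\,\norm{\vec B_{2}}^{2}=1,
\end{equation*}
where the $\ell=2$ and $\ell=3$ equations coincide because $\norm{\vec B_{2}}^{2}=\norm{\vec B_{3}}^{2}$. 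This gives $\beta_{2}=1/\norm{\vec B_{1}}$ and $\beta_{1}=\frac{1}{\sqrt 2}\sqrt{1-\norm{\vec B_{2}}^{2}/\norm{\vec B_{1}}^{2}}$, real and positive under the same inequality $\norm{\vec B_{1}}^{2}>\norm{\vec B_{2}}^{2}$ that already underpinned Theorem~\ref{theorem:isoCk}; note this matches the table entries, e.g. $\beta_{1}^{D_{k}}=\frac{1}{\sqrt 2}\beta_{1}^{C_{k}}$.

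The only genuinely new calculation relative to the cyclic case is the elementary evaluation of the rank-$2$ tensors $\vec A_{\ell}$. The conceptual step I would emphasize, and which does all the real work, is the orbit observation: the two-fold rotations neither enlarge the orbit of $\vec e_{2}$ nor survive the even power on $\vec e_{1}$, so the $D_{k}$ problem inherits the norms $\norm{\vec B_{\ell}}^{2}$ computed for $C_{k}$.
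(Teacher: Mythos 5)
Your proposal is correct and takes essentially the same route as the paper: the paper's proof likewise observes that the rank-$k$ components $\vec B_1,\vec B_2,\vec B_3$ are identical to the $C_k$ case, that the rank-$2$ component built from $\vec e_1$ contributes orthogonal tangent tensors, and then solves the very same normalization system $\beta_2^2\norm{\vec B_1}^2 = 2\beta_1^2+\beta_2^2\norm{\vec B_2}^2 = 1$, yielding the same $\beta_2$ as for $C_k$ and $\beta_1$ divided by $\sqrt{2}$. You in fact supply justifications the paper leaves implicit, namely the orbit argument (the two-fold axes reproduce the set $\{\vec v_j\}$ and the sign flip on $\vec e_1$ is absorbed by the even power) and the explicit tensors $\vec A_\ell$ with $\norm{\vec A_2}^2=\norm{\vec A_3}^2=2$.
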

\begin{proof}
As in the case $C_k$ we get the same second components $B_1,B_2$ and $B_3$. Only
the first component is now a $3\times 3$-matrix and not just a vector. The three
vectors $d \mathcal E([\vec I]_{D_k})\vec s^{(\ell)}$ are again orthogonal. For the normalization we have to solve
\begin{align*}
\beta_2^2\cdot \norm{B_1}^2=2\,\beta_1^2+\beta_2^2\cdot\norm{B_2}^2= 2\,\beta_1^2+\beta_2^2\cdot\norm{B_3}^2=1,
\end{align*}
which yields the same solutions for $\beta_2$ as in the case $C_k$, but
for $\beta_1$ we have to divide the solution from $C_k$ by $\sqrt{2}$.
\end{proof}

For the cubic symmetry group the \changed{locally} isometric embedding requires only a single
vector. More precisely, we have the following result.

\begin{theorem}
  \label{theorem:isoO}
  Let $\vec u = \vec e_{1}$, $\vec \alpha = 4$
  and $\vec \beta = \frac{3}{2\sqrt{2}} $. Then
  $\mathcal E_{\vec u,O}^{\vec \alpha, \vec \beta}$ is a \changed{locally} isometric
  embedding.
\end{theorem}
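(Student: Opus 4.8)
The plan is to verify the criterion of Lemma~\ref{lemma:iso}: it suffices to show that the three tangent images $d\mathcal E_{\vec e_1,O}^{4,\vec\beta}([\vec I]_{O})\,\vec s^{(\ell)}$, $\ell=1,2,3$, are orthonormal in $\R^{3^{4}}$. Rather than computing the three pairwise inner products by brute force, I would exploit the equivariance of Theorem~\ref{theorem:homo} to reduce the orthogonality to a representation-theoretic triviality, leaving only a single norm to be computed in order to fix $\beta$.

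First I would record the isotropy action. Since $O$ is finite, the quotient map $\SO\to\SO/O$ is a local diffeomorphism at $[\vec I]_{O}$, so $T_{[\vec I]_{O}}(\SO/O)\cong\so$. The stabilizer of $[\vec I]_{O}$ is exactly $O$, and for $\vec S\in O$ the curve $[\exp(t\vec s)]_{O}$ is carried by $\vec S\triangleright\cdot$ to $[\exp(t\,\vec S\vec s\vec S^{\top})]_{O}$; hence the isotropy representation is the adjoint action $\vec s\mapsto\vec S\vec s\vec S^{\top}$, which under the identification $\so\cong\R^{3}$ is the standard three-dimensional representation of $O$. Writing $D=d\mathcal E_{\vec e_1,O}^{4,\vec\beta}([\vec I]_{O})$ and differentiating the equivariance identity $\mathcal E_{\vec e_1,O}^{4,\vec\beta}(\vec S\triangleright[\vec O]_{O})=(\otimes^{4}\vec S)\,\mathcal E_{\vec e_1,O}^{4,\vec\beta}([\vec O]_{O})$ at the fixed point $[\vec I]_{O}$ then yields $D(\vec S\vec x)=(\otimes^{4}\vec S)\,D\vec x$ for all $\vec S\in O$ and $\vec x\in\R^{3}$, i.e.\ $D$ intertwines the standard representation with the $\otimes^{4}$ representation.

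The orthogonality is then automatic. Because $\otimes^{4}\vec S$ is a Kronecker product of orthogonal matrices and hence orthogonal (as in the proof of Corollary~\ref{cor:sphere}), the pulled-back bilinear form $B(\vec x,\vec y)=\dotprod{D\vec x}{D\vec y}$ is $O$-invariant on $\R^{3}$. Since the standard representation of the octahedral group on $\R^{3}$ is irreducible, Schur's lemma forces $B$ to be a scalar multiple of the Euclidean inner product, so the Gram matrix of $\{D\vec s^{(\ell)}\}_{\ell=1}^{3}$ equals $c$ times the $3\times 3$ identity matrix for some $c>0$. In particular the three tangent images are automatically orthogonal and share a common norm $\sqrt{c}$, and by Lemma~\ref{lemma:iso} local isometry reduces to the single scalar equation $c=1$.

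It remains to compute $c=\beta^{2}c_{0}$ and solve for $\beta$. Here I would use Lemma~\ref{lem:vecpro} together with the fact that, under the convention that the four-fold axes of $O$ lie along $\vec e_1,\vec e_2,\vec e_3$, the orbit of $\vec e_1$ is $\{\pm\vec e_1,\pm\vec e_2,\pm\vec e_3\}$, each direction occurring with the stabilizer multiplicity $4=\abs{O}/6$. This expresses $D\vec s^{(1)}$ as an explicit finite sum of rank-four tensors $(\otimes^{i}\vec d)\otimes\vec s^{(1)}\vec d\otimes(\otimes^{3-i}\vec d)$ over the six coordinate directions $\vec d$, whose squared norm evaluates through the elementary contractions $\dotprod{\vec e_a}{\vec e_b}=\delta_{ab}$ and the explicit vectors $\vec s^{(1)}\vec d$. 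Carrying out this bookkeeping gives $c_{0}=\tfrac{8}{9}$, so that $c=1$ forces $\beta=1/\sqrt{c_{0}}=\tfrac{3}{2\sqrt{2}}$, as claimed. The main obstacle is precisely this last computation: one must track the orbit multiplicities and the indices $i$ for which $\vec s^{(1)}\vec d$ vanishes (e.g.\ $\vec s^{(1)}\vec e_1=\vec 0$), since these determine which mixed tensor terms survive; the representation-theoretic argument, by contrast, disposes of all the off-diagonal work at once.
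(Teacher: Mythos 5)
Your proposal is correct, and it reaches the conclusion by a genuinely different route for the main structural part of the argument. The paper's proof is a direct computation: using Lemma~\ref{lem:vecpro} and the orbit $\{\pm\vec e_1,\pm\vec e_2,\pm\vec e_3\}$, it expands all nine inner products $\bigl\langle d\mathcal E\,\vec s^{(\ell_1)}, d\mathcal E\,\vec s^{(\ell_2)}\bigr\rangle$; the cross terms $j_1\ne j_2$ die because $\langle \vec v_{j_1},\vec v_{j_2}\rangle=0$, the remaining first sum dies because $\langle\vec s^{(\ell)}\vec v_j,\vec v_j\rangle=0$ by skew-symmetry, and what survives is $\tfrac49\sum_j\langle\vec s^{(\ell_1)}\vec v_j,\vec s^{(\ell_2)}\vec v_j\rangle=\tfrac89\delta_{\ell_1\ell_2}$, i.e.\ the whole Gram matrix at once. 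You instead dispose of the off-diagonal entries structurally: the differential at $[\vec I]_O$ intertwines the isotropy representation (the adjoint action of $O$, which under the hat map $\so\cong\R^3$ is the standard representation) with the orthogonal action $\otimes^4$, so the pulled-back form $B(\vec x,\vec y)=\langle D\vec x,D\vec y\rangle$ is $O$-invariant, and since the standard representation of $O$ on $\R^3$ is absolutely irreducible, real Schur's lemma forces $B=c\,\langle\cdot,\cdot\rangle$; only the single norm computation $c_0=\tfrac89$ remains, and that computation coincides with the paper's diagonal case. Your argument buys generality and insight: it explains \emph{why} a single tensor component suffices and why orthogonality is automatic precisely for the polyhedral groups $T$, $O$, $Y$ (irreducible isotropy representation), whereas for $C_k$ and $D_k$ the representation is reducible and the weights $\vec\beta$ must genuinely be balanced across components. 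The paper's computation, in exchange, is elementary, self-contained, and uniform in style with the proofs of Theorems~\ref{theorem:isoCk}, \ref{theorem:isoDk} and \ref{theorem:isoT}. Two small points of care in your write-up: the constant $c$ is a priori only nonnegative (positivity, equivalently injectivity of $D$, comes out of the explicit evaluation $c_0=\tfrac89$, so nothing is lost), and the Schur step does require noting that the form is expressed in the orthonormal basis $\{\vec e_\ell\}$ corresponding to the equal-norm basis $\{\vec s^{(\ell)}\}$, which you implicitly do.
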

\begin{proof}
  The vectors $\vec R\vec e_1$ for $\vec R\in O$ are in the set
  $\{\pm \vec e_1,\pm \vec e_2,\pm \vec e_3\}$. Since
  ${\otimes^4 \vec x=\otimes^4 (-\vec x)}$, we only have to consider the three
  vectors $\vec v_i=\vec e_i$ for $i=1,2,3$. With respect to the skew\changed{-}symmetric
	basis $\vec s^{(k)}$, $k=1,2,3$ we obtain
  \begin{align*}
    \vec s^{(1)} \vec v_1&=\vec 0,\quad &\vec s^{(1)} \vec v_2&=\vec e_3,\quad &\vec s^{(1)} \vec v_3&=-\vec e_2,\\
    \vec s^{(2)} \vec v_1&=\vec e_3,\quad &\vec s^{(2)} \vec v_2&=\vec 0,\quad &\vec s^{(2)} \vec v_3&=-\vec e_1,\\
    \vec s^{(3)} \vec v_1&=\vec e_2,\quad &\vec s^{(3)} \vec v_2&=-\vec e_1,\quad &\vec s^{(3)} \vec v_3&=\vec 0.
  \end{align*}
  By Lemma~\ref{lem:vecpro} the scalar products in the embedding \changed{are}
  \begin{align*}
    \begin{split}
      \left\langle d E_{\vec u,O}^{\vec \alpha, \vec \beta}\vec s^{(\ell_1)},
        dE_{\vec u,O}^{\vec \alpha, \vec \beta}\vec s^{(\ell_2)}\right\rangle
      &= \frac{4\cdot 3}{3^2}\sum_{j_1=1}^{3}\sum_{j_2=1}^{3}
      \left\langle\vec v_{j_1},\vec v_{j_2}\right\rangle^{2}\left\langle\vec s^{(\ell_1)}\vec v_{j_1},\vec v_{j_2}\right\rangle\left\langle\vec s^{(\ell_2)}\vec v_{j_2},\vec v_{j_1}\right\rangle\\
      &\quad+\frac{4}{3^2}\sum_{j_1=1}^{3}\sum_{j_2=1}^{3}
      \left\langle\vec v_{j_1},\vec v_{j_2}\right\rangle^{3}\left\langle\vec s^{(\ell_1)}\vec v_{j_1},\vec s^{(\ell_2)}\vec v_{j_2}\right\rangle
    \end{split}\\
      &=\frac{4\cdot 3}{3^2}\sum_{j=1}^{3}
        \left\langle\vec s^{(\ell_1)}\vec v_{j},\vec v_{j}\right\rangle\left\langle\vec s^{(\ell_2)}\vec v_{j},\vec v_{j}\right\rangle
        +\frac{4}{3^2}\sum_{j=1}^{3}\left\langle\vec s^{(\ell_1)}\vec v_{j},\vec s^{(l_2)}\vec v_{j}\right\rangle\\
      &=\frac{4}{3^2}\sum_{j=1}^{3}\left\langle\vec s^{(\ell_1)}\vec v_{j},\vec s^{(\ell_2)}\vec v_{j}\right\rangle=\frac{8}{9}\,\delta_{\ell_1,\ell_2}.
  \end{align*}
  Hence, the tangential vectors are orthogonal and normalized for $\beta_1=\frac{ 3}{2\sqrt 2}$.
\end{proof}

The tetrahedral symmetry $T$ also requires only one component, so we have the following result.

\begin{theorem}
  \label{theorem:isoT}
  Let $\vec u = \frac{1}{\sqrt 3}\begin{pmatrix}1 \\ 1\\1 \end{pmatrix}$,
  $\vec \alpha = 3$ and $\vec \beta =\frac{ 3}{2\sqrt 2} $. Then
  $\mathcal E_{\vec u,T}^{\vec \alpha, \vec \beta}$ is a \changed{locally} isometric embedding.
\end{theorem}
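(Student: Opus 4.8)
The plan is to establish local isometry through Lemma~\ref{lemma:iso}, i.e. to show that the three tangent vectors $\vec B_\ell := d\mathcal E_{\vec u,T}^{\alpha}([\vec I]_T)\,\vec s^{(\ell)}$, $\ell=1,2,3$, (without the weight $\beta$) are orthogonal with equal norm, and then to pick $\beta$ so that they become orthonormal. Following the structure of the proof of Theorem~\ref{theorem:isoO}, the first step is to determine the orbit of $\vec u$ under $T$. Since $\vec u=\frac1{\sqrt3}(1,1,1)^\top$ lies on a three-fold axis, its stabilizer in $T$ is cyclic of order $3$, so the orbit consists of the four tetrahedral vertex directions
\begin{equation*}
  \vec v_1=\tfrac1{\sqrt3}(1,1,1)^\top,\quad
  \vec v_2=\tfrac1{\sqrt3}(1,-1,-1)^\top,\quad
  \vec v_3=\tfrac1{\sqrt3}(-1,1,-1)^\top,\quad
  \vec v_4=\tfrac1{\sqrt3}(-1,-1,1)^\top,
\end{equation*}
the four sign patterns with an even number of minus signs, each attained by exactly three elements of $T$. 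These are unit vectors with $\langle\vec v_i,\vec v_j\rangle=1$ for $i=j$ and $\langle\vec v_i,\vec v_j\rangle=-\frac13$ for $i\neq j$. Note that, unlike in Theorem~\ref{theorem:isoO}, the odd order $\alpha=3$ forbids the identification $\otimes^\alpha\vec x=\otimes^\alpha(-\vec x)$, so all four vectors must be retained; after collapsing the three-fold multiplicity the symmetrized differential is an average $\frac14\sum_{j=1}^4$ over them.

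Next I would apply Lemma~\ref{lem:vecpro} to each $\vec S\vec u=\vec v_j$ and expand $\langle\vec B_{\ell_1},\vec B_{\ell_2}\rangle$ exactly as in the proof of Theorem~\ref{theorem:isoO}, now with $\alpha=3$ and the four orbit vectors, obtaining
\begin{align*}
  \langle\vec B_{\ell_1},\vec B_{\ell_2}\rangle
  &= \frac{\alpha(\alpha-1)}{4^2}\sum_{j_1,j_2=1}^4
     \langle\vec v_{j_1},\vec v_{j_2}\rangle^{\alpha-2}
     \langle\vec s^{(\ell_1)}\vec v_{j_1},\vec v_{j_2}\rangle
     \langle\vec s^{(\ell_2)}\vec v_{j_2},\vec v_{j_1}\rangle\\
  &\quad+\frac{\alpha}{4^2}\sum_{j_1,j_2=1}^4
     \langle\vec v_{j_1},\vec v_{j_2}\rangle^{\alpha-1}
     \langle\vec s^{(\ell_1)}\vec v_{j_1},\vec s^{(\ell_2)}\vec v_{j_2}\rangle .
\end{align*}
The two elementary facts $\langle\vec s^{(\ell)}\vec v_j,\vec v_j\rangle=0$ and $\langle\vec s^{(\ell)}\vec v_{j_1},\vec v_{j_2}\rangle=-\langle\vec s^{(\ell)}\vec v_{j_2},\vec v_{j_1}\rangle$, both immediate from skew-symmetry, eliminate the diagonal terms $j_1=j_2$ of the first sum and render its off-diagonal summand a product of inner products that is readily evaluated. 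The remaining task is then the purely mechanical evaluation of the finitely many quantities $\vec s^{(\ell)}\vec v_j$ and their pairings, which take only the values $0,\pm\frac23$.

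I expect the genuine content, as opposed to bookkeeping, to be the off-diagonal case $\ell_1\neq\ell_2$, where one must verify that \emph{both} double sums vanish. This is cleanest by exploiting the sign structure of the orbit: for instance $\langle\vec s^{(1)}\vec v_{j_1},\vec s^{(2)}\vec v_{j_2}\rangle$ factors, owing to the complementary zero patterns of $\vec s^{(1)}\vec v$ and $\vec s^{(2)}\vec v$, into a factor depending only on $j_1$ times one depending only on $j_2$, each of which sums to zero over the orbit; the corresponding mixed sum in the first double sum cancels pairwise by the same symmetry. By the threefold symmetry of the configuration under cyclic permutation of the coordinate axes, which permutes $\vec s^{(1)},\vec s^{(2)},\vec s^{(3)}$ and the orbit among themselves, it suffices to treat one off-diagonal pair. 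For the diagonal case $\ell_1=\ell_2$ a direct count gives the value $\frac49$ for each of the two double sums, so $\Norm{\vec B_\ell}^2=\frac49+\frac49=\frac89$, matching the octahedral value. Orthonormality of the weighted vectors then forces $\beta^2\cdot\frac89=1$, i.e. $\beta=\frac{3}{2\sqrt2}$, which is exactly the claimed weight; the subtle point throughout is that it is the cross terms, not merely the squared ones, that must cancel, and this is guaranteed precisely by the skew-symmetry relation above.
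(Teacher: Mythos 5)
Your proposal is correct and follows essentially the same route as the paper's proof: identify the four-element orbit of $\vec u$ under $T$ with pairwise inner products $-\tfrac13$, apply Lemma~\ref{lem:vecpro}, expand $\langle\vec B_{\ell_1},\vec B_{\ell_2}\rangle$ into the same two double sums, verify the cross terms vanish and the diagonal value is $\tfrac89$ (the paper also gets $\tfrac49$ from each double sum), and solve $\beta^2\cdot\tfrac89=1$. If anything, you are slightly more careful than the paper, which silently drops the second double sum in its cross-term computation, whereas you justify its vanishing via the factorization $\langle\vec s^{(1)}\vec v_{j_1},\vec s^{(2)}\vec v_{j_2}\rangle=v_{j_1,2}\,v_{j_2,1}$ together with $\sum_j v_{j,1}=\sum_j v_{j,2}=\sum_j v_{j,1}v_{j,2}=0$.
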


\begin{proof}
  The vectors $\vec R\vec u_1$ for $\vec R\in T$ are
  \begin{equation*}
    \vec v_1=\frac{1}{\sqrt 3}
    \begin{pmatrix}
      1\\1\\1
    \end{pmatrix},\quad
    \vec v_2=\frac{1}{\sqrt 3}
    \begin{pmatrix}
      -1\\-1\\1
    \end{pmatrix},\quad
    \vec v_3=\frac{1}{\sqrt 3}
    \begin{pmatrix}
      -1\\1\\-1
    \end{pmatrix},\quad
    \vec v_4=\frac{1}{\sqrt 3}
    \begin{pmatrix}
      1\\-1\\-1
    \end{pmatrix}
  \end{equation*}
  and satisfy $\langle \vec v_i,\vec v_j\rangle=-\frac 13 $ for
  $i\neq j$. By Lemma~\ref{lem:vecpro} we have
  \begin{equation*}
    d \mathcal E_{\vec u,T}^{\alpha}(\vec I) \vec s^{(\ell)}
    = \sum_{j=1}^4\sum_{i=0}^{2} \left(\otimes^{i} \vec v_j \right) \otimes \vec s^{(\ell)} \vec v_j
    \otimes \left(\otimes^{2-i} \vec v_j\right)
  \end{equation*}
  and hence, the scalar products of the basis vectors \changed{are}
  \begin{align*}
    \begin{split}
      \left\langle d E_{\vec u,T}^{\vec \alpha}\vec s^{(\ell_1)},
        dE_{\vec u,T}^{\vec \alpha}\vec s^{(\ell_2)}\right\rangle
      &= \frac{3\cdot 2}{4^2}\sum_{j_1=1}^{4}\sum_{j_2=1}^{4}
      \left\langle\vec v_{j_1},\vec v_{j_2}\right\rangle
      \left\langle\vec s^{(\ell_1)}\vec v_{j_1},\vec v_{j_2}\right\rangle
      \left\langle\vec s^{(\ell_2)}\vec v_{j_2},\vec v_{j_1}\right\rangle\\
      &\quad+\frac{3}{4^2}\sum_{j_1=1}^{4}\sum_{j_2=1}^{4}
      \left\langle\vec v_{j_1},\vec v_{j_2}\right\rangle^2
      \left\langle\vec s^{(\ell_1)}\vec v_{j_1},\vec s^{(\ell_2)}\vec v_{j_2}\right\rangle.
    \end{split}
  \end{align*}
  Using the symmetry of vectors $\vec v_j$ and $\vec s^{(l)}\vec v_j$
  \begin{align*}
    \vec s^{(1)}v_1
    &=\frac{1}{\sqrt 3}
      \begin{pmatrix}
        0\\-1\\1
      \end{pmatrix},\quad
    &\vec s^{(1)}\vec v_2
    &=\frac{1}{\sqrt 3}\begin{pmatrix}0\\-1\\-1\end{pmatrix},\quad
    &\vec s^{(1)}\vec v_3
    &=\frac{1}{\sqrt 3}\begin{pmatrix}0\\1\\1\end{pmatrix},\quad
    &\vec s^{(1)}\vec v_4
    &=\frac{1}{\sqrt 3}\begin{pmatrix}0\\1\\-1\end{pmatrix}\\
    \vec s^{(2)}\vec v_1
    &=\frac{1}{\sqrt 3}\begin{pmatrix}-1\\0\\1\end{pmatrix},\quad
    &\vec s^{(2)}\vec v_2&=\frac{1}{\sqrt
                      3}\begin{pmatrix}-1\\0\\-1\end{pmatrix},\quad
    &\vec s^{(2)}\vec v_3&=\frac{1}{\sqrt
                      3}\begin{pmatrix}1\\0\\-1\end{pmatrix},\quad
    &\vec s^{(2)}\vec v_4&=\frac{1}{\sqrt 3}\begin{pmatrix}1\\0\\1\end{pmatrix}\\
    \vec s^{(3)}\vec v_1&=\frac{1}{\sqrt 3}\begin{pmatrix}-1\\1\\0\end{pmatrix},\quad
    &\vec s^{(3)}\vec v_2
    &=\frac{1}{\sqrt 3}\begin{pmatrix}1\\-1\\0\end{pmatrix},\quad
    &\vec s^{(3)}\vec v_3&=\frac{1}{\sqrt 3}\begin{pmatrix}-1\\-1\\0\end{pmatrix},\quad
    &\vec s^{(3)}\vec v_4&=\frac{1}{\sqrt 3}\begin{pmatrix}1\\1\\0\end{pmatrix}
  \end{align*}
  it is sufficient to consider the scalar products for $l_1=1,l_2=2$ and
  $l_1=l_2=1$:
  \begin{align*}
    \left\langle d \mathcal E_{\vec u,T}^{\vec \alpha}\vec s^{(1)},d \mathcal
    E_{\vec u,T}^{\vec \alpha}\vec s^{(2)}\right\rangle
    &= \frac{3}{8}\sum_{\stackrel{j_1,j_2=1}{j_1\neq j_2}}^{4}-\frac 13\left\langle\vec s^{(1)}\vec v_{j_1},\vec v_{j_2}\right\rangle\left\langle\vec s^{(2)}\vec v_{j_2},\vec v_{j_1}\right\rangle=0,\\
    \begin{split}
      \left\langle d \mathcal E_{\vec u,T}^{\vec \alpha}\vec s^{(1)},d
        \mathcal E_{\vec u,T}^{\vec \alpha}\vec s^{(1)}\right\rangle
      &= \frac{3}{8}\sum_{\stackrel{j_1,j_2=1}{j_1\neq j_2}}^{4}-\frac 13\left\langle\vec s^{(1)}\vec v_{j_1},\vec v_{j_2}\right\rangle\left\langle\vec s^{(1)}\vec v_{j_2},\vec v_{j_1}\right\rangle\\
      &\quad+\frac{3}{8}\sum_{j_1=1}^{4}\frac{ 2}{6}+\frac{3}{4^2}\sum_{\stackrel{j_1,j_2=1}{j_1\neq j_2}}^{4}\frac 19\left\langle\vec s^{(1)}\vec v_{j_1},\vec s^{(1)}\vec v_{j_2}\right\rangle\end{split}\\
    &=\frac 38\cdot \frac 13\cdot \frac{64}{18}+\frac{3}{8}\cdot\frac{4\cdot 2}{6 }+\frac{3}{4^2}\cdot \frac 19\cdot\frac {-16}{6}=\frac 89.
  \end{align*}
  Hence, with $\beta_1=\frac{ 3}{2\sqrt 2}$ the proposed embedding is locally isometric.
\end{proof}

\changed{Finally, we consider icosahedral symmetry $Y$.
\begin{theorem}
  \label{theorem:isoY}
  Let $\vec u = \begin{pmatrix}0 \\ 1\\ \Phi \end{pmatrix}$, where $\Phi = \frac{1+\sqrt{5}}{2}$ is the golden ratio,
  $\vec \alpha = 10$ and $\vec \beta =\frac{75}{8\sqrt{95}} $. Then
  $\mathcal E_{\vec u,Y}^{\vec \alpha, \vec \beta}$ is a locally isometric embedding.
\end{theorem}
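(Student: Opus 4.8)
The plan is to verify the criterion of Lemma~\ref{lemma:iso}: it suffices to show that the three tangent vectors $\beta\vec B_\ell$, where $\vec B_\ell := d\mathcal E_{\vec u,Y}^{\alpha}([\vec I]_Y)\,\vec s^{(\ell)}$, $\ell=1,2,3$, are orthonormal. First I would record the orbit of $\vec u$ under $Y$: the vector $\vec u=(0,1,\Phi)^{\top}$ is a vertex of the regular icosahedron, so its orbit is the twelve vertices $(0,\pm1,\pm\Phi)^{\top}$ and their cyclic permutations, which I normalise to the unit sphere and denote $\vec v_1,\dots,\vec v_{12}$. Since $\alpha=10$ is even, $\otimes^{10}(-\vec v)=\otimes^{10}\vec v$, so antipodal vertices contribute identically. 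Exactly as in the proofs of Theorems~\ref{theorem:isoO} and \ref{theorem:isoT}, Lemma~\ref{lem:vecpro} gives each $\vec B_\ell$ explicitly, and the inner products are governed by the formula~\eqref{eq:scaproten2}, now summed over the orbit of $Y$ (each vertex being met $|Y|/12=5$ times), which produces an overall prefactor $\tfrac{1}{144}$ in front of the double sum $\sum_{j_1,j_2=1}^{12}$.

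Rather than checking the six off-diagonal products and three norms one at a time --- prohibitive at degree $\alpha=10$ over the $60$ elements of $Y$ --- I would use representation theory. Restricting the equivariance of Theorem~\ref{theorem:homo} to $\vec R=\vec S\in Y$ and differentiating the conjugated curve $t\mapsto[\vec S\exp(t\vec s)\vec S^{\top}]_Y=[\vec S\exp(t\vec s)]_Y$ at the identity shows that $d\mathcal E_{\vec u,Y}^{\alpha}([\vec I]_Y)$ intertwines the adjoint action $\mathrm{Ad}_{\vec S}$ of $Y$ on $T_{\vec I}\SO\cong\R^{3}$ with the orthogonal action $\otimes^{\alpha}\vec S$ on $\R^{3^\alpha}$. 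Hence the pulled-back symmetric form $g(\vec s,\vec s')=\langle d\mathcal E\,\vec s,\,d\mathcal E\,\vec s'\rangle$ is invariant under the standard three-dimensional representation of $Y$. This representation is irreducible over $\R$, so by Schur's lemma $g=c\,\langle\cdot,\cdot\rangle$ for a single constant $c>0$; the vectors $\vec B_1,\vec B_2,\vec B_3$ are therefore automatically orthogonal and of equal length, and only the scalar $c$ remains.

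To determine $c$ I would evaluate the trace $3c=\sum_{\ell}\norm{\vec B_\ell}^2$. Writing $\vec s^{(\ell)}\vec v=\vec e_\ell\times\vec v$, the two $\ell$-sums in~\eqref{eq:scaproten2} collapse by means of $\sum_\ell\langle\vec e_\ell\times\vec v_{j_1},\vec v_{j_2}\rangle\langle\vec e_\ell\times\vec v_{j_2},\vec v_{j_1}\rangle=-(1-\langle\vec v_{j_1},\vec v_{j_2}\rangle^{2})$ and $\sum_\ell\langle\vec e_\ell\times\vec v_{j_1},\vec e_\ell\times\vec v_{j_2}\rangle=2\langle\vec v_{j_1},\vec v_{j_2}\rangle$ to the closed form $3c=\tfrac{1}{144}\bigl(-\alpha(\alpha-1)S_{\alpha-2}+\alpha(\alpha+1)S_{\alpha}\bigr)$, where $S_m=\sum_{j_1,j_2}\langle\vec v_{j_1},\vec v_{j_2}\rangle^{m}$. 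The vertex inner products take only the values $1,\tfrac{1}{\sqrt5},-\tfrac{1}{\sqrt5},-1$ with multiplicities $1,5,5,1$, so $S_m=24+120\cdot5^{-m/2}$ for even $m$; substituting $\alpha=10$, $S_8=\tfrac{3024}{125}$ and $S_{10}=\tfrac{15024}{625}$ gives $c=\tfrac{1216}{1125}$. The normalisation $\beta=1/\sqrt c=\tfrac{15\sqrt{95}}{152}=\tfrac{75}{8\sqrt{95}}$ then makes $\beta\vec B_1,\beta\vec B_2,\beta\vec B_3$ orthonormal, which by Lemma~\ref{lemma:iso} is the claim.

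I expect the main obstacle to be twofold. Conceptually, the point is to realise that the combinatorial explosion of a direct $O$/$T$-style computation is avoided by the irreducibility of the icosahedral action, which collapses the whole question to a single scalar and makes transparent why a local isometry exists for $Y$ at all. Computationally, the delicate step is the evaluation of $S_8$ and $S_{10}$: since these involve cosine powers of degree $8$ and $10$ while the icosahedral vertices form only a spherical $5$-design, the sums cannot be replaced by the spherical integral $\int_{\S^{2}}(\vec\xi^{\top}\vec v)^{m}\dx\sigma=\tfrac{1}{m+1}$ used in Theorem~\ref{theorem:centered}, but must be taken from the actual vertex distribution --- and it is precisely this deviation from the design value that produces the non-generic factor $\beta$.
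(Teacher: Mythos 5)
Your proof is correct, and it takes a genuinely different---and in fact more complete---route than the paper's. The paper's own proof of Theorem~\ref{theorem:isoY} is only a sketch: it lists the normalized orbit vectors, notes $\abs{\langle\vec v_i,\vec v_j\rangle}=5^{-1/2}$ for $i\neq j$ and the antipodal identification $\otimes^{10}\vec x=\otimes^{10}(-\vec x)$, and then omits the actual tensor computations with the remark that they are ``similar to the case for the tetrahedral symmetry $T$, but with higher-dimensional tensors''. You replace that brute-force computation by a structural argument: the intertwining relation $d\mathcal E(\mathrm{Ad}_{\vec S}\,\vec s)=(\otimes^{\alpha}\vec S)\,d\mathcal E(\vec s)$ for $\vec S\in Y$ makes the pulled-back Gram form invariant under the standard action of $Y$ on $\R^{3}$, which is irreducible, so the Gram matrix of $\vec B_1,\vec B_2,\vec B_3$ is a positive multiple $c$ of the identity; this explains why the orthogonality and equality of norms, which in the paper's $O$ and $T$ proofs emerge from case-by-case cancellations, are automatic for $Y$, and it reduces the entire verification to one scalar. (One phrasing refinement: for a \emph{real} irreducible representation, plain Schur's lemma only identifies the commutant as a division algebra; since your form is symmetric, the clean argument is that the associated symmetric operator commutes with the action, so its eigenspaces are invariant subspaces, forcing a single eigenvalue.) Your trace evaluation also checks out: the cross-product identities, the prefactor $1/144$, the values $S_8=3024/125$ and $S_{10}=15024/625$, the resulting $c=1216/1125$, and the identity $1/\sqrt{c}=15\sqrt{95}/152=75/(8\sqrt{95})$ all hold, so you recover exactly the $\vec\beta$ stated in the theorem---a numerical confirmation the paper never supplies (and it also confirms that $\vec u$ must be read as normalized, a point the theorem statement leaves implicit). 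What each approach buys: the paper's direct method is elementary and uniform with its $O$/$T$ proofs, but it does not scale---at $\alpha=10$ the computations were heavy enough to be omitted entirely; your method supplies precisely that missing verification, would equally streamline Theorems~\ref{theorem:isoO} and~\ref{theorem:isoT}, and generalizes to any symmetry group acting irreducibly on $\R^{3}$, making transparent why a normalizing $\vec\beta$ exists at all.
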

\begin{proof}
This proof is similar to the proof of the tetrahedral symmetry $T$.
The vectors $\vec v_i\in\left\{\vec R\vec u_1\right\}_{\vec R\in Y}$ are
  \begin{equation*}
    \begin{pmatrix}
      0\\ \pm1\\ \pm \Phi
    \end{pmatrix},\quad
    \begin{pmatrix}
      \pm 1\\\pm \Phi\\0
    \end{pmatrix},\quad
    \begin{pmatrix}
      \pm \Phi\\0\\ \pm 1
    \end{pmatrix},
  \end{equation*}
  and satisfy $|\langle\vec v_i,\vec v_j\rangle|=5^{-1/2}$ for $i\neq
  j$. Since $\alpha$ is even, we have
  $\otimes^\alpha(\vec x)=\otimes^\alpha(-\vec x)$ and do not have to consider
  $-\vec x$, if we use $\vec x$. Hence, we only need six vectors $\vec
  v_i$. Again, with Lemma~\ref{lem:vecpro} we can calculate
  $\d \mathcal E_{\vec u,Y}^{\alpha,\beta}$ and the scalar products of
  these. We omit these calculations here, as they are similar to the case for
  the tetrahedral symmetry $T$, but with higher-dimensional tensors.
\end{proof}}

\subsection{Global Inequalities}
\label{sec:glob-almost-isom}

Although the embeddings found in the previous section are \changed{locally} isometric they
obviously do not preserve the metric globally. In this section we are
interested in inequalities of the form
\begin{equation}\label{eq:ineq}
  c_{\min} \, d([\vec O_1]_S,[\vec O_2]_S)
  \le d(\mathcal E_S([\vec O_1]_S),\mathcal E_S([\vec O_2]_S))
  \le c_{\max}\,d([\vec O_1]_S,[\vec O_2]_S)
\end{equation}
that relate the Euclidean distance in $\R^{3^{\vec \alpha}}$ and the geodesic distance \changed{from equation~\eqref{eq:distSO3S}}.

The situation is easiest for $\mathcal S = C_{1}$, i.e., we just look at
$\SO$. In this case the Euclidean distance in the embedding is directly
related to the geodesic distance on the manifold via
\begin{equation*}
  d(\mathcal E_{C_1}(\vec R_1),\mathcal E_{C_1}(\vec R_2))
  =\changed{\sqrt{2}}\,\sqrt{1-\cos(d(\vec R_1,\vec R_2))}.
\end{equation*}
and we have $c_{\text{min}}= \changed{\frac{2}{\pi}}$ and $c_{\text{max}} = 1$.

For higher symmetries there is no such one to one relationship. In order to
illustrate the dependency between the geodesic distance on the manifold and
the Euclidean distance in the embedding for higher symmetries we have
visualized the regions of suitable combinations in Fig.~\ref{fig:picsold}
and \ref{fig:picsnew}. While Fig.~\ref{fig:picsold} illustrates the
embeddings from \cite{ArJuSc18}, Fig.~\ref{fig:picsnew} visualizes the
\changed{locally} isometric embeddings from Table~\ref{tab:embnew}.

\begin{figure}[p]%
\begin{subfigure}[c]{0.32\textwidth}
\includegraphics[width=\textwidth]{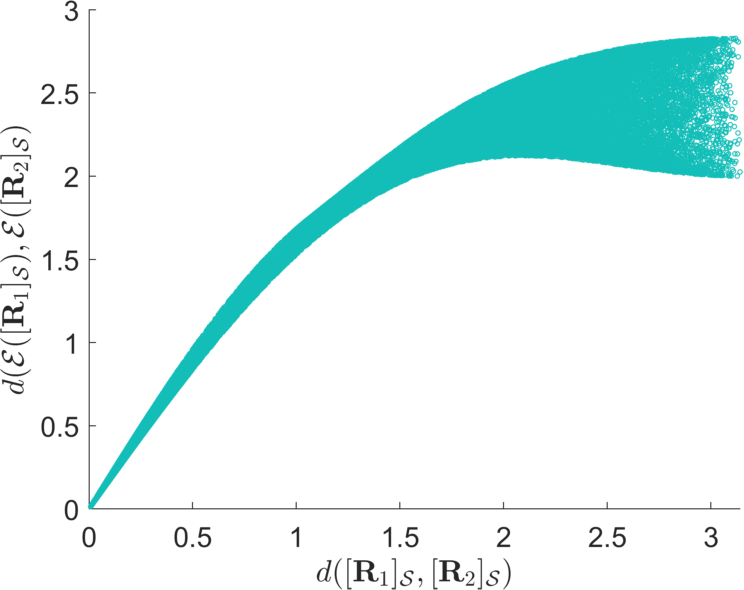}
\subcaption{Symmetry Group $C_2$}
\end{subfigure}
\begin{subfigure}[c]{0.32\textwidth}
\includegraphics[width=\textwidth]{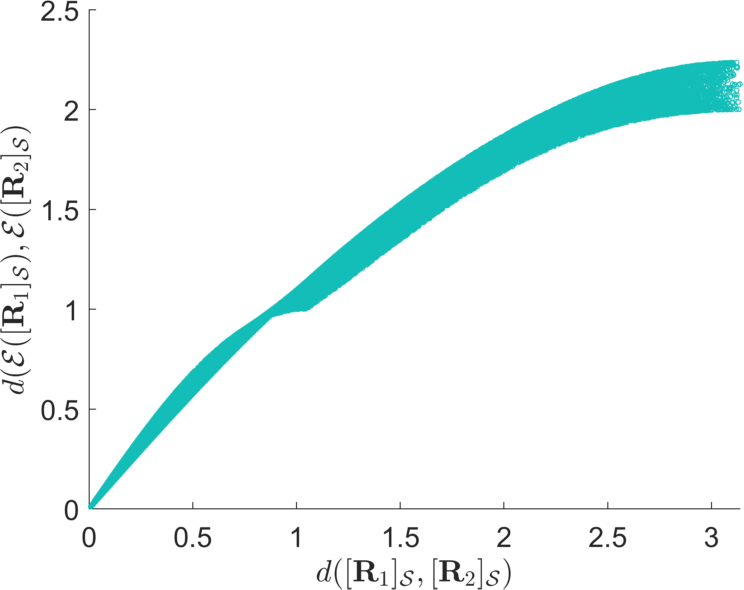}
\subcaption{Symmetry Group $C_3$}
\end{subfigure}
\begin{subfigure}[c]{0.33\textwidth}
\includegraphics[width=\textwidth]{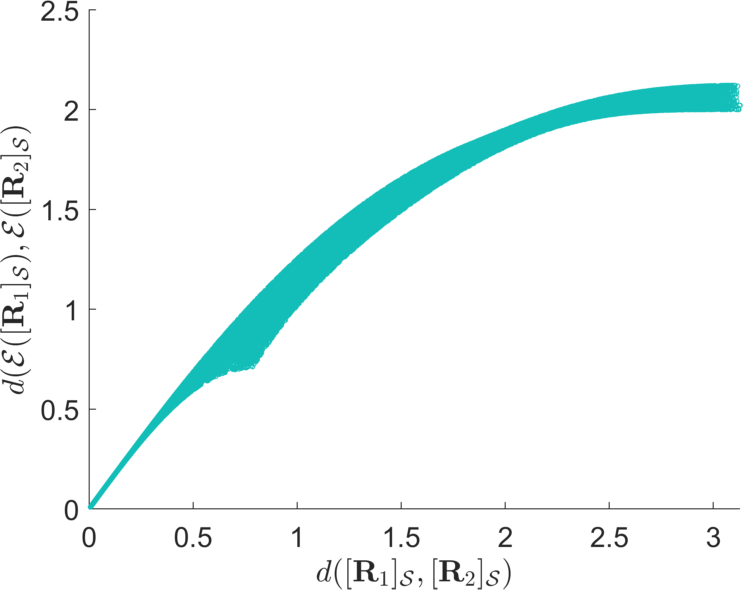}
\subcaption{Symmetry Group $C_4$}
\end{subfigure}
\begin{subfigure}[c]{0.32\textwidth}
\includegraphics[width=\textwidth]{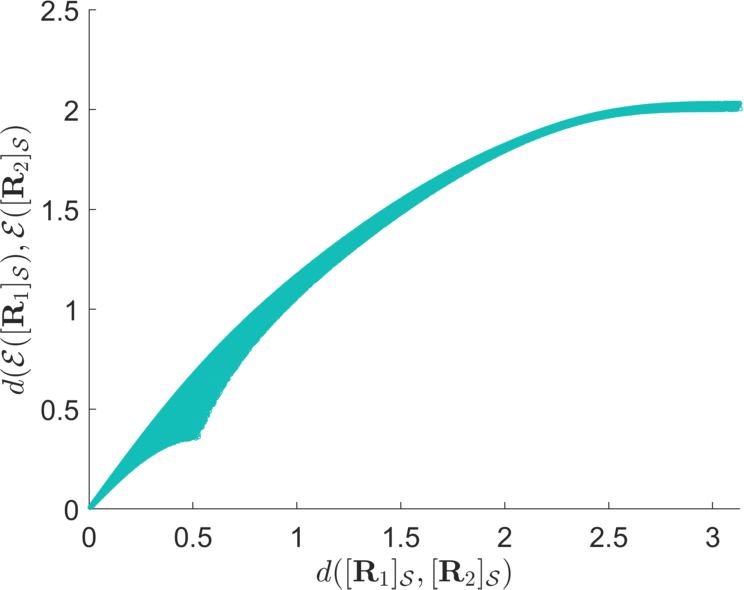}
\subcaption{Symmetry Group $C_6$}
\end{subfigure}
\begin{subfigure}[c]{0.32\textwidth}
\includegraphics[width=\textwidth]{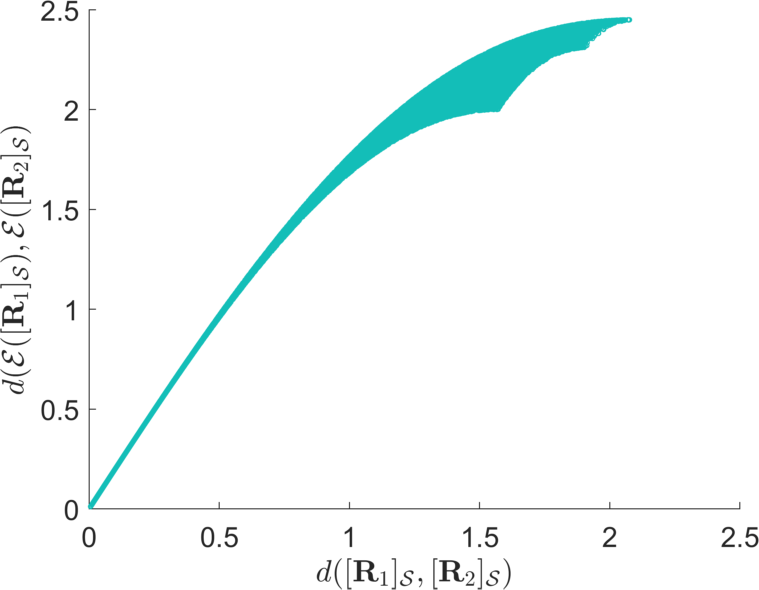}
\subcaption{Symmetry Group $D_2$}
\end{subfigure}
\begin{subfigure}[c]{0.32\textwidth}
\includegraphics[width=\textwidth]{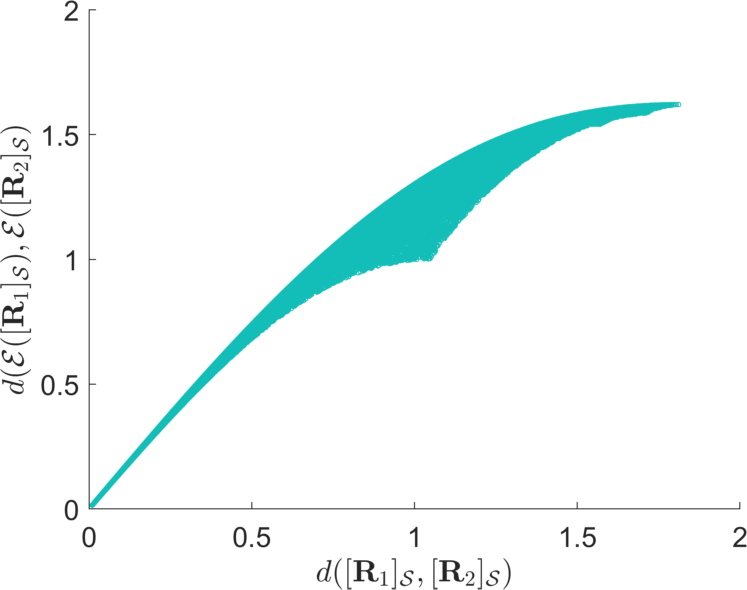}
\subcaption{Symmetry Group $D_3$}
\end{subfigure}
\begin{subfigure}[c]{0.32\textwidth}
\includegraphics[width=\textwidth]{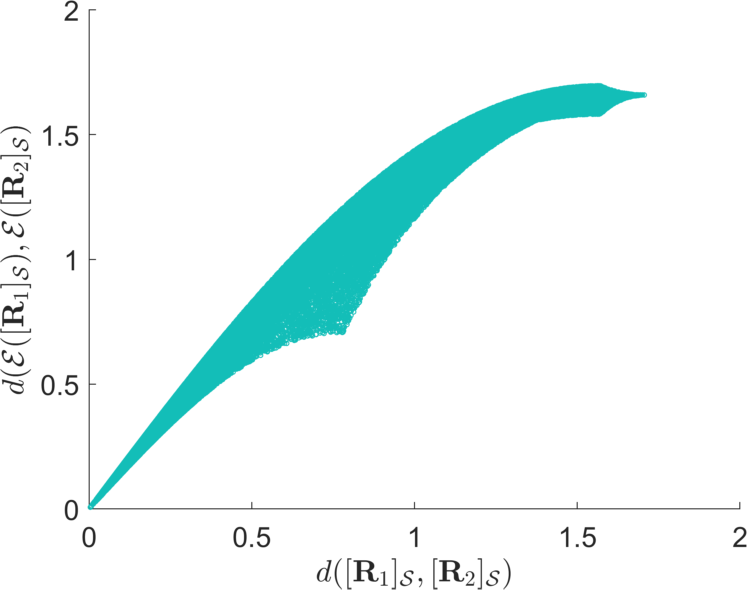}
\subcaption{Symmetry Group $D_4$}
\end{subfigure}
\begin{subfigure}[c]{0.32\textwidth}
\includegraphics[width=\textwidth]{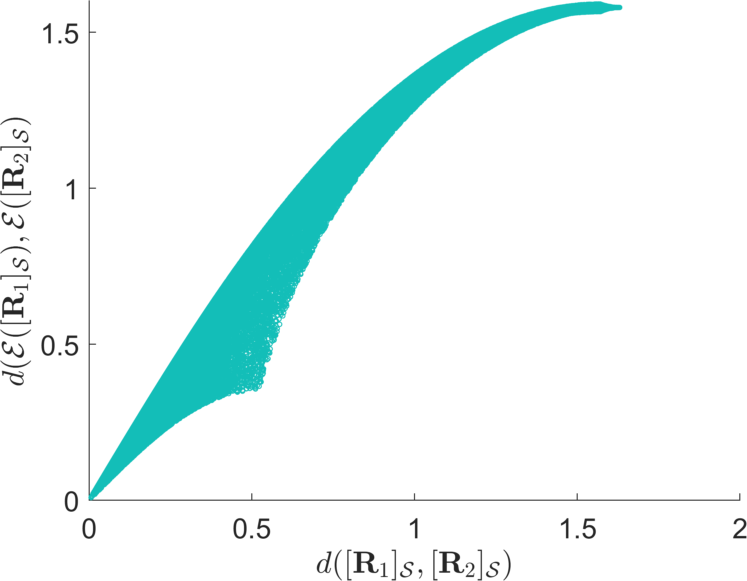}
\subcaption{Symmetry Group $D_6$}
\end{subfigure}
\caption{Relation between the geodesic distance on the manifold and the
  Euclidean distance in the embedding for the embeddings reported
  in~\cite{ArJuSc18}.}%
\label{fig:picsold}%
\end{figure}

\begin{figure}[p]%
\begin{subfigure}[c]{0.32\textwidth}
\includegraphics[width=\textwidth]{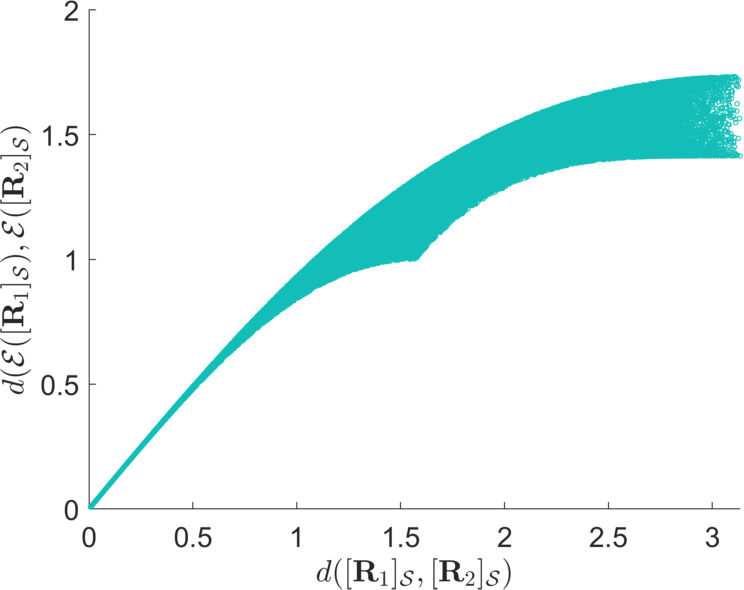}
\subcaption{Symmetry Group $C_2$}
\end{subfigure}
\begin{subfigure}[c]{0.32\textwidth}
\includegraphics[width=\textwidth]{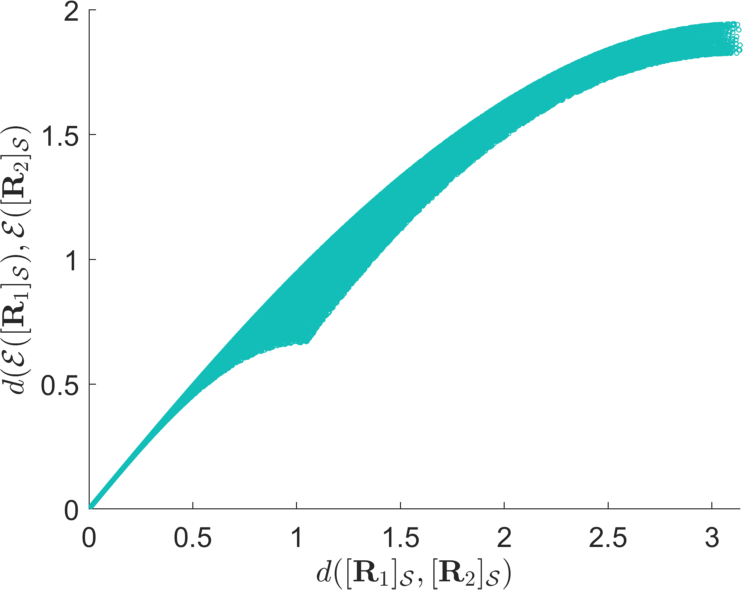}
\subcaption{Symmetry Group $C_3$}
\end{subfigure}
\begin{subfigure}[c]{0.33\textwidth}
\includegraphics[width=\textwidth]{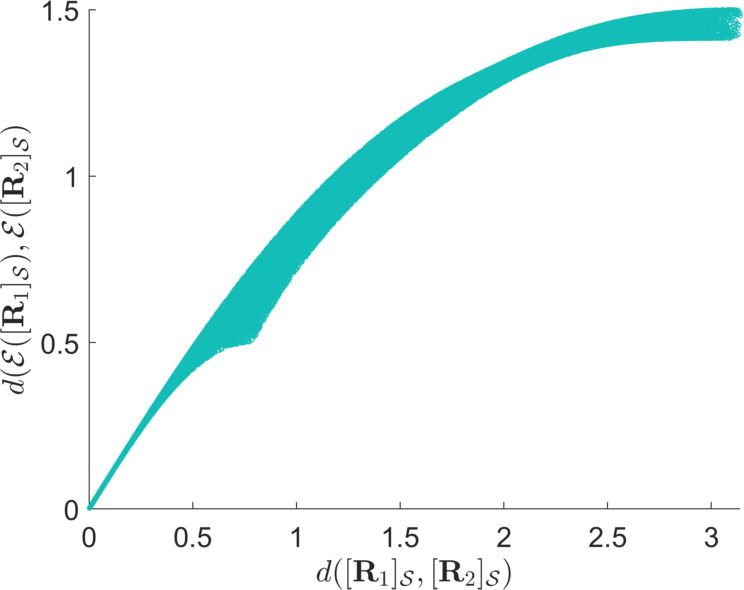}
\subcaption{Symmetry Group $C_4$}
\end{subfigure}
\begin{subfigure}[c]{0.32\textwidth}
\includegraphics[width=\textwidth]{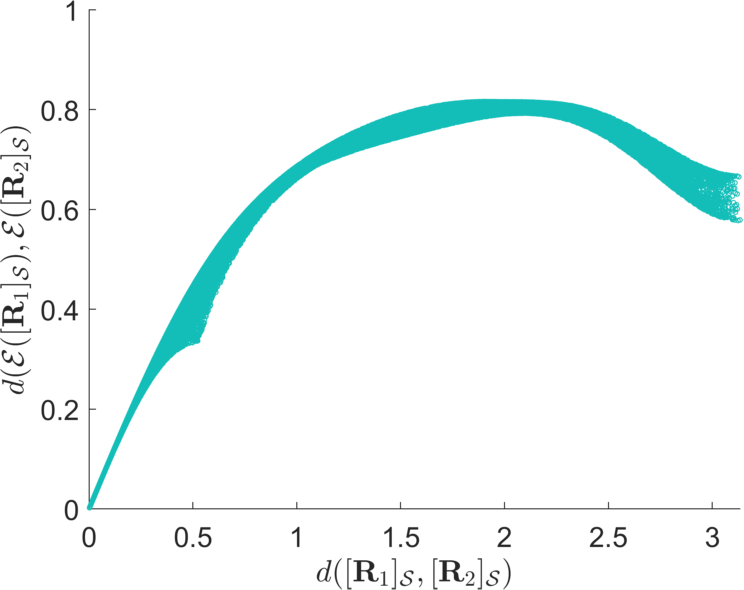}
\subcaption{Symmetry Group $C_6$}
\end{subfigure}
\begin{subfigure}[c]{0.32\textwidth}
\includegraphics[width=\textwidth]{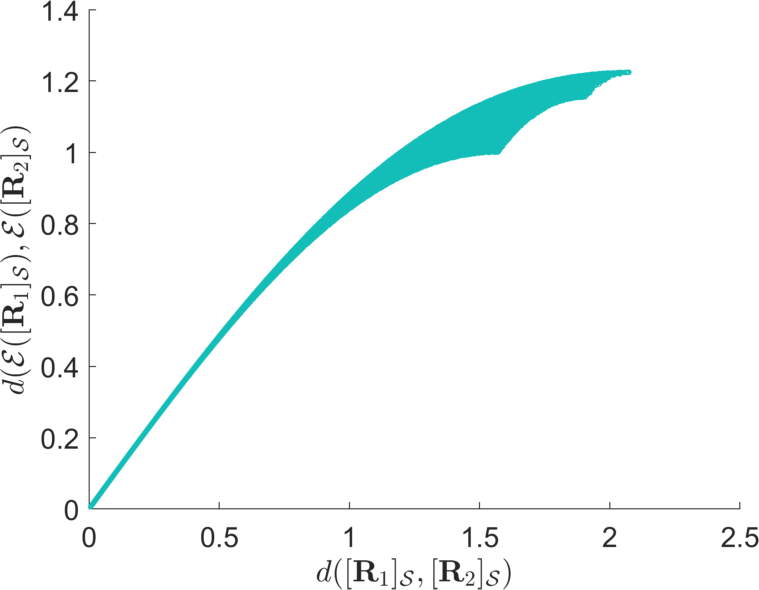}
\subcaption{Symmetry Group $D_2$}
\end{subfigure}
\begin{subfigure}[c]{0.32\textwidth}
\includegraphics[width=\textwidth]{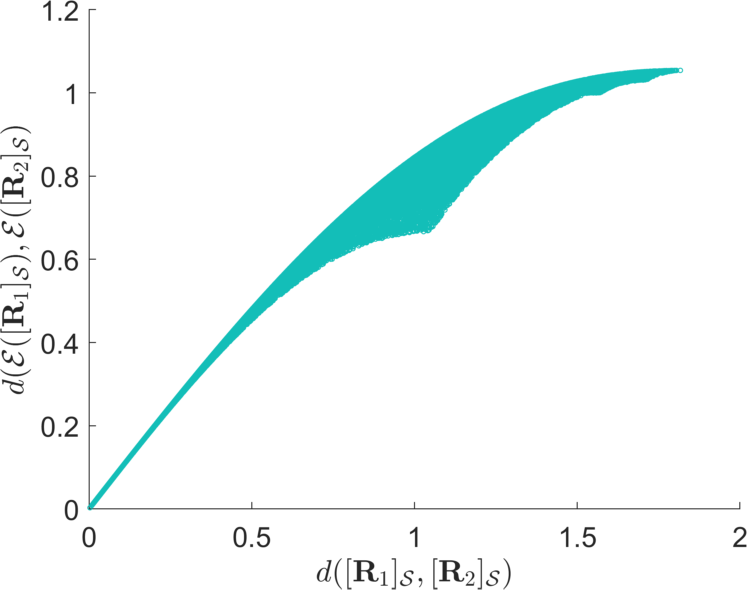}
\subcaption{Symmetry Group $D_3$}
\end{subfigure}
\begin{subfigure}[c]{0.32\textwidth}
\includegraphics[width=\textwidth]{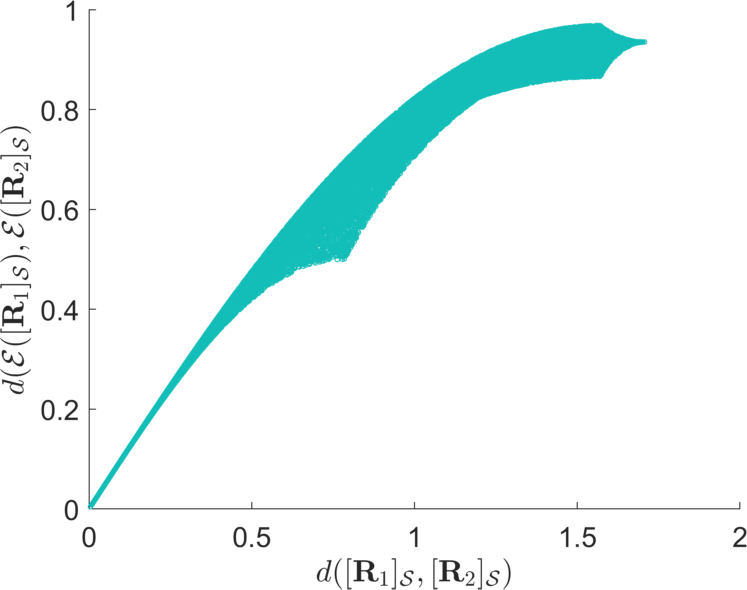}
\subcaption{Symmetry Group $D_4$}
\end{subfigure}
\begin{subfigure}[c]{0.32\textwidth}
\includegraphics[width=\textwidth]{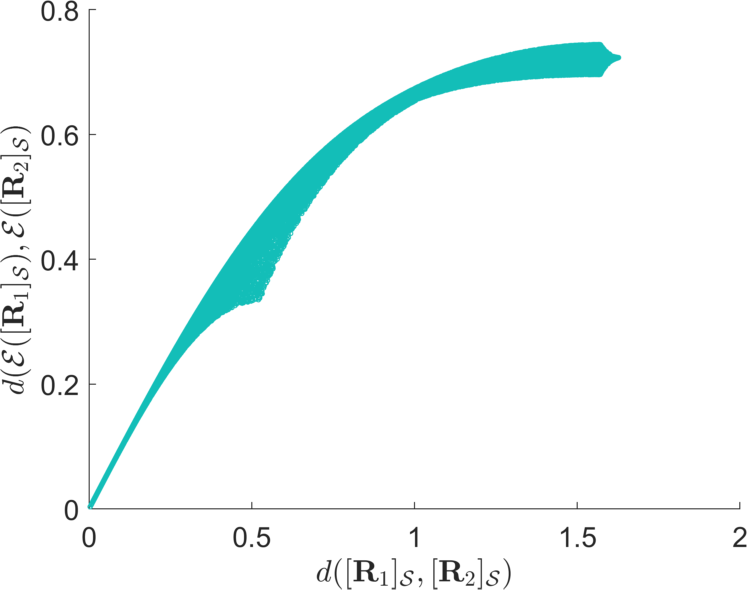}
\subcaption{Symmetry Group $D_6$}
\end{subfigure}
\begin{subfigure}[c]{0.32\textwidth}
\includegraphics[width=\textwidth]{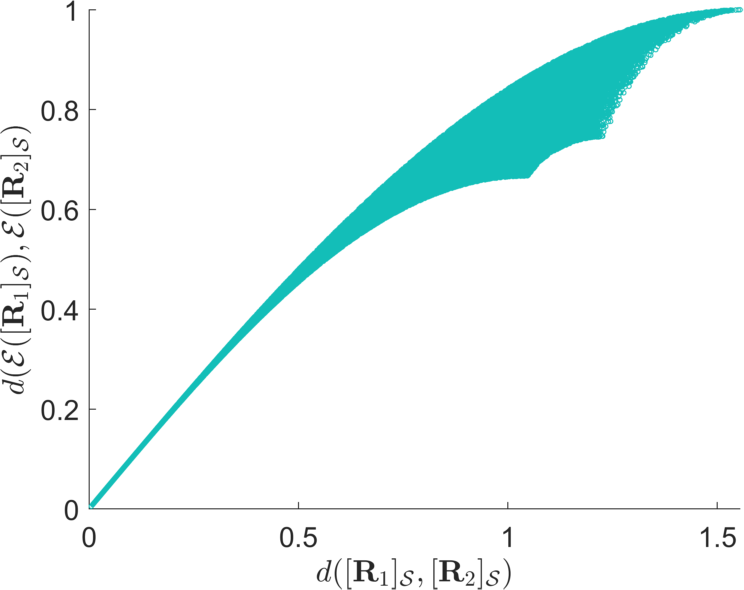}
\subcaption{Symmetry Group $T$}
\end{subfigure}
\begin{subfigure}[c]{0.32\textwidth}
\includegraphics[width=\textwidth]{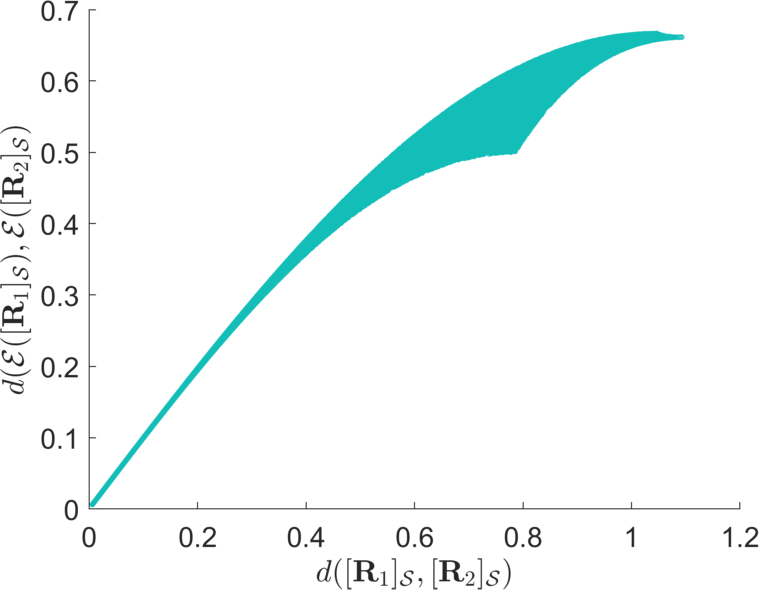}
\subcaption{Symmetry Group $O$}
\end{subfigure}
\caption{Relation between the geodesic distance on the manifold and the
  Euclidean distance in the embedding for the \changed{locally} isometric embeddings summarized
  in Table~\ref{tab:embnew}.}%
\label{fig:picsnew}%
\end{figure}

In Table~\ref{tab:const} the upper and lower bounds
 $c_{\text{min}}$ and $c_{\text{max}}$ are listed for \changed{locally} isometric embeddings
 from Table~\ref{tab:embnew} . We would like to stress that non-\changed{locally}-isometric
 embeddings might very well lead to better global bounds. Indeed,
 Table~\ref{tab:cmincmax} provides alternative coefficients for the embeddings
 $\mathcal E_{\vec u,\mathcal S}^{\vec \alpha, \vec \beta}$ which have better
 upper and lower bounds.

%  But we can get estimations of the form in
% equation~\eqref{eq:ineq}.

% The associated constants are listed in table~\ref{tab:const}. We get these constants either from special cases for the rotations, or by maximizing and minimizing the fraction
% \begin{equation*}
%  \frac{d(\mathcal E_S([\vec O_1]_S),\mathcal E_S([\vec O_2]_S))}{d([\vec O_1]_S,[\vec O_2]_S)}
% \end{equation*}
% for a big sample of rotations $\vec O_1,\vec O_2$

\begin{table}[ht]
  \caption{The constants in equation~\eqref{eq:ineq} for all crystallographic
    symmetry groups $\mathcal S$}
  \label{tab:const}
	\vspace{-0.2cm}
  \centering
  \begin{tabular}{cccc}
    \hline
    $\mathcal S$   & $c_{\min}$& $c_{\max}$&$c_{\max}/c_{\min}$\\
    \hline
    $C_2$ & $0.452$ 	& $1$ &	$2.21$\\
    $C_3$ & $0.583$	& $1$ &	$1.72$\\
    $C_4$	&	$0.452$	&	$1$	&	$2.21$\\
    $C_6$	&	$0.186$	&	$1$	&	$5.36$\\
    $D_2$ & $0.590$	&	$1$	& $1.70$\\
    $D_3$ & $0.581$	& $1$	& $1.72$\\
    $D_4$ &	$0.546$	&	$1$	& $1.83$\\
    $D_6$ &	$0.443$	&	$1$	& $2.26$\\
    $O$		&	$0.604$	&	$1$	& $1.66$\\
    $T$		&	$0.609$	&	$1$	& $1.64$\\
    \hline
  \end{tabular}

\end{table}

\begin{table}[ht]

\caption{
Factors for globally almost isometric embeddings for some symmetry
    groups $\mathcal S$}
		\label{tab:cmincmax}
\vspace{-0.2cm}

		\centering
  \begin{tabular}{ccc}
    \hline
     $\mathcal S$   & $\vec \beta$& $c_{\max}/c_{\min}$\\
    \hline
    $C_2$  	& $(1,0.5,0.5)$ &	$1.92$\\
    $C_3$ 	& $(1,0.67)$ 	&	$1.68$\\
		$C_4$		&	$(1,0.6)$		&	$1.91$\\
		$C_6$		&	$(1,0.93)$	& $2.15$\\
    $D_3$ 	& $(1,1.03)$	& $1.72$\\
		$D_4$ 	&	$(1,1.11)$	& $1.80$ \\
		$D_6$ 	&	$(1,1.65)$	& $1.95$\\
    \hline
  \end{tabular}
 % \caption{Factors for globally almost isometric embeddings for some symmetry
   % groups $\mathcal S$}

\end{table}

\section*{\changed{Acknowledgments}}
We thank the editor and the referees for their helpful comments and valuable suggestions. The second author acknowledges funding by Deutsche Forschungsgemeinschaft (DFG, German Research Foundation) - Project-ID 416228727- SFB 1410.

\appendix{}

\section{A binomial identity}
For the calculation of $\norm{M_\alpha}$ in Lemma~\ref{lem:skaMr} we need the following nice Lemma for binomial coefficients.

\begin{lemma}
  \label{lemma:sumBinom}
  Let $\alpha\in 2\N$ be an even integer. Then we have the equality
  \begin{equation*}
    \left(\alpha+1\right)\binom{\alpha}{\frac \alpha2}
    =\sum_{\substack{i_1,i_2,i_3=0\\i_1+i_2+i_3=\frac  \alpha2}}^{\frac  \alpha2}\binom{2i_1}{i_1}\binom{2i_2}{i_2}\binom{2i_3}{i_3}.
  \end{equation*}
\end{lemma}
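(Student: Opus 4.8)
The plan is to prove the identity by generating functions, recognizing the right-hand side as a single coefficient in a triple Cauchy product of the generating function of the central binomial coefficients. Writing $m = \tfrac{\alpha}{2}$, the claim to establish reads $(2m+1)\binom{2m}{m} = \sum_{i_1+i_2+i_3 = m}\binom{2i_1}{i_1}\binom{2i_2}{i_2}\binom{2i_3}{i_3}$, where the outer summation runs over all nonnegative triples with $i_1+i_2+i_3 = m$.

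First I would recall the classical generating function $\sum_{n=0}^{\infty} \binom{2n}{n} x^n = (1-4x)^{-1/2}$, valid for $\abs{x} < \tfrac14$. With this in hand, the sum on the right-hand side is precisely the coefficient of $x^m$ in the product of three such series, i.e.\ in $\bigl((1-4x)^{-1/2}\bigr)^3 = (1-4x)^{-3/2}$, by the Cauchy product formula for the multiplication of power series. This is the conceptual heart of the argument: the threefold convolution of central binomial coefficients is read off from a single closed-form generating function.

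Next I would extract this coefficient by the generalized binomial theorem, $(1-4x)^{-3/2} = \sum_{m=0}^{\infty} \binom{-3/2}{m}(-4)^m x^m$, and then simplify $\binom{-3/2}{m}(-4)^m$. Expanding the falling factorial in the generalized binomial coefficient yields the product of odd numbers $1 \cdot 3 \cdot 5 \cdots (2m+1)$ divided by $m!$, times appropriate powers of two; rewriting $1 \cdot 3 \cdots (2m+1) = \tfrac{(2m+1)!}{2^m\,m!}$ then collapses the expression to $\tfrac{(2m+1)!}{(m!)^2} = (2m+1)\binom{2m}{m}$, which is exactly the left-hand side after substituting back $m = \tfrac{\alpha}{2}$.

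The only place requiring care is the bookkeeping of powers of two and factorials in this last simplification, but it is entirely routine. I do not anticipate any genuine obstacle: once the triple sum is identified with the coefficient of $(1-4x)^{-3/2}$, the remainder is a direct computation.
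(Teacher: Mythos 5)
Your proof is correct and is essentially the paper's own argument in generating-function form: the paper rewrites $\binom{2i}{i}=(-1)^i 4^i\binom{-1/2}{i}$ and applies the Chu--Vandermonde identity to collapse the triple sum to $(-1)^{\alpha/2}4^{\alpha/2}\binom{-3/2}{\alpha/2}$, which is exactly your Cauchy-product extraction of the coefficient of $x^{\alpha/2}$ from $\bigl((1-4x)^{-1/2}\bigr)^3=(1-4x)^{-3/2}$, since Vandermonde's identity is the coefficient-level statement of that product. The concluding simplification of $(-4)^{\alpha/2}\binom{-3/2}{\alpha/2}$ to $(\alpha+1)\binom{\alpha}{\alpha/2}$ is the same computation in both proofs.
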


\begin{proof}
  With the general definition of the binomial coefficient
  $\binom{n}{k}=\frac{n(n-1)\cdots(n-(k-1))}{k!}$ for $k>0$ we obtain
  \begin{equation*}
    \binom{2n}{n}=(-1)^n\cdot 4^n\cdot\binom{-\frac 12}{n}.
  \end{equation*}
  With this equation and the Chu-Vandermonde-identity it follows that
  % \allowdisplaybreaks
  \begin{align*}
    \sum_{\substack{i_1,i_2,i_3=0\\i_1+i_2+i_3=\frac \alpha2}}^{\frac \alpha2}\binom{2i_1}{i_1}\binom{2i_2}{i_2}\binom{2i_3}{i_3}
    &=\sum_{\substack{i_1,i_2,i_3=0\\i_1+i_2+i_3=\frac \alpha2}}^{\frac \alpha2}\left(-1\right)^{i_1+i_2+i_3}\cdot 4^{i_1+i_2+i_3}\binom{-\frac 12}{i_1}\,\binom{-\frac 12}{i_2}\,\binom{-\frac 12}{i_3}\\
    &=\left(-1\right)^{\frac \alpha2}\cdot 4^{\frac \alpha2}\sum_{\substack{i_1,i_2,i_3=0\\i_1+i_2+i_3=\frac \alpha2}}^{\frac \alpha2}\binom{-\frac 12}{i_1}\,\binom{-\frac 12}{i_2}\,\binom{-\frac 12}{i_3}\\
    &=\left(-1\right)^{\frac \alpha2}\cdot 4^{\frac \alpha2}\binom{-\frac 32}{\frac \alpha2}\\
    &=\left(-1\right)^{\frac \alpha2}\cdot 4^{\frac \alpha2}\,\left(\frac{-\frac 32\left(-\frac 32 -1\right)\cdots\left(-\frac 32 -(\frac \alpha2 -1)\right)}{\left(\frac \alpha2\right)!}\right)\\
    &=4^{\frac \alpha2}\,\left(\frac{\frac 32\left(\frac 32 +1\right)\cdots\left(\frac 32 +(\frac \alpha2 -1)\right)}{\left(\frac \alpha2\right)!}\right)\\
    &=2^{\frac \alpha2}\,\left(\frac{3\cdot 5\cdot 7\cdots\left( \alpha+1\right)}{\left(\frac \alpha2\right)!}\right)\\
    &=\left( \alpha+1\right)\frac{2^{\frac \alpha2}\left(\frac \alpha2\right)!\cdot3\cdot 5\cdot 7\cdots\left( \alpha-1\right)}{\left(\frac \alpha2\right)!^2}\\
    &=\left( \alpha+1\right)\binom{ \alpha}{\frac  \alpha2}.
  \end{align*}
\end{proof}

\section{\changed{Some trigonometrical sums}}

Here we calculate some trigonometric sums of the proofs in section \ref{sec:metric-properties}. For the proof of Lemma~\ref{lem:orthCk} we need
\begin{align}\label{eq:A1}
  \sum_{j_1,j_{2}=0}^{k-1} \cos^{k-1} \tfrac{2\pi(j_{1}-j_{2})}{k}
  \sin \tfrac{2\pi j_{1}}{k} \cos \tfrac{2\pi j_{2}}{k}
  &= \frac12 \sum_{j_1,j_{2}=0}^{k-1} \cos^{k-1} \tfrac{2\pi(j_{1}-j_{2})}{k}
    \left(\sin \tfrac{2\pi (j_{1}- j_{2})}{k} + \sin \tfrac{2\pi (j_{1} +
    j_{2})}{k} \right)\notag\\
  &= \frac12 \sum_{j_1,j_{2}=0}^{k-1} \cos^{k-1} \tfrac{2\pi j_{1}}{k}
    \left(\sin \tfrac{2\pi j_{1}}{k} + \sin \tfrac{2\pi j_{2}}{k} \right)
    = 0.
\end{align}
\changed{For the proof of Lemma~\ref{lem:normBl} we need the following calculations. Using
\begin{align*}
  \sum_{j=0}^{k-1} e^{\frac{2\pi ij n}{k}}
  =
  \begin{cases}
    k& n\in \Z \\
    0& else
  \end{cases}
\end{align*}
we compute
\begin{align*}
  \sum_{j=0}^{k-1}\cos^k\left(\frac{2\pi j}{k}\right)
  &=\frac{1}{2^k}\sum_{j=0}^{k-1}\left(e^{\frac{2\pi i j}{k}}+e^{\frac{-2\pi i j}{k}}\right)^k
    =\frac{1}{2^k}\sum_{j=0}^{k-1}\sum_{\ell=0}^k \binom{k}{\ell}e^{\frac{4\pi i j\ell}{k}}\\
  &=\frac{1}{2^k}\sum_{\ell=0}^k\binom{k}{\ell}\sum_{j=0}^{k-1} e^{\frac{2\pi i j(2\,\ell)}{k}}
    =\begin{cases}
      \frac{k}{2^{k-1}} &\text{if } k \text{ odd}\\
      \frac{1}{2^k}\left(\binom{k}{\frac k2}\cdot k+2k\right) &\text{if } k \text{ even}
    \end{cases},\\
  \sum_{j=0}^{k-1}\cos^{k-2}\left(\frac{2\pi j}{k}\right)
  &=\frac{1}{2^{k-2}}\sum_{j=0}^{k-1}\left(e^{\frac{2\pi i j}{k}}+e^{\frac{-2\pi i j}{k}}\right)^{k-2}
    =\frac{1}{2^{k-2}}\sum_{j=0}^{k-1}\sum_{\ell=0}^{k-2} \binom{k-2}{\ell}e^{\frac{2\pi i j(2\,\ell+2)}{k}}\\
  &=\frac{1}{2^{k-2}}\sum_{\ell=0}^{k-2}\binom{k}{\ell}\sum_{j=0}^{k-1} e^{\frac{2\pi i j(2\,\ell+2)}{k}}
    =\begin{cases}
      0 &\text{if } k \text{ odd}\\
      \frac{k}{2^{k-2}}\binom{k-2}{\frac k2 -1} &\text{if } k \text{ even}
    \end{cases}.
\end{align*}}
We use this for the following calculations.
\begin{align}\label{eq:A2}
 &-\frac{(k-1)}{k}\sum_{j_1=0}^{k-1}\sum_{j_2=0}^{k-1}\cos^{k-2}\left(\frac{2\pi(j_1-j_2)}{k}\right)\sin^2\left(\frac{2\pi(j_1-j_2)}{k}\right)
+\frac{1}{k}\sum_{j_1=0}^{k-1}\sum_{j_2=0}^{k-1}\cos^{k-1}\left(\frac{2\pi(j_1-j_2)}{k}\right)\cos\left(\frac{2\pi(j_1-j_2)}{k}\right)\notag \\
&\quad= -(k-1)\sum_{j=0}^{k-1}\cos^{k-2}\left(\frac{2\pi j}{k}\right)\sin^2\left(\frac{2\pi j}{k}\right)+\sum_{j=0}^{k-1}\cos^{k}\left(\frac{2\pi j}{k}\right)\notag\\
&\quad=-(k-1)\sum_{j=0}^{k-1}\cos^{k-2}\left(\frac{2\pi j}{k}\right)+k\,\sum_{j=0}^{k-1}\cos^{k}\left(\frac{2\pi j}{k}\right)\notag\\
&\quad=\begin{cases}
\frac {k^2}{2^{k-1}} &\text{if } k \text{ odd}\\
-\frac{k(k-1)}{2^{k-2}}\binom{k-2}{\frac k2-1}+\frac {k^2}{2^{k}}\left(\binom{k}{\frac k2}+2\right) &\text{if } k \text{ even}
\end{cases}.
\end{align}
Also for the proof of Lemma~\ref{lem:normBl} we calculate the following.
\begin{align}\label{eq:A3}
&\frac{1}{k}\sum_{j_1=0}^{k-1}\sum_{j_2=0}^{k-1}\cos^{k-1}\left(\frac{2\pi(j_1-j_2)}{k}\right)\cos\left(\frac{2\pi j_1}{k}\right)\cos\left(\frac{2\pi j_2}{k}\right)\notag\\
&\quad=\frac{1}{k\, 2^{k+1}}\sum_{j_1=0}^{k-1}\sum_{j_2=0}^{k-1}\left(e^{\frac{2\pi i (j_1-j_2)}{k}}+e^{-\frac{2\pi i (j_1-j_2)}{k}}\right)^{k-1}\left(e^{\frac{2\pi i j_1}{k}}+e^{-\frac{2\pi i j_1}{k}}\right)\left(e^{\frac{2\pi i j_2}{k}}+e^{-\frac{2\pi i j_2}{k}}\right)\notag\\
&\quad=\frac{1}{k\, 2^{k+1}}\sum_{j_1=0}^{k-1}\sum_{j_2=0}^{k-1}\left(e^{\frac{2\pi i (j_1+j_2)}{k}}+e^{-\frac{2\pi i (j_1+j_2)}{k}}+e^{\frac{2\pi i (j_1-j_2)}{k}}+e^{\frac{2\pi i (j_2-j_1)}{k}}\right)\sum_{l=0}^{k-1}\binom{k-1}{l}e^{\frac{2\pi i (j_1-j_2)(2l+1)}{k}}\notag\\
&\quad=\frac{1}{k\, 2^{k+1}}\sum_{l=0}^{k-1}\binom{k-1}{l}\sum_{j_1=0}^{k-1}\sum_{j_2=0}^{k-1}\Big(e^{\frac{2\pi i ((j_1-j_2)(2l+1)+j_1+j_2)}{k}}+e^{\frac{2\pi i ((j_1-j_2)(2l+1)-j_1-j_2)}{k}}+e^{\frac{2\pi i (j_1-j_2)(2l+2)}{k}}+e^{\frac{2\pi i (j_1-j_2)(2l)}{k}}  \Big)\notag\\
&\quad=\frac{1}{k\, 2^{k+1}}\sum_{l=0}^{k-1}\binom{k-1}{l}\sum_{j_1=0}^{k-1}\sum_{j_2=0}^{k-1}\Big(e^{\frac{2\pi i (j_1(2l+2)-2lj_2)}{k}}+e^{\frac{2\pi i (2lj_1-j_2(2l+2))}{k}}+e^{\frac{2\pi i (j_1-j_2)(2l+2)}{k}}+e^{\frac{2\pi i (j_1-j_2)(2l)}{k}}  \Big)\notag\\
&\quad=\frac{k}{k\, 2^{k+1}}\sum_{l=0}^{k-1}\binom{k-1}{l}\sum_{j=0}^{k-1} e^{\frac{2\pi i j(2l+2)}{k}}+e^{\frac{2\pi i j(2l)}{k}}\notag \\
&\quad=\begin{cases}
\frac{k}{2^{k}}&\text{if } k \text{ is odd}\\
\frac{k}{2^{k+1}}\left(2+\binom{k-1}{\frac k2}+\binom{k-1}{\frac k2-1}\right)&\text{if } k \text{ is even}
\end{cases}.
\end{align}

\bibliographystyle{myjmva.bst}
\bibliography{references/references}

\begin{thebibliography}{26}
\expandafter\ifx\csname natexlab\endcsname\relax\def\natexlab#1{#1}\fi
\providecommand{\bibinfo}[2]{#2}
\ifx\xfnm\relax \def\xfnm[#1]{\unskip,\space#1}\fi
%Type = Article
\bibitem[{Adams et~al.(1993)Adams, Wright and Kunze}]{ASWK93}
\bibinfo{author}{B.~L. Adams}, \bibinfo{author}{S.~I. Wright},
  \bibinfo{author}{K.~Kunze}, \bibinfo{title}{Orientation imaging: {T}he
  emergence of a new microscopy}, \bibinfo{journal}{Metall Mater Trans A}
  \bibinfo{volume}{24} (\bibinfo{year}{1993}) \bibinfo{pages}{819--831}.
%Type = Article
\bibitem[{Arnold et~al.(2018)Arnold, Jupp and Schaeben}]{ArJuSc18}
\bibinfo{author}{R.~Arnold}, \bibinfo{author}{P.~Jupp},
  \bibinfo{author}{H.~Schaeben}, \bibinfo{title}{Statistics of ambiguous
  rotations}, \bibinfo{journal}{J. Mult. Anal.} \bibinfo{volume}{165}
  (\bibinfo{year}{2018}) \bibinfo{pages}{73--85}.
%Type = Article
\bibitem[{Bajaj et~al.(2013)Bajaj, Bauer, Bettadapura and
  Vollrath}]{BaBaBeVo13}
\bibinfo{author}{C.~Bajaj}, \bibinfo{author}{B.~Bauer},
  \bibinfo{author}{R.~Bettadapura}, \bibinfo{author}{A.~Vollrath},
  \bibinfo{title}{Nonuniform {F}ourier transforms for rigid-body and
  multidimensional rotational correlations}, \bibinfo{journal}{SIAM J. Sci.
  Comput.} \bibinfo{volume}{35} (\bibinfo{year}{2013})
  \bibinfo{pages}{B821--B845}.
%Type = Book
\bibitem[{Bunge(1982)}]{Bunge82}
\bibinfo{author}{H.~J. Bunge}, \bibinfo{title}{{T}exture {A}nalysis in
  {M}aterial {S}cience}, \bibinfo{publisher}{Butterworths},
  \bibinfo{year}{1982}.
%Type = Article
\bibitem[{Coifman and Lafon(2006)}]{CoLa06}
\bibinfo{author}{R.~R. Coifman}, \bibinfo{author}{S.~Lafon},
  \bibinfo{title}{Diffusion maps}, \bibinfo{journal}{Appl. Comput. Harmon.
  Anal.} \bibinfo{volume}{21} (\bibinfo{year}{2006}) \bibinfo{pages}{5 -- 30}.
  \bibinfo{note}{Special Issue: Diffusion Maps and Wavelets}.
%Type = Article
\bibitem[{Comon et~al.(2008)Comon, Golub, Lim and Mourrain}]{CoGo08}
\bibinfo{author}{P.~Comon}, \bibinfo{author}{G.~Golub},
  \bibinfo{author}{L.~Lim}, \bibinfo{author}{B.~Mourrain},
  \bibinfo{title}{Symmetric {T}ensors and {S}ymmetric {T}ensor {R}ank},
  \bibinfo{journal}{SIAM J. Matrix Anal. Appl.} \bibinfo{volume}{30}
  (\bibinfo{year}{2008}) \bibinfo{pages}{1254--1279}.
%Type = Article
\bibitem[{Cremers and Strekalovskiy(2013)}]{CrSt13}
\bibinfo{author}{D.~Cremers}, \bibinfo{author}{E.~Strekalovskiy},
  \bibinfo{title}{Total cyclic variation and generalizations},
  \bibinfo{journal}{J. Math. Imaging Vision} \bibinfo{volume}{47}
  (\bibinfo{year}{2013}) \bibinfo{pages}{258--277}.
%Type = Article
\bibitem[{Engler et~al.(1994)Engler, Gottstein, Pospiech and Jura}]{EnGoPoJu94}
\bibinfo{author}{O.~Engler}, \bibinfo{author}{G.~Gottstein},
  \bibinfo{author}{J.~Pospiech}, \bibinfo{author}{J.~Jura},
  \bibinfo{title}{Statistics, evaluation and representation of single grain
  orientation measurements}, \bibinfo{journal}{Mater. Sci. Forum}
  \bibinfo{volume}{157 –- 162} (\bibinfo{year}{1994})
  \bibinfo{pages}{259–274}.
%Type = Article
\bibitem[{Gawlik and Leok(2018)}]{Ga18}
\bibinfo{author}{E.~S. Gawlik}, \bibinfo{author}{M.~Leok},
  \bibinfo{title}{Embedding-based interpolation on the special orthogonal
  group}, \bibinfo{journal}{SIAM J. Scientific Computing} \bibinfo{volume}{40}
  (\bibinfo{year}{2018}) \bibinfo{pages}{A721--A746}.
%Type = Article
\bibitem[{Grohs et~al.(2017)Grohs, Sprecher and Yu}]{GrSpYu17}
\bibinfo{author}{P.~Grohs}, \bibinfo{author}{M.~Sprecher},
  \bibinfo{author}{T.~Yu}, \bibinfo{title}{Scattered manifold-valued data
  approximation}, \bibinfo{journal}{Numer. Math.} \bibinfo{volume}{135}
  (\bibinfo{year}{2017}) \bibinfo{pages}{987--1010}.
%Type = Article
\bibitem[{Hendriks(1990)}]{he90}
\bibinfo{author}{H.~Hendriks}, \bibinfo{title}{Nonparametric estimation of a
  probability density on a {R}iemannian manifold using {F}ourier expansion},
  \bibinfo{journal}{Ann. Statist.} \bibinfo{volume}{18} (\bibinfo{year}{1990})
  \bibinfo{pages}{832--849}.
%Type = Article
\bibitem[{Hielscher(2013)}]{Hi13a}
\bibinfo{author}{R.~Hielscher}, \bibinfo{title}{Kernel density estimation on
  the rotation group and its application to crystallographic texture analysis},
  \bibinfo{journal}{J. Multivariate Anal.} \bibinfo{volume}{119}
  (\bibinfo{year}{2013}) \bibinfo{pages}{119--143}.
%Type = Article
\bibitem[{Horn(1987)}]{Ho87}
\bibinfo{author}{B.~Horn}, \bibinfo{title}{Closed-form solution of absolute
  orientation using unit quaternions}, \bibinfo{journal}{J. Opt. Soc. Am}
  \bibinfo{volume}{4} (\bibinfo{year}{1987}) \bibinfo{pages}{629--642}.
%Type = Article
\bibitem[{Jupp(2005)}]{jupp2005}
\bibinfo{author}{P.~E. Jupp}, \bibinfo{title}{Sobolev tests of goodness of fit
  of distributions on compact riemannian manifolds}, \bibinfo{journal}{Ann.
  Statist.} \bibinfo{volume}{33} (\bibinfo{year}{2005})
  \bibinfo{pages}{2957--2966}.
%Type = Article
\bibitem[{Kim(1998)}]{kim98}
\bibinfo{author}{P.~T. Kim}, \bibinfo{title}{Deconvolution density estimation
  on {SO(N)}}, \bibinfo{journal}{Ann. Statist.} \bibinfo{volume}{26}
  (\bibinfo{year}{1998}) \bibinfo{pages}{1083--1102}.
%Type = Article
\bibitem[{Krakow et~al.(2017)Krakow, Bennett, Johnstone, Vukmanovic,
  Solano-Alvarez, Laine, Einsle, Midgley, Rae and Hielscher}]{krakow17}
\bibinfo{author}{R.~Krakow}, \bibinfo{author}{R.~J. Bennett},
  \bibinfo{author}{D.~N. Johnstone}, \bibinfo{author}{Z.~Vukmanovic},
  \bibinfo{author}{W.~Solano-Alvarez}, \bibinfo{author}{S.~J. Laine},
  \bibinfo{author}{J.~F. Einsle}, \bibinfo{author}{P.~A. Midgley},
  \bibinfo{author}{C.~M.~F. Rae}, \bibinfo{author}{R.~Hielscher},
  \bibinfo{title}{On three-dimensional misorientation spaces},
  \bibinfo{journal}{Proc. R. Soc. A} \bibinfo{volume}{473}
  (\bibinfo{year}{2017}).
%Type = Book
\bibitem[{Lee(2012)}]{Lee12}
\bibinfo{author}{J.~Lee}, \bibinfo{title}{Introduction to Smooth Manifolds},
  \bibinfo{publisher}{Springer-Verlag New York}, \bibinfo{edition}{second}
  edition, \bibinfo{year}{2012}.
%Type = Article
\bibitem[{Nash(1954)}]{Nash54}
\bibinfo{author}{J.~Nash}, \bibinfo{title}{C1 isometric imbeddings},
  \bibinfo{journal}{Annals of Mathematics} \bibinfo{volume}{60}
  (\bibinfo{year}{1954}) \bibinfo{pages}{383--396}.
%Type = Article
\bibitem[{Pelletier(2005)}]{Pe05}
\bibinfo{author}{B.~Pelletier}, \bibinfo{title}{Kernel density estimation on
  {R}iemannian manifolds}, \bibinfo{journal}{Statist. Probab. Lett.}
  \bibinfo{volume}{73} (\bibinfo{year}{2005}) \bibinfo{pages}{297--304}.
%Type = Article
\bibitem[{Rahman et~al.(2005)Rahman, Drori, Stodden, Donoho and
  Schröder}]{RaDrStDoSc05}
\bibinfo{author}{I.~Rahman}, \bibinfo{author}{I.~Drori},
  \bibinfo{author}{V.~Stodden}, \bibinfo{author}{D.~Donoho},
  \bibinfo{author}{P.~Schröder}, \bibinfo{title}{Multiscale representations
  for manifold-valued data}, \bibinfo{journal}{Multiscale Model. Sim.}
  \bibinfo{volume}{4} (\bibinfo{year}{2005}) \bibinfo{pages}{1201–1232}.
%Type = Inproceedings
\bibitem[{Rosman et~al.(2012)Rosman, Bronstein, Bronstein, Wolf and
  Kimmel}]{RoBrBr12}
\bibinfo{author}{G.~Rosman}, \bibinfo{author}{M.~M. Bronstein},
  \bibinfo{author}{A.~M. Bronstein}, \bibinfo{author}{A.~Wolf},
  \bibinfo{author}{R.~Kimmel}, \bibinfo{title}{Group-valued regularization
  framework for motion segmentation of dynamic non-rigid shapes}, in:
  \bibinfo{editor}{A.~M. Bruckstein}, \bibinfo{editor}{B.~M. ter Haar~Romeny},
  \bibinfo{editor}{A.~M. Bronstein}, \bibinfo{editor}{M.~M. Bronstein} (Eds.),
  \bibinfo{booktitle}{Scale Space and Variational Methods in Computer Vision},
  \bibinfo{publisher}{Springer Berlin Heidelberg}, \bibinfo{address}{Berlin,
  Heidelberg}, \bibinfo{year}{2012}, pp. \bibinfo{pages}{725--736}.
%Type = Article
\bibitem[{Rosman et~al.(2014)Rosman, Tai, Kimmel and Bruckstein}]{RoTa14}
\bibinfo{author}{G.~Rosman}, \bibinfo{author}{X.-C. Tai},
  \bibinfo{author}{R.~Kimmel}, \bibinfo{author}{A.~Bruckstein},
  \bibinfo{title}{Augmented-{L}agrangian regularization of matrix-valued maps},
  \bibinfo{journal}{Methods Appl. Anal.} \bibinfo{volume}{21}
  (\bibinfo{year}{2014}) \bibinfo{pages}{121--138}.
%Type = Book
\bibitem[{Udri{\c{s}}te(1994)}]{Ud94}
\bibinfo{author}{C.~Udri{\c{s}}te}, \bibinfo{title}{Convex functions and
  optimization methods on {R}iemannian manifolds}, volume \bibinfo{volume}{297}
  of \textit{\bibinfo{series}{Mathematics and its Applications}},
  \bibinfo{publisher}{Kluwer Academic Publishers Group},
  \bibinfo{address}{Dordrecht}, \bibinfo{year}{1994}.
%Type = Article
\bibitem[{Verma(2013)}]{Ve13}
\bibinfo{author}{N.~Verma}, \bibinfo{title}{Distance preserving embeddings for
  general n-dimensional manifolds}, \bibinfo{journal}{J. Mach. Learn. Res.}
  \bibinfo{volume}{14} (\bibinfo{year}{2013}) \bibinfo{pages}{2415–2448}.
%Type = Article
\bibitem[{Xie and Yu(2007)}]{XieYu}
\bibinfo{author}{G.~Xie}, \bibinfo{author}{T.~P.-Y. Yu},
  \bibinfo{title}{Smoothness equivalence properties of manifold-valued data
  subdivision schemes based on the projection approach}, \bibinfo{journal}{SIAM
  J. Numer. Anal.} \bibinfo{volume}{45} (\bibinfo{year}{2007})
  \bibinfo{pages}{1200--1225}.
%Type = Article
\bibitem[{Zefran et~al.(1998)Zefran, Kumar and Croke}]{ZeKuCr98}
\bibinfo{author}{M.~Zefran}, \bibinfo{author}{V.~Kumar}, \bibinfo{author}{C.~B.
  Croke}, \bibinfo{title}{On the generatation of smooth three-dimensional rigid
  body motions}, \bibinfo{journal}{IEEE Trans. Robot. Autom.}
  \bibinfo{volume}{14} (\bibinfo{year}{1998}) \bibinfo{pages}{576--589}.

\end{thebibliography}

\end{document}